\theoremstyle{definition}
\newtheorem{theorem}{Theorem}
\newtheorem{lemma}[theorem]{Lemma}
\newtheorem{proposition}[theorem]{Proposition}
\newtheorem{example}[theorem]{Example}
\newtheorem{definition}[theorem]{Definition}
\newtheorem{remark}[theorem]{Remark}
\newcommand{\A}{\mathcal{A}}
\newcommand{\B}{\mathcal{B}}
\renewcommand{\O}{\mathcal{O}}
\newcommand{\Q}{\mathcal{Q}}
\newcommand{\<}{\langle}
\renewcommand{\>}{\rangle}
\newcommand{\vars}{\text{\upshape{vars}}}
\newcommand{\consts}{\text{\upshape{consts}}}
\newcommand{\len}{\text{len}}
\newcommand{\hpsi}{\widehat{\psi}}
\newcommand{\hchi}{\widehat{\chi}}
\newcommand{\subst}[1]{\bigl[#1\bigr]}
\newcommand{\semequiv}{\mathrel{|}\joinrel\Relbar\joinrel\mathrel{|}}
\newcommand{\fU}{\mathfrak{U}}
\newcommand{\fa}{\mathfrak{a}}
\newcommand{\fb}{\mathfrak{b}}
\newcommand{\fc}{\mathfrak{c}}
\newcommand{\fd}{\mathfrak{d}}
\newcommand{\vu}{\vec{\mathbf{u}}}
\newcommand{\vv}{\vec{\mathbf{v}}}
\newcommand{\vx}{\vec{\mathbf{x}}}
\newcommand{\vy}{\vec{\mathbf{y}}}
\newcommand{\vz}{\vec{\mathbf{z}}}
\newcommand{\va}{\vec{\fa}}
\newcommand{\vb}{\vec{\fb}}
\newcommand{\vc}{\vec{\fc}}
\newcommand{\vd}{\vec{\fd}}
\newcommand{\fP}{\mathfrak{P}}
\newcommand{\hSF}{h_\text{SF}}
\newcommand{\mCNF}{{m_\text{CNF}}}
\newcommand{\mDNF}{{m_\text{DNF}}}
\newcommand{\kappaCNF}{{\kappa_\text{CNF}}}
\newcommand{\kappaDNF}{{\kappa_\text{DNF}}}
\newcommand{\twoup}[2]{{2^{\uparrow #1}(#2)}}
\newcommand{\Mapsto}{{\mathop{\mapsto}}}
\begin{document}

\title{Deciding First-Order Satisfiability when\\ Universal and Existential Variables are Separated}

\author{
	\begin{tabular}{c}
		Thomas Sturm \\
		\small\textit{
			CNRS\thanks{CNRS, LORIA, UMR 7503, 54506 Vandoeuvre-lès-Nancy, France}, 
			Inria\thanks{Inria, 54600 Villers-lès-Nancy, France}, and 
			University of Lorraine\thanks{University of Lorraine, LORIA, UMR 7503, 54506 Vandoeuvre-lès-Nancy, France}, Nancy, France}\\
	\end{tabular}
	\and
	\begin{tabular}{c}
		Marco Voigt\\
		\small\textit{Max Planck Institute for Informatics, Saarbr\"ucken, Germany, and}\\
		\small\textit{Saarbr\"ucken Graduate School of Computer Science, Saarland University}
	\end{tabular}
	\and
	\begin{tabular}{c}
		Christoph Weidenbach \\
		\small\textit{Max Planck Institute for Informatics, Saarbr\"ucken, Germany}
	\end{tabular}
}	
\date{}
\maketitle

\begin{abstract}
We introduce a new decidable fragment of first-order logic with equality, which strictly generalizes 
two already well-known ones---the Bernays--Sch\"onfinkel--Ramsey (BSR) Fragment and the Mo\-nadic Fragment.
The defining principle is the syntactic separation of universally quantified variables from existentially quantified 
ones at the level of atoms. Thus, our classification neither rests on restrictions on quantifier prefixes 
(as in the BSR case) nor on restrictions on the arity of predicate symbols (as in the monadic case).
We demonstrate that the new fragment exhibits the finite model property and
derive a non-elementary  upper bound on the computing time required for deciding 
satisfiability in the new fragment. 
For the subfragment of prenex sentences with the quantifier prefix $\exists^* \forall^* \exists^*$ the satisfiability problem 
is shown to be complete for NEXPTIME.
Finally, we discuss how automated reasoning procedures can take advantage of our results.
\end{abstract}







\section{Introduction}

The question of whether satisfiability of first-order sentences of a certain syntactic form is decidable or not has a long 
tradition in computational logic. 
Over the decades different dimensions have been introduced along which decidable first-order fragments can be separated from 
undecidable ones. L\"owenheim's pioneering work \cite{Lowenheim1915} shows that confinement 
to unary predicate symbols (i.e.\ to the \emph{Relational Monadic Fragment}) leads to decidability, while 
the set of sentences in which binary predicate symbols may be used without further restriction yields 
a \emph{reduction class} for first-order logic. Another dimension is the confinement to certain 
quantifier prefixes for formulas in \emph{prenex normal form}, e.g.\ to $\exists^* \forall^*$---the 
\emph{Bernays--Sch\"onfinkel (BS) Fragment} \cite{Bernays1928}---consisting of decidable sentences. 
The inclusion or exclusion of the distinguished equality predicate yields yet another possibility. 
L\"owenheim had already considered it for the Relational Monadic Fragment, and \citet{Ramsey1930} extended the 
BS Fragment in this way, leading to the \emph{Bernays--Sch\"onfinkel--Ramsey (BSR) Fragment}, which also remains decidable. 
Already in the nineties, the classification along those dimensions was considered to be mainly completed, 
cf.\ \cite{Dreben1979, Borger1997}. 

In the present paper we introduce another dimension of classification: syntactic separation of 
existentially quantified variables from universally quantified ones at the level of atoms. 
We disallow atoms 
$P(\ldots, x, \ldots, y, \ldots)$ that contain both  a universally quantified variable 
$x$ and an existentially quantified variable $y$ at the same time. 
We call the class of first-order sentences built from atoms with thus separated variables the \emph{Separated Fragment (SF)}.
The Separated Fragment is decidable. It properly 
includes both the BSR Fragment and the Relational Monadic Fragment (monadic formulas without 
non-constant function symbols) without equality and is to the best of our knowledge 
the first known decidable fragment enjoying this property.
Consequently, separation of differently quantified variables constitutes a unifying principle 
which underlies both the BSR Fragment and the Relational Monadic Fragment.
More concretely, our contributions are:
\begin{enumerate}[label=(\roman{*}), ref=(\roman{*})]
	\item We introduce the new decidable first-order fragment SF (Sections~\ref{section:Separation}, \ref{section:SeparatedFragment}). 
		Its sentences can be transformed into equivalent BSR sentences and thus enjoy the finite model property. 
		Our current upper bound on the size of small models is non-elementary in the length of the formula
		(Theorem~\ref{theorem:SFGeneralComplexity}).
	\item For SF formulas in which blocks of existentially quantified variables are pairwise disjoint the size of small models is at most double exponential (Theorem~\ref{theorem:SFMutExclusive}).
	\item For SF sentences of the form $\exists^*\forall^*\exists^* \varphi$, where $\varphi$ is quantifier-free, the size of small models is single exponential in the length of the formula. Therefore, satisfiability is NEXPTIME-complete in this setting (Theorem~\ref{theorem:Complexity}).
	\item Sentences from the Monadic Fragment with unary function symbols or from the Relational Monadic Fragment with equality can be translated into SF sentences. The translation preserves satisfiability and increases the length of the formulas only polynomially (Section~\ref{section:FurtherExtensionsSeparated}). 
	\item We provide a methodology for the translation of SF sentences of the form $\exists^*\forall^*\exists^* \varphi$ into the BS(R) fragment, which facilitates automated reasoning (Section~\ref{section:AutomatedReasoning}).
\end{enumerate} 
The paper concludes with a discussion of related and future work in Section~\ref{section:furthrelwork}.

The present paper is an extended version of \cite{Voigt2016}.







\section{Preliminaries}\label{section:Preliminaries}

We consider first-order logic formulas. The underlying signature shall not be mentioned explicitly, but will become clear from the current context. For the distinguished \emph{equality} predicate (whose semantics is fixed to be the identity relation) we use $\approx$. If not explicitly excluded, we allow equality in our investigations.
As usual, we interpret a formula $\varphi$ with respect to given structures. A \emph{structure} $\A$ consists of a nonempty \emph{universe} $\fU_\A$ (also: \emph{domain}) and interpretations $f^\A$ and $P^\A$ of all considered function and predicate symbols, respectively, in the usual way. 
Given a formula $\varphi$, a structure $\A$, and a variable assignment $\beta$, we write $\A, \beta \models \varphi$ if $\varphi$ evaluates to \emph{true} under $\A$ and $\beta$. 
We write $\A \models \varphi$ if $\A, \beta \models \varphi$ holds for every $\beta$.
The symbol $\semequiv$ denotes \emph{semantic equivalence} of formulas, i.e.\ $\varphi \semequiv \psi$ holds whenever for every structure $\A$ and every variable assignment $\beta$ we have $\A,\beta \models \varphi$ if and only if $\A,\beta \models \psi$. 
We call two sentences $\varphi$ and $\psi$ \emph{equisatisfiable} if $\varphi$ has a model if and only if $\psi$ has one.

We use $\varphi(x_1, \ldots, x_m)$ to denote a formula $\varphi$ whose free variables form a subset of $\{x_1, \ldots, x_m\}$.
In all formulas we tacitly assume that no variable occurs freely and bound at the same time and that all quantifiers bind distinct variables.
For convenience, we sometimes identify tuples $\vx$ of variables with the set containing all the variables that occur in $\vx$.

A structure $\A$ is a \emph{substructure} of a structure $\B$ (over the same signature) if (1) $\fU_\A \subseteq \fU_\B$, (2) $c^\A = c^\B$ for every constant symbol $c$, (3) $P^\A = P^\B \cap \fU_\A^m$ for every $m$-ary predicate symbol $P$, and (4) $f^\A(\fa_1, \ldots, \fa_m) = f^\B(\fa_1, \ldots, \fa_m)$ for every $m$-ary function symbol $f$ and every $m$-tuple $\<\fa_1, \ldots, \fa_m\> \in \fU_\A^m$.
The following is a standard lemma, see, e.g., \cite{Ebbinghaus1994} for a proof.

\begin{lemma}[Substructure Lemma]
	Let $\varphi$ be a first-order sentence in prenex normal form without existential quantifiers and let $\A$ be a substructure of $\B$.
	$\B \models \varphi$ entails $\A \models \varphi$.
\end{lemma}

We denote \emph{substitution} by $\varphi\subst{x/t}$, where every free occurrence of $x$ in $\varphi$ is to be substituted by the term $t$. For \emph{simultaneous substitution} of pairwise distinct variables $x_1, \ldots, x_n$ with $t_1, \ldots, t_n$, respectively, we use the notation $\varphi\subst{x_1/t_1, \ldots, x_n/t_n}$. For example, $P(x,y)\subst{x/f(y), y/g(x)}$ results in $P(f(y), g(x))$. We also write $\subst{\vx/\vec{t}\:}$ to abbreviate $\subst{x_1/t_1, \ldots, x_n/t_n}$.

\begin{lemma}[Miniscoping]\label{lemma:BasicQuantifierEquivalences}
	Let $\varphi, \psi, \chi$ be first-order formulas, and assume that $x$ does not occur freely in $\chi$.\\[1ex]
	\noindent
	\begin{tabular}{c@{\;\;\;}l@{\;\;}c@{\;\;}lr}
		(i) 	& $\exists x. (\varphi \vee \psi)$ & $\semequiv$ & $(\exists x_1. \varphi) \vee (\exists x_2. \psi)$ \\
		(ii) 	& $\exists x. (\varphi \circ \chi)$ & $\semequiv$ & $(\exists x. \varphi) \circ \chi$ & with $\circ \in \{\wedge, \vee\}$ \\
		(iii) 	& $\forall x. (\varphi \wedge \psi)$ & $\semequiv$ & $(\forall x_1. \varphi) \wedge (\forall x_2. \psi)$ \\
		(iv) 	& $\forall x. (\varphi \circ \chi)$ & $\semequiv$ & $(\forall x. \varphi) \circ \chi$ & with $\circ \in \{\wedge, \vee\}$
	\end{tabular}
\end{lemma}
Consequently, if $x_1 \not\in \vars(\chi)$ and $x_2 \not\in \vars(\varphi)$ holds for two first-order formulas $\varphi$ and $\chi$, we get $(\exists x_1. \varphi) \wedge (\exists x_2. \chi) \semequiv \exists x_1 x_2. (\varphi \wedge \chi)$ and dually $(\forall x_1. \varphi) \vee (\forall x_2. \chi) \semequiv \forall x_1 x_2. (\varphi \vee \chi)$.







\section{Separated Variables and Transposition of Quantifier Blocks}\label{section:Separation}

Let $\varphi$ be a first-order formula. We call two disjoint sets of variables $\vx$ and $\vy$ \emph{separated in $\varphi$} if and only if for every atom $A$ occurring in $\varphi$ we have $\vars(A) \cap \vx = \emptyset$  or $\vars(A) \cap \vy = \emptyset$.

We first show how we can transpose quantifier blocks if the variables they bind are separated. 
Throughout this section we admit equality or other predicates with fixed semantics in the formulas we consider.
Moreover, function symbols may occur, even in an arbitrarily nested fashion.

\begin{proposition}\label{proposition:TransposingQuantifierBlocks}
	Let $\varphi(\vx, \vy, \vz)$ be a quantifier-free first-order formula in which $\vx$ and $\vy$ are separated. There exists some $m \geq 1$ and a quantifier-free first-order formula $\varphi'(\vx, \vy_1, \ldots, \vy_m, \vz)$ such that $\forall\vx \exists\vy. \varphi(\vx, \vy, \vz)$ and $\exists\vy_1\ldots\exists\vy_m\forall\vx.\varphi'(\vx, \vy_1, \ldots, \vy_m, \vz)$ are semantically equivalent, and the length of each of the tuples $\vy_k$ is identical to the length of $\vy$.	
	Moreover, all atoms in $\varphi'$ are variants of atoms in $\varphi$, i.e.\ for every atom $A'$ in $\varphi'$ there is an atom $A$ in $\varphi$ that is identical to $A'$ up to renaming of variables.
\end{proposition}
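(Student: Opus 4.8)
The plan is to define the transposed formula explicitly as a finite disjunction of renamed copies of $\varphi$ and to pin down $m$ via a counting argument on the atoms that mention $\vy$. Concretely, I would introduce $m$ fresh disjoint copies $\vy_1, \ldots, \vy_m$ of the tuple $\vy$, each of the same length as $\vy$, and set $\varphi' := \bigvee_{k=1}^{m} \varphi\subst{\vy/\vy_k}$. Since the substitution $\subst{\vy/\vy_k}$ only renames the $\vy$-variables occurring in an atom and leaves every atom without a $\vy$-variable literally unchanged, every atom of $\varphi'$ is a variant of an atom of $\varphi$; moreover $\varphi'$ is quantifier-free because $\varphi$ is. With this $\varphi'$ the remaining task is to fix $m$ and to establish the semantic equivalence $\forall\vx \exists\vy.\,\varphi \semequiv \exists\vy_1\ldots\vy_m\forall\vx.\,\varphi'$.

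The direction from right to left is easy and works for every $m \ge 1$: given a structure $\A$, witnesses $\vb_1, \ldots, \vb_m$ for $\vy_1, \ldots, \vy_m$, and any value $\va$ for $\vx$, the disjunction $\varphi'$ forces $\varphi$ to hold under $\vx \mapsto \va$, $\vy \mapsto \vb_k$ for at least one index $k$, and that $\vb_k$ immediately witnesses the inner existential quantifier $\exists\vy$ of the left-hand side. The substance of the proposition lies in the converse, and this is where the separation hypothesis is used. Fix a structure $\A$ together with an assignment of domain elements to $\vz$, and let $q_1, \ldots, q_b$ enumerate the distinct atoms of $\varphi$ that contain at least one variable from $\vy$. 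By separation none of these atoms contains a variable from $\vx$, so the truth value of each $q_j$ under a valuation $\vy \mapsto \vb$ does not depend on the value assigned to $\vx$; I would record it as a \emph{type} $\rho(\vb) \in \{0,1\}^b$. Because $\vx$ and $\vy$ are separated, the truth value of $\varphi$ under a valuation $\vx \mapsto \va$, $\vy \mapsto \vb$ depends on the $\vy$-part only through $\rho(\vb)$. The crucial point is that there are at most $2^b$ distinct types, a bound that depends only on $\varphi$ and not on the cardinality of $\fU_\A$; accordingly I would set $m := 2^b$, a number fixed once and for all with $\varphi$.

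To finish the left-to-right direction, assume $\A, \beta \models \forall\vx\exists\vy.\,\varphi$. As $\fU_\A$ is nonempty, the set $R \subseteq \{0,1\}^b$ of realizable types is nonempty and satisfies $|R| \le m$; I would pick for each $\tau \in R$ a representative $\vb_\tau$ with $\rho(\vb_\tau) = \tau$ and, padding with repetitions, list these representatives as $\vb_1, \ldots, \vb_m$. For an arbitrary value $\va$ of $\vx$ the hypothesis supplies some $\vb$ such that $\varphi$ holds under $\vx \mapsto \va$, $\vy \mapsto \vb$; its type $\rho(\vb) \in R$ coincides with $\rho(\vb_k)$ for some $k$, and since the truth value of $\varphi$ depends on the $\vy$-part only through the type, $\varphi$ also holds under $\vx \mapsto \va$, $\vy \mapsto \vb_k$, whence $\varphi'$ holds under $\vx \mapsto \va$, $\vy_1 \mapsto \vb_1, \ldots, \vy_m \mapsto \vb_m$. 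Since $\va$ was arbitrary, the fixed tuple $\vb_1, \ldots, \vb_m$ witnesses $\exists\vy_1\ldots\vy_m\forall\vx.\,\varphi'$.

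I expect the main obstacle to be exactly this converse direction, and more precisely the conceptual step of seeing why finitely many uniform witnesses can replace the $\vx$-dependent inner witness for $\vy$. Although the value of $\vy$ validating $\varphi$ may genuinely vary with $\vx$, its only semantically relevant feature is its type $\rho(\vb)$, and separation is precisely the property guaranteeing that this type is well-defined independently of $\vx$ and ranges over a set of size bounded by $2^{b}$. Once this finiteness is recognized, selecting one representative per realizable type yields the required constant number $m$ of outer existential witnesses, and everything else is routine bookkeeping about substitution and the variant condition.
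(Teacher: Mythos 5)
Your proof is correct, but it takes a genuinely different route from the paper's. The paper proceeds purely syntactically: it rewrites the matrix into DNF, uses separation to split each conjunct into an $\vx$-part, a $\vy$-part and a $\vz$-part, miniscopes $\exists\vy$ onto the $\vy$-parts via Lemma~\ref{lemma:BasicQuantifierEquivalences}, re-normalizes into CNF, miniscopes $\forall\vx$, and finally pulls the quantifiers back out in the order $\exists^*\forall^*$; the number $m$ there is the number of conjuncts of the intermediate CNF, and the shape of the resulting $\varphi'$ is what gets iterated in Lemma~\ref{lemma:TransposingMultipleQuantifierBlocks}. You instead fix $\varphi' := \bigvee_{k=1}^{m}\varphi\subst{\vy/\vy_k}$ with $m := 2^{b}$ for $b$ the number of distinct $\vy$-atoms, and prove the equivalence semantically: separation guarantees that the truth value of $\varphi$ depends on the $\vy$-assignment only through the type $\rho(\vb)\in\{0,1\}^{b}$ of truth values of the $\vy$-atoms, so one representative per realizable type suffices as a uniform ($\vx$-independent) family of witnesses. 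This is essentially the fingerprint idea the paper itself introduces only later, in Section~\ref{section:Skolemization}, for range-restricted Skolemization; your argument shows it already yields Proposition~\ref{proposition:TransposingQuantifierBlocks} directly. What each approach buys: yours gives a very simple, explicit $\varphi'$ (a disjunction of renamed copies, so the variant condition and quantifier-freeness are immediate) at the cost of $m$ being always $2^{b}$, whereas the paper's CNF-based $m$ can be smaller in favorable cases (though Proposition~\ref{proposition:Blowup} shows an exponential $m$ is unavoidable in the worst case), and the paper's syntactic derivation is the form that is reused verbatim in the multi-block generalization. Two small points you handle correctly but should keep explicit in a write-up: the type $\rho$ also depends on the fixed $\vz$-assignment, which is why you must quantify over $\A$ and $\beta$ jointly to obtain $\semequiv$; and the degenerate case $b=0$ still gives $m=1\geq 1$ as required.
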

\begin{proof}
	We first transform the matrix $\varphi$ into a disjunction of conjunctions of literals. 
	Since the sets $\vx$ and $\vy$ are separated in $\varphi$, the literals in every conjunction can be grouped into three parts:\\
	\strut\;\;(1)\;\;$\psi_i(\vx, \vz)$, containing none of the variables from $\vy$,\\
	\strut\;\;(2)\;\;$\chi_i(\vy, \vz)$, containing none of the variables from $\vx$, and\\
	\strut\;\;(3)\;\;$\eta_i(\vz)$, containing neither variables from $\vx$ nor from $\vy$.
	
	Employing Lemma~\ref{lemma:BasicQuantifierEquivalences}, we move the existential quantifier block $\exists \vy$ inwards such that it only binds the (sub-)conjunctions $\chi_i(\vy,\vz)$.
	The emerging subformulas $\bigl(\exists \vy. \chi_i(\vy,\vz)\bigr)$ shall be treated as indivisible units in the further process.
	
	After moving the $\exists \vy$ block inwards, the resulting formula has the form $\forall \vx. \varphi''$. Next, we transform $\varphi''$ into a conjunction of disjunctions, and move the universal quantifier block $\forall \vx$ inwards in a way analogous to the procedure described for the $\exists \vy$ block. The shape of the result then allows to move the existential quantifiers outwards first (renaming variables where necessary) and the universal ones afterwards so that we again obtain a prenex formula, this time with an $\exists^* \forall^*$ prefix.
	\begin{align*}
		&\forall\vx\, \exists\vy. \varphi(\vx, \vy, \vz)\\
		&\semequiv\; \forall\vx.\exists\vy. \bigvee_i \psi_{i}(\vx, \vz) \wedge\chi_{i}(\vy, \vz) \wedge \eta_{i}(\vz) \\
		&\semequiv\; \forall\vx. \bigvee_i \psi_{i}(\vx, \vz) \wedge \Bigl(\exists\vy. \chi_{i}(\vy, \vz)\Bigr) \wedge \eta_{i}(\vz) \\
		&\semequiv\; \forall\vx. \bigwedge_{k=1}^m \psi'_{k}(\vx, \vz) \vee \bigvee_\ell \Bigl(\exists\vy. \chi'_{k, \ell}(\vy, \vz)\Bigr)  \vee \eta'_{k}(\vz)
	\end{align*}
	\begin{align*}	
		&\semequiv\; \bigwedge_{k=1}^m \Bigl(\forall\vx. \psi'_{k}(\vx, \vz)\Bigr) \vee \Bigl( \exists\vy. \bigvee_\ell \chi'_{k, \ell}(\vy, \vz)\Bigr)  \vee \eta'_{k}(\vz) \\
		&\semequiv\; \exists\vy_1\ldots\exists\vy_m\forall\vx. \bigwedge_{k=1}^m \psi'_{k}(\vx, \vz) \vee \bigvee_\ell \chi'_{k, \ell}(\vy_k, \vz)  \vee \eta'_{k}(\vz)
	\end{align*}
	The subformulas $\psi_{i}$, $\chi_{i}$, $\eta_{i}$, and $\chi'_{k, \ell}$ are conjunctions of literals, and the $\psi'_{k}$ and $\eta'_{k}$ are disjunctions of literals.
\end{proof}
The following example illustrates how quantifiers can be transposed in accordance with Proposition~\ref{proposition:TransposingQuantifierBlocks} and what the limits are.

\begin{example}
	Consider the sentence $\forall x \exists y. P(x) \leftrightarrow Q(y)$. 	
	On the one hand, it is easy to see that direct transposition of $\forall x$ and $\exists y$ does not preserve semantics: 
	whilst the structure $\A$ with $\fU_\A := \{\fa, \fb\}$ and $P^\A := Q^\A := \{\fa\}$ is a model of $\forall x \exists y. P(x) \leftrightarrow Q(y)$, it does not satisfy the version with transposed quantifiers $\exists y \forall x. P(x) \leftrightarrow Q(y)$.

	On the other hand, we can show equivalence of $\forall x \exists y. P(x) \leftrightarrow Q(y)$ and $\exists y_1 y_2 \forall x. \bigl(P(x) \rightarrow Q(y_1)\bigr) \wedge \bigl( Q(y_2) \rightarrow P(x) \bigr)$:
	\begin{align*}
		&\forall x \exists y. P(x) \leftrightarrow Q(y)\\
		&\semequiv\; \forall x \exists y. \bigl(\neg P(x) \vee Q(y)\bigr) \wedge \bigl( P(x) \vee \neg Q(y) \bigr) \\
		&\semequiv\; \forall x. \bigl( \neg P(x) \wedge (\exists y_2.\neg Q(y_2)) \bigr) \vee \bigl((\exists y_1.Q(y_1)) \wedge P(x)\bigr) \\
		&\semequiv\; \bigl( (\forall x.\neg P(x)) \vee (\exists y_1. Q(y_1)) \bigr)\\
			&\hspace{20ex} \wedge \bigl( (\exists y_2.\neg Q(y_2)) \vee (\forall x.P(x)) \bigr) \\
		&\semequiv\; \exists y_1 y_2\forall x. \bigl( \neg P(x) \vee Q(y_1) \bigr) \wedge \bigl( \neg Q(y_2) \vee P(x) \bigr)\\
		&\semequiv\; \exists y_1 y_2 \forall x. \bigl(P(x) \rightarrow Q(y_1)\bigr) \wedge \bigl( Q(y_2) \rightarrow P(x) \bigr) ~.
	\end{align*}
\end{example}

Next, we demonstrate how the transposition of quantifiers affects the length of formulas.
\begin{proposition}\label{proposition:Blowup}
	In the worst case transposing quantifier blocks in accordance with Proposition~\ref{proposition:TransposingQuantifierBlocks} leads to a blow-up in the number of existentially quantified variables that is exponential in the length of the original formula.
\end{proposition}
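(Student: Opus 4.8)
The plan is to exhibit a concrete family of sentences on which the construction underlying Proposition~\ref{proposition:TransposingQuantifierBlocks} provably produces exponentially many existential variables. The blow-up is engineered to coincide with the classical exponential gap between DNF and CNF representations, which is precisely the gap that the intermediate ``transform $\varphi''$ into a conjunction of disjunctions'' step in that construction must pay for. For $n \geq 1$ I would take
\[
	\varphi_n \;:=\; \forall x\, \exists y.\; \bigvee_{i=1}^{n} \bigl(P_i(x) \wedge Q_i(y)\bigr),
\]
using $2n$ unary predicate symbols $P_1,\ldots,P_n,Q_1,\ldots,Q_n$. Here $\vx = (x)$ and $\vy = (y)$ are separated, since every atom contains at most one of the two variables, and the length of $\varphi_n$ is linear in $n$.

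Next I would trace the construction on $\varphi_n$. The matrix is already a disjunction of conjunctions with $\psi_i = P_i(x)$, $\chi_i = Q_i(y)$, and empty $\eta_i$; moving $\exists y$ inwards yields $\forall x.\,\bigvee_{i=1}^n P_i(x) \wedge \bigl(\exists y.\,Q_i(y)\bigr)$. Abstracting each indivisible unit $\exists y.\,Q_i(y)$ by a fresh propositional variable $R_i$, the body becomes the monotone DNF $\bigvee_{i=1}^n (P_i(x) \wedge R_i)$ over the $2n$ atoms $P_1(x),\ldots,P_n(x),R_1,\ldots,R_n$. Its conversion to CNF is the decisive step, after which the distribution of $\forall x$ and the re-extraction of the existentials give one existential block $\vy_k$ of length $|\vy|=1$ per resulting conjunct.

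The key combinatorial fact I would establish is that this monotone function has exactly $2^n$ prime implicates, namely the clauses obtained by selecting, for each index $i$, exactly one of the two literals $P_i(x)$ or $R_i$. Each such clause is an implicate, since any satisfying assignment makes some pair $P_i(x),R_i$ simultaneously true and hence makes the selected literal of that pair true; and each is prime, since dropping the literal chosen at index $j$ is refuted by the assignment that switches on only the $j$-th pair and turns off every other selected literal. As the function is monotone, these are all of its prime implicates and they constitute the unique minimal CNF. Consequently, whether one converts by naive distribution or into minimal form, the construction produces $2^n$ conjuncts, so $m = 2^n$ and the transposed sentence carries $\Theta(2^n)$ existentially quantified variables, while the input has length $\Theta(n)$. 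All but the single clause $\bigvee_i P_i(x)$ genuinely contain an existential unit, so the exponential blow-up survives even if vacuous blocks $\vy_k$ are discarded.

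The step I expect to be the main obstacle is the lower bound on the CNF size: it is routine that naive distribution yields $2^n$ clauses, but to justify the phrase ``worst case'' robustly one must argue that these clauses cannot be consolidated, i.e.\ that the \emph{minimal} CNF is itself exponential. Monotonicity of $\bigvee_i (P_i(x)\wedge R_i)$ makes this clean, because for monotone functions the prime implicates are unique and irredundant; the only care needed is the verification of primality of each selector clause, which the switching argument above supplies.
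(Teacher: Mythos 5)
There is a genuine gap: your lower bound lives in the wrong space. The prime-implicate argument correctly shows that every CNF of the monotone body $\bigvee_{i}\bigl(P_i(x)\wedge R_i\bigr)$ over the literals $P_i(x),R_i$ contains all $2^n$ selector clauses, so the \emph{particular normal-form-based procedure} from the proof of Proposition~\ref{proposition:TransposingQuantifierBlocks} outputs $\Theta(2^n)$ existential variables on your $\varphi_n$. But the proposition, as the paper proves it, asserts that the blow-up is \emph{inherent}: there is a family of inputs for which \emph{every} semantically equivalent $\exists^*\forall^*$ sentence must carry exponentially many existential quantifiers. Showing that one implementation is wasteful on one input does not establish that. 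Worse, your example cannot be repaired by adding the missing optimality argument, because for it the lower bound is false: $\varphi_n$ is equivalent to $\exists y_1\ldots y_n\,\forall x.\ \bigvee_{i=1}^{n}\bigl(P_i(x)\wedge Q_i(y_i)\bigr)$, which uses only $n$ existential variables. (Forward direction: pick $y_i$ to be a witness of $Q_i$ whenever $Q_i$ is nonempty; backward direction: given witnesses $c_1,\ldots,c_n$, every $x$ has an index $i$ with $P_i(x)\wedge Q_i(c_i)$, and $c_i$ then serves as the value of the single $y$.) The $2^n$ clauses of the minimal CNF thus reflect only an inefficiency of the DNF-to-CNF route --- the existential units recombine afterwards via $\exists y.(A\vee B)\semequiv(\exists y. A)\vee(\exists y. B)$ --- and say nothing about the number of witnesses the target sentence must name. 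You anticipated the consolidation objection, but you answered it only at the level of CNF representations, not at the level of all equivalent $\exists^*\forall^*$ sentences, which is the level at which the claim is made.

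The paper's proof works with $\forall x\,\exists y.\ \bigwedge_{i=1}^{n}\bigl(P_i(x)\leftrightarrow Q_i(y)\bigr)$ and establishes a genuinely semantic lower bound: it constructs a model realizing all $2^n$ $Q$-profiles, observes that deleting the witness of any nonzero profile destroys modelhood, Skolemizes a hypothetical equivalent sentence $\exists y_1\ldots y_k\,\forall\vx.\,\chi_*$ into $k$ constants, and invokes the Substructure Lemma to conclude $k\geq 2^n-1$. Any correct proof of the proposition needs an argument of this kind --- a model-theoretic reason why exponentially many pairwise non-interchangeable witnesses must be named --- and your proposal contains none. The biconditionals in the paper's example are what force this: distinct $P$-profiles demand distinct $Q$-profile witnesses, whereas your disjunction of conjunctions lets a single witness per index serve all $x$ simultaneously.
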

\begin{proof}
	Consider the following first-order sentence and how we can transpose the quantifier blocks therein.
	$\varphi := \forall x \exists y . (P_1(x) \leftrightarrow Q_1(y)) \wedge \ldots \wedge (P_n(x) \leftrightarrow Q_n(y))$
	can be transformed into the equivalent $\varphi' :=$	
	\begin{align*}
		\exists& \underbrace{y_{\<0,\ldots,0\>} \ldots y_{\<1,\ldots,1\>}}_{2^n \text{ variables}} \forall x.\\
	 		&\bigwedge_{\bar{b} \in \{0,1\}^n}   \Biggl(\biggl( \bigwedge_{\scriptsize\begin{array}{c} 1 \leq i \leq n,\\ b_i = 1 \end{array}} P_i(x) \quad\wedge \bigwedge_{\scriptsize\begin{array}{c} 1 \leq j \leq n,\\ b_j = 0 \end{array}} \neg P_j(x) \biggr) \\
	 		&\quad\longrightarrow 
				\biggl(\bigwedge_{\scriptsize\begin{array}{c} 1 \leq i \leq n,\\ b_i = 1 \end{array}} Q_i(y_{\bar{b}}) \quad\wedge \bigwedge_{\scriptsize\begin{array}{c} 1 \leq j \leq n,\\ b_j = 0 \end{array}} \neg Q_j(y_{\bar{b}})\biggr)\Biggr)
	\end{align*}
	Consider the following model $\A$ of $\varphi$ and $\varphi'$:	
	\begin{align*}
		\fU_\A &:= \bigl\{ \fa_{\bar{b}} \bigm| \bar{b} \in \{0,1\}^n \bigr\} \cup \bigl\{ \fa'_{\bar{b}} \bigm| \bar{b} \in \{0,1\}^n \bigr\} ~,\\
		P_i^\A &:= \bigl\{ \fa_{\<b_1, \ldots. b_n\>} \bigm| b_i = 1 \bigr\} \qquad\text{ for } i = 1, \ldots, n ~, \\
		Q_i^\A &:= \bigl\{ \fa'_{\<b_1, \ldots. b_n\>} \bigm| b_i = 1 \bigr\} \qquad\text{ for } i = 1, \ldots, n ~.
	\end{align*}
	
	We make the following observation and shall keep it in mind: ($*$) removing any of the $\fa'_{\bar{b}}$ with $\bar{b} \neq \<0, \ldots, 0\>$ from $\A$ does not lead to a model of $\varphi$.
	We will argue that any sentence $\varphi_*$ (in prenex normal form), which is semantically equivalent to $\varphi$ and starts with a quantifier prefix of the form $\exists^*\forall^*$, contains at least $2^n-1$ existential quantifiers.
	
	 Let $\varphi_* := \exists y_1 \ldots y_k \forall x_1 \ldots x_\ell. \chi_*$ (with $\chi_*$ being quantifier-free) be a sentence with minimal $k$ that is semantically equivalent to $\varphi$. Since $\A$ is also a model of $\varphi_*$, we know that there is a sequence of elements $\fc_1, \ldots, \fc_k$ taken from the domain $\fU_\A$ such that $\A, [y_1 \Mapsto \fc_1, \ldots, y_k \Mapsto \fc_k] \models \forall x_1 \ldots x_\ell. \chi_* $.
	Consequently, we can extend $\A$ to a model $\A_*$ (over the same domain) of the Skolemized formula $\varphi_{\text{Sk}} := \forall x_1 \ldots x_\ell. \chi_*\subst{y_1/c_1, \ldots, y_k/c_k}$ by adding $c_j^{\A_*} := \fc_j$ for $j = 1, \ldots, k$. On the other hand, every model of the Skolemized formula $\varphi_{\text{Sk}}$ immediately yields a model of $\varphi_*$.
	
	The signature underlying $\varphi_{\text{Sk}}$ comprises exactly the constant symbols $c_1, \ldots, c_k$ and does not contain any other function symbols. 
	Suppose $k < 2^n-1$. Hence, there is some $\fa'_{\bar{b}}$ with $\bar{b} \neq \<0, \ldots, 0\>$ such that for every $j$ it holds $c_j^{\A_*} \neq \fa'_{\bar{b}}$. By the Substructure Lemma, the following substructure $\B$ of $\A_*$ constitutes a model of $\varphi_{Sk}$: $\fU_\B := \fU_{\A_*} \setminus \{\fa'_{\bar{b}}\}$, $P_i^\B := P_i^{\A_*} \cap \fU_\B$ and $Q_i^\B := Q_i^{\A_*} \cap \fU_\B$ for every $i$, and $c_j^\B := c_j^{\A_*}$ for every $j$.
	
	However, then $\B$ must also be a model of both $\varphi_*$ and $\varphi$, since every model of $\varphi_{\text{Sk}}$ is a model of $\varphi_*$, and because we assumed $\varphi_*$ and $\varphi$ to be semantically equivalent.
	This contradicts observation ($*$), and thus we must have $k \geq 2^n-1$.
			
	This evidently shows that $\varphi'$ is (almost) optimal regarding the number of existentially quantified variables it contains.
\end{proof}

It is possible to generalize  Proposition~\ref{proposition:TransposingQuantifierBlocks} to the case of several quantifier alternations as long as all universally quantified variables are separated from all existentially quantified ones.

\begin{lemma}\label{lemma:TransposingMultipleQuantifierBlocks}
	Let $\varphi(\vx_1, \ldots, \vx_n, \vy_1, \ldots, \vy_n, \vz)$ be a quantifier-free first-order formula in which the sets $\vx_1 \cup \ldots \cup \vx_n$ and $\vy_1 \cup \ldots \cup \vy_n$ are separated. 	
	There exists a quantifier-free first-order formula $\varphi'(\vu, \vv, \vz)$ such that $\forall\vx_1 \exists\vy_1 \ldots \forall\vx_n \exists\vy_n. \varphi(\vx_1, \ldots, \vx_n, \vy_1, \ldots,$ $\vy_n, \vz)$ and $\exists \vu \forall \vv. \varphi'(\vu, \vv, \vz)$ are semantically equivalent and all atoms in $\varphi'$ are variants of atoms in $\varphi$.
	Notice that $\vx_1$ and $\vy_n$ may be empty.
\end{lemma}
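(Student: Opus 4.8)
The plan is to proceed by induction on the number $n$ of quantifier alternations, using Proposition~\ref{proposition:TransposingQuantifierBlocks} as the basic engine together with a mild strengthening of it to drive the inductive step. The base case $n = 1$ is exactly Proposition~\ref{proposition:TransposingQuantifierBlocks}, since the hypothesis guarantees that $\vx_1$ and $\vy_1$ are separated. For the inductive step I would apply the induction hypothesis to the inner formula $\forall\vx_2\exists\vy_2\cdots\forall\vx_n\exists\vy_n.\,\varphi$, treating $\vx_1$, $\vy_1$, and $\vz$ as additional free parameters; the required separation of $\vx_2\cup\cdots\cup\vx_n$ from $\vy_2\cup\cdots\cup\vy_n$ is inherited from the global hypothesis. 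This yields an equivalent $\exists\vu'\forall\vv'.\,\varphi_1$ whose atoms are variants of atoms of $\varphi$. Substituting back and merging the two adjacent existential blocks gives
\[
	\forall\vx_1\exists\vy_1\cdots\forall\vx_n\exists\vy_n.\,\varphi
	\;\semequiv\;
	\forall\vx_1\,\exists(\vy_1\,\vu')\,\forall\vv'.\,\varphi_1 ,
\]
so that all that remains is to move the single leading universal block $\forall\vx_1$ to the right, past the existential block, so as to join $\vv'$ in a trailing universal prefix.

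This last move is the crux, and it is not covered verbatim by Proposition~\ref{proposition:TransposingQuantifierBlocks}, because the formula underneath the existential block, namely $\forall\vv'.\,\varphi_1$, is no longer quantifier-free. I would therefore establish a strengthened transposition statement: whenever $\rho$ is quantifier-free and $\vx\cup\vv$ is separated from $\vy$, the formula $\forall\vx\,\exists\vy\,\forall\vv.\,\rho$ is equivalent to some $\exists\vy_1\cdots\exists\vy_m$ followed by a purely universal prefix over $\vx$ and copies of $\vv$, with quantifier-free matrix $\rho'$ whose atoms are variants of atoms of $\rho$. The key observation that makes the original argument go through is that, by separation, the trailing universal variables $\vv$ occur only in those literals that contain no existential variable from $\vy$. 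Writing $\varphi_1$ in conjunctive normal form and grouping each clause into a universal part (over $\vx\cup\vv\cup\vz$), an existential part (over $\vy\cup\vz$), and a ground part (over $\vz$), miniscoping (Lemma~\ref{lemma:BasicQuantifierEquivalences}(iv)) lets me push $\forall\vv$ inward so that it binds only the universal part of each clause. The resulting subformulas $\forall\vv.\,\alpha_k$ can then be treated as indivisible universal ``units,'' exactly as the plain universal literals are treated in the proof of Proposition~\ref{proposition:TransposingQuantifierBlocks}.

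With the trailing universal thus absorbed, I would run the propositional normal-form conversions and the miniscoping steps of Proposition~\ref{proposition:TransposingQuantifierBlocks} on these units: miniscope $\exists\vy$ onto the existential units, miniscope $\forall\vx$ onto the universal units, and finally prenex. Pulling the existential quantifiers outward (renaming into fresh copies, which accounts for the usual blow-up) produces the leading $\exists^*$ block; pulling $\forall\vx$ together with the absorbed $\forall\vv$ outward is unobstructed, since in the universal part no existential quantifier ever separates two universal ones, and by Lemma~\ref{lemma:BasicQuantifierEquivalences}(iv) a universal quantifier may always be extracted from a conjunction and from a disjunction whose other operand does not contain the bound variable freely (after renaming, which merely duplicates some universal variables and is harmless). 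The matrix stays quantifier-free and its atoms remain variants of atoms of $\varphi$, so composing this step with the inductive hypothesis closes the induction; the degenerate cases in which $\vx_1$ or $\vy_n$ is empty are handled by the same argument, the empty block simply being dropped. The main obstacle is precisely the strengthened transposition: one must verify that absorbing $\forall\vv$ into compound units leaves the separation structure intact under every boolean rewriting step, so that the grouping into universal, existential, and ground parts is preserved throughout.
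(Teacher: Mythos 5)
Your proof is correct in substance but is organized differently from the paper's. The paper runs a single iterated transformation from the innermost block outwards, alternating DNF and CNF conversions and \emph{keeping the displaced quantifiers nested inside indivisible units} (the subformulas $\chi^{(\ell)}_i = \exists\vy_{n-\ell+1}.\,(\cdots)$ and $\psi^{(\ell)}_k = \forall\vx_{n-\ell+1}.\,(\cdots)$); it prenexes everything only in the very last step, so it never has to confront a formula of the shape $\forall\vx\,\exists\vy\,\forall\vv.\,\rho$. You instead recurse on the outermost $\forall\vx_1\exists\vy_1$ pair and fully prenex the result of the recursive call to $\exists\vu'\forall\vv'.\varphi_1$, which is what forces you to prove the strengthened transposition lemma for a trailing universal block. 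Your treatment of that lemma is sound: since $\vv'$ occurs only in atoms free of existential variables, miniscoping via Lemma~\ref{lemma:BasicQuantifierEquivalences}(iii)/(iv) pushes $\forall\vv'$ onto the universal parts of the clauses, and the resulting closed units can be carried through the Boolean normal-form conversions exactly as literals are in Proposition~\ref{proposition:TransposingQuantifierBlocks}. The trade-off is that the paper's organization needs no auxiliary lemma, while yours yields a cleaner modular induction at the cost of one extra (but easy) transposition result.

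One bookkeeping point you should make explicit: the statement of the lemma (and of Proposition~\ref{proposition:TransposingQuantifierBlocks}) only asserts that the atoms of the output are \emph{variants} of atoms of the input, i.e.\ identical up to some renaming of variables. That alone does not guarantee that $\vx_1 \cup \vv'$ is separated from $\vy_1 \cup \vu'$ in $\varphi_1$, which your inductive step needs. You must strengthen the induction hypothesis to record that the renaming maps universally quantified variables to variables of $\vv'$ and existentially quantified ones to variables of $\vu'$ (which the construction does in fact ensure), so that separation is inherited at every stage. With that invariant added, the induction closes.
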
	
\begin{proof}
	We apply a syntactic transformation following the strategy of the proof of Proposition~\ref{proposition:TransposingQuantifierBlocks}, but in an iterated fashion:	
	\begin{align*}
		\forall\vx_1 &\exists\vy_1 \ldots \forall\vx_n \exists\vy_n. \varphi(\vx_1, \ldots, \vx_n, \vy_1, \ldots, \vy_n, \vz)\\
		\semequiv&\; \forall\vx_1 \exists\vy_1 \ldots \forall\vx_n \exists\vy_n. \bigvee_i \psi_{i}(\vx_1, \ldots, \vx_n, \vz) \\[-1ex]
			&\hspace{25ex} \wedge \chi_{i}(\vy_1, \ldots, \vy_n, \vz) \wedge \eta_{i}(\vz) \\
		\semequiv&\; \forall\vx_1 \exists\vy_1 \ldots \forall\vx_n. \bigvee_i \psi_{i}(\vx_1, \ldots, \vx_n, \vz) \\[-1ex]
			&\hspace{15ex} \wedge  \underbrace{\Bigl(\exists\vy_n.\chi_{i}(\vy_1, \ldots, \vy_n, \vz)\Bigr)}_{=:\; \chi^{(1)}_i (\vy_1, \ldots, \vy_{n-1}, \vz)} \wedge\; \eta_{i}(\vz) \\[0.5ex]
		\semequiv&\; \forall\vx_1 \exists\vy_1 \ldots \forall\vx_n. \bigwedge_k \psi'_{k}(\vx_1, \ldots, \vx_n, \vz) \\[-1ex]
			&\hspace{20ex} \vee \chi'^{(1)}_{k} (\vy_1, \ldots, \vy_{n-1}, \vz) \vee \eta'_{k}(\vz) \\
		\semequiv&\; \forall\vx_1 \exists\vy_1 \ldots \exists\vy_{n-1}.\bigwedge_k \underbrace{\Bigl(\forall\vx_n. \psi'_{k}(\vx_1, \ldots, \vx_n, \vz)\Bigr)}_{=:\; \psi^{(1)}_k (\vx_1, \ldots, \vx_{n-1}, \vz)} \\
			&\hspace{20ex} \vee \chi'^{(1)}_{k} (\vy_1, \ldots, \vy_{n-1}, \vz) \vee \eta'_{k}(\vz) \\
		\semequiv&\; \forall\vx_1 \exists\vy_1 \ldots \exists\vy_{n-1}.\bigvee_i \psi'^{(1)}_{i} (\vx_1, \ldots, \vx_{n-1}, \vz) \\[-1ex]
			&\hspace{20ex} \wedge \chi''^{(1)}_{i} (\vy_1, \ldots, \vy_{n-1}, \vz) \wedge \eta''_{i}(\vz) 
	\end{align*}
	\begin{align*}	
		\semequiv&\; \forall\vx_1 \exists\vy_1 \ldots \forall\vx_{n-1}.\bigvee_i \psi'^{(1)}_{i} (\vx_1, \ldots, \vx_{n-1}, \vz) \\[-1ex]
			&\hspace{10ex} \wedge \underbrace{\Bigl( \exists\vy_{n-1}.\chi''^{(1)}_{i} (\vy_1, \ldots, \vy_{n-1}, \vz)\Bigr)}_{=:\; \chi^{(2)}_i (\vy_1, \ldots, \vy_{n-2}, \vz)} \wedge\; \eta''_{i}(\vz) \\[-1.5ex]
		&\quad\vdots\\
		\semequiv&\; \forall\vx_1. \bigwedge_k \psi''^{(n-1)}_{k}(\vx_1, \vz) \vee \chi'^{(n)}_{k} (\vz) \vee \eta^*_{k}(\vz) \\
		\semequiv&\; \bigwedge_k \underbrace{\Bigl( \forall\vx_1. \psi''^{(n-1)}_{k}(\vx_1, \vz)\Bigl)}_{=:\; \psi^{(n)}_k (\vz)} \vee \chi'^{(n)}_{k} (\vz) \vee \eta^*_{k}(\vz) \\
		\semequiv&\; \exists \vu \forall \vv.  \bigwedge_k \hpsi^{(n)}_k (\vu,\vz) \vee \hchi^{(n)}_{k} (\vv,\vz) \vee \eta^*_{k}(\vz) ~.
	\end{align*}
	The $\chi'^{(\ell)}_{k}$ are disjunctions of subformulas $\chi^{(\ell)}_j$, whereas the $\chi''^{(\ell)}_{k}$ are conjunctions of such formulas.
	Dually, the $\psi'^{(\ell)}_{k}$ are conjunctions of subformulas $\psi^{(\ell)}_j$ and the $\psi''^{(\ell)}_{k}$ are disjunctions of such formulas.
	The $\hpsi^{(n)}_k (\vu,\vz)$ and the $\hchi^{(n)}_{k} (\vv,\vz)$ are quantifier-free variants of $\psi^{(n)}_k (\vz)$ and $\chi'^{(n)}_{k} (\vz)$, respectively, i.e.\ quantifiers have been moved outwards and variables have been renamed appropriately.
\end{proof}
Notice that every transformation into a disjunction of conjunctions or a conjunction of disjunctions in the above proof causes at most an exponential increase in the length of the formula. Moving quantifier blocks inwards causes at most a factor of two per moved block. 
Roughly speaking, the length of $\varphi'$ is thus at most $2n$-fold exponential in the double length of $\varphi$.
This constitutes a very crude upper bound, and it does not take into account that redundant subformulas may be removed.
The complexity results in Section~\ref{section:Skolemization} will be based on semantic arguments, and thus do not rely on this rough estimate.







\section{The Separated Fragment of First-Order Logic}\label{section:SeparatedFragment}
We next explore a striking consequence of Lemma~\ref{lemma:TransposingMultipleQuantifierBlocks}. It is well-known that the Bernays--Sch\"on\-finkel--Ramsey (BSR) Fragment is decidable. This fragment comprises all first-order sentences in prenex normal form with quantifier prefixes of the form $\exists^*\forall^*$. Moreover, equality may occur in BSR sentences but non-constant function symbols may not. By virtue of Lemma~\ref{lemma:TransposingMultipleQuantifierBlocks}, we can now extend this decidability result to the following class of first-order sentences.
\begin{definition}[Separated Fragment (SF)]\label{definition:SeparatedFragment}
	The \emph{Separated Fragment (SF)} of first-order logic shall consist of all existential closures of prenex formulas in which existentially quantified variables are separated from universally quantified ones. 
	
	More precisely, it consists of all first-order sentences with equality but without non-constant function symbols that are of the form $\exists \vz\, \forall\vx_1 \exists\vy_1 \ldots \forall\vx_n \exists\vy_n. \varphi(\vx_1, \ldots, \vx_n, \vy_1, \ldots, \vy_n,$ $\vz)$ in which $\varphi$ is quantifier-free and the sets $\vx_1 \cup \ldots \cup \vx_n$ and $\vy_1 \cup \ldots \cup \vy_n$ are separated.
\end{definition}
As already stated, Lemma~\ref{lemma:TransposingMultipleQuantifierBlocks} shows that every sentence in SF can be transformed into an equivalent one which belongs to the BSR Fragment: just replace the subformula $\forall\vx_1 \exists\vy_1 \ldots \forall\vx_n \exists\vy_n. \varphi$ with an equivalent one of the form $\exists \vu \forall \vv. \varphi'$. This proves our main result:
\begin{theorem}\label{theorem:DecidabilitySF}
	Satisfiability of sentences in SF is decidable.
\end{theorem}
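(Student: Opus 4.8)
The plan is to reduce satisfiability in SF to satisfiability in the BSR Fragment, whose decidability is already established classically. The key observation is that Lemma~\ref{lemma:TransposingMultipleQuantifierBlocks} provides exactly the transformation needed to eliminate the quantifier alternations that prevent an SF sentence from being a BSR sentence outright.

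First I would take an arbitrary SF sentence, which by Definition~\ref{definition:SeparatedFragment} has the form $\exists \vz\, \forall\vx_1 \exists\vy_1 \ldots \forall\vx_n \exists\vy_n. \varphi$, where $\varphi$ is quantifier-free and the universally quantified variables $\vx_1 \cup \ldots \cup \vx_n$ are separated from the existentially quantified variables $\vy_1 \cup \ldots \cup \vy_n$. Since the separation hypothesis of Lemma~\ref{lemma:TransposingMultipleQuantifierBlocks} is met by the inner block $\forall\vx_1 \exists\vy_1 \ldots \forall\vx_n \exists\vy_n. \varphi$, I would invoke that lemma to obtain a quantifier-free formula $\varphi'(\vu, \vv, \vz)$ such that this inner block is semantically equivalent to $\exists \vu \forall \vv. \varphi'(\vu, \vv, \vz)$. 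Prefixing both sides with the outer existential block $\exists \vz$ preserves semantic equivalence, yielding that the original sentence is equivalent to $\exists \vz\, \exists \vu \forall \vv. \varphi'(\vu, \vv, \vz)$.

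The next step is to observe that the resulting sentence lies in the BSR Fragment. Its quantifier prefix is $\exists \vz\, \exists \vu \forall \vv$, which collapses to the required $\exists^* \forall^*$ shape, and the matrix $\varphi'$ is quantifier-free. Because SF sentences by definition contain no non-constant function symbols, and because Lemma~\ref{lemma:TransposingMultipleQuantifierBlocks} guarantees that every atom in $\varphi'$ is merely a variant of an atom already occurring in $\varphi$, no non-constant function symbols are introduced by the transformation; equality, which BSR permits, may appear. Hence $\exists \vz\, \exists \vu \forall \vv. \varphi'$ is a genuine BSR sentence that is equisatisfiable with—indeed semantically equivalent to—the SF sentence we started from.

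Finally, since satisfiability of BSR sentences is decidable, and we can effectively compute the equivalent BSR sentence from any given SF sentence, satisfiability in SF is decidable. The only delicate point is to confirm that the transformation of Lemma~\ref{lemma:TransposingMultipleQuantifierBlocks} genuinely produces a BSR sentence rather than merely an $\exists^*\forall^*$-prefixed formula that might have smuggled in forbidden symbols; this is why the clause ``all atoms in $\varphi'$ are variants of atoms in $\varphi$'' in that lemma is essential and does all the real work here. There is no serious obstacle beyond carefully checking that this syntactic invariant is preserved, so the theorem follows immediately.
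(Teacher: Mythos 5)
Your proposal is correct and follows essentially the same route as the paper: apply Lemma~\ref{lemma:TransposingMultipleQuantifierBlocks} to the inner $\forall\vx_1\exists\vy_1\ldots\forall\vx_n\exists\vy_n$ part to obtain an equivalent $\exists\vu\forall\vv$ formula, prepend $\exists\vz$, and conclude via decidability of BSR. Your additional remark that the ``atoms are variants of atoms in $\varphi$'' clause guarantees no non-constant function symbols are smuggled in is a useful explicit check that the paper leaves implicit.
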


It is worth noticing that SF is not only a proper superset of the BSR Fragment but also of one more well-known decidable first-order fragment, namely the Relational Monadic Fragment \emph{without equality}, i.e.\ the set of first-order sentences without non-constant function symbols in which only predicate symbols of arity one are allowed.

\begin{theorem}\label{theorem:InclusionSF}
	SF properly contains the BSR fragment and the Relational Monadic Fragment without equality.
\end{theorem}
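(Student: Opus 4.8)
The plan is to establish two things: first, that SF contains both the BSR Fragment and the Relational Monadic Fragment without equality, and second, that each containment is strict. The inclusions should follow almost immediately from the definitions, while strictness requires exhibiting witness sentences that lie in SF but outside the respective fragment.

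\medskip

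\noindent\textbf{Containment of BSR.} A BSR sentence has the form $\exists\vz\,\forall\vx.\,\varphi$ with $\varphi$ quantifier-free and no non-constant function symbols. In the notation of Definition~\ref{definition:SeparatedFragment} this is the special case $n = 1$ with $\vy_1$ empty and $\vx_1 = \vx$. Since there are no existentially quantified variables under the $\forall\vx$ block, the separation condition is satisfied vacuously: every atom trivially has empty intersection with the (empty) set of existential variables bound inside the quantifier alternation. Hence every BSR sentence is syntactically an SF sentence.

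\medskip

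\noindent\textbf{Containment of the Relational Monadic Fragment without equality.} Here the key observation is that in a monadic sentence every atom is of the form $P(t)$ for a single argument $t$, so each atom contains at most one variable. First I would bring such a sentence into prenex normal form, which is always possible and yields a sentence $\exists\vz\,\forall\vx_1\exists\vy_1\ldots\forall\vx_n\exists\vy_n.\,\varphi$ (a leading existential block can be prepended harmlessly, and absent blocks are taken empty). Because each atom mentions only one variable, it can never contain both a universally and an existentially quantified variable; thus the sets $\vx_1\cup\cdots\cup\vx_n$ and $\vy_1\cup\cdots\cup\vy_n$ are separated, and the sentence lies in SF. (The exclusion of equality matters precisely because an equality atom $x\approx y$ would carry two variables and could break separation.)

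\medskip

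\noindent\textbf{Strictness.} For strictness I would give explicit witnesses. For BSR, any sentence genuinely using the full $\forall\exists\forall\exists$ alternation with separation but no $\exists^*\forall^*$ prefix works; concretely $\forall x\exists y.\,P(x)\leftrightarrow Q(y)$ from the earlier example is in SF (the single-variable atoms separate $x$ from $y$) yet, being non-monadic-free in prefix shape, is not in prenex BSR form, and one must check it is not equivalent to any BSR sentence---but here the Proposition~\ref{proposition:Blowup} machinery already shows such sentences force exponentially many existentials, so they are not literally BSR sentences, giving properness of the syntactic containment. For the monadic fragment, the witness must use a predicate of arity $\geq 2$ while respecting separation, for instance $\forall x_1 x_2\exists y.\,R(x_1,x_2)\wedge S(y)$, which is in SF but uses a binary predicate and so is not monadic. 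The main obstacle will be arguing strictness cleanly: the inclusions are syntactic, so the honest statement is that SF, as a syntactic class, properly contains these fragments, and I would phrase strictness as exhibiting sentences that belong to SF but fall outside the syntactic definition of BSR (wrong quantifier prefix) and of the monadic fragment (predicate arity too large), which is immediate once the witnesses are in hand.
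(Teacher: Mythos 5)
Your containment arguments coincide with the paper's: BSR sentences satisfy separation vacuously because there are no existentially quantified variables after the leading block, and relational monadic sentences satisfy it because every atom carries at most one variable. The paper's proof stops at the two inclusions and does not explicitly argue properness; your witnesses supply that, although the detour through semantic equivalence and the Proposition~\ref{proposition:Blowup} machinery is unnecessary---since the containments are syntactic, it suffices that $\forall x\exists y.\,P(x)\leftrightarrow Q(y)$ has the wrong quantifier prefix for BSR and that $\forall x_1 x_2\exists y.\,R(x_1,x_2)\wedge S(y)$ uses a binary predicate, which is exactly where you land in your final remark.
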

\begin{proof}
	Let $\varphi := \exists \vz\, \forall \vx.\psi$ be a BSR sentence, i.e.\ it may contain equality but non-constant function symbols may not appear. Moreover, assume that $\psi$ is quantifier free.
	The sentence $\varphi$ may be considered as the existential closure of the formula $\forall \vx.\psi$. Since $\forall \vx. \psi$ does not contain any existentially quantified variables, the separation criterion in Definition~\ref{definition:SeparatedFragment} is trivially fulfilled. Hence, $\varphi$ lies in SF.
	
	Let $\varphi'$ be a relational monadic sentence without equality and without non-constant function symbols. 
	Since all predicate symbols in $\varphi'$ have an arity of at most one, any two disjoint sets of variables are trivially separated in $\varphi'$.
	Therefore, $\varphi'$ belongs to SF.
\end{proof}

In Section \ref{section:FurtherExtensionsSeparated} we show how SF can be extended so that the Relational Monadic Fragment \emph{with equality} and also the Full Monadic Fragment \emph{without equality} (i.e.\ Relational Monadic plus unary function symbols) become proper subsets of the extension.

\subsection{Range-Restricted Skolemization}\label{section:Skolemization}

In this section we shall demonstrate another approach to showing decidability of SF. This approach emphasizes the \emph{small model property} of SF, i.e.\ we can give a computable function $\hSF$ which takes any SF sentence $\varphi$ as input and yields a positive integer $\hSF(\varphi)$ such that whenever $\varphi$ has a model then it also has a model based on a universe with at most $\hSF(\varphi)$ elements.

It is well-known that BSR sentences exhibit the small model property, where the size of small models is linear in the number of occurring constant symbols plus the number of occurrences of existential quantifiers in the sentence at hand. 
Hence, the number of constant symbols and existentially quantified variables in any BSR sentence $\varphi'$ that is semantically equivalent to an SF sentence $\varphi$ would yield an upper bound on the size of small models of $\varphi$. However, the transformations carried out in the proof of Lemma~\ref{lemma:TransposingMultipleQuantifierBlocks} do not immediately admit a reasonable estimate on the number of variables in the resulting BSR sentence. This is why we tackle the problem in a different way. As it turns out, this alternative approach does not only facilitate the derivation of tighter upper bounds on the size of small models in subfragments of SF. In addition, automated reasoning procedures may benefit from the developed methods and results. We shall assess this potential in Section~\ref{section:AutomatedReasoning}.

On an abstract level, our semantic approach is akin to proofs of the small model property of relational monadic sentences. Usually, the central argument goes as follows: any sentence $\varphi$ without equality and without non-constant function symbols which contains exactly $k$ predicate symbols $P_1, \ldots, P_k$---all of them unary---cannot distinguish more than $2^k$ domain elements. To formalize this, we associate with every domain element $\fa \in \fU_\A$ of a given structure $\A$ a \emph{fingerprint} with respect to the predicates $P_1^\A, \ldots, P_k^\A$, namely the set $\lambda(\fa) := \{i \mid \fa \in P_i^\A\}$. Hence, $\lambda(\fa)$ contains exactly the indices $i$, $1\leq i\leq k$, for which $\fa$ belongs to $\A$'s interpretation of $P_i$. In a certain sense, this fingerprint of a domain element is all that matters for a relational monadic sentence $\varphi$ under $\A$, i.e., if two elements $\fa, \fb$ have the same fingerprint $\lambda(\fa) = \lambda(\fb)$, then $\varphi$ cannot distinguish the two. As a consequence, given a model of $\varphi$, domain elements with identical fingerprints can be merged, and the resulting structure is still a model of $\varphi$. Since there are at most $2^k$ distinct fingerprints, this entails the small model property for relational monadic sentences.

In order to treat SF sentences, we need to modify the just described idea of fingerprints in several ways:

(a) Since the definition of SF does not pose any restriction on the arity of predicate symbols, we have to generalize the idea of a fingerprint from single elements to tuples of elements.

(b) We shall concentrate on the parts of sentences which contain existentially quantified variables---by definition of SF these can be isolated from the ones containing universally quantified variables.
	The rationale behind this approach is rooted in the idea underlying the Substructure Lemma, namely that only the part of the domain is of interest, which is generated by the interpretations of function symbols. And since in the SF setting non-constant function symbols are only introduced by Skolemization, we focus on existentially quantified variables.

(c) Instead of regarding the membership of a tuple $\va$ in predicates $P^\A \subseteq \fU_\A^m$ as the characteristic feature to define fingerprints, we consider whether $\va$ satisfies certain subformulas of normal forms of $\varphi$. More precisely, we refer to the \emph{disjunctive normal form (DNF)} $\bigvee_i \psi_i$ or the \emph{conjunctive normal form (CNF)} $\bigwedge_j \psi'_j$ and ask the question whether the part $\eta_i(\vy)$ of a conjunction (or disjunction) $\psi_i = \chi_i(\vx) \wedge \eta_i(\vy)$ is true under $\A, [\vy \Mapsto \va\,]$ or not. If (and only if) it is, then the index $i$ belongs to the fingerprint of $\va$.

(d) In general settings with several quantifier alternations we cannot only rely on a single fingerprint function, but rather construct one for every existential quantifier block occurring in the quantifier prefix.
The fingerprints associated with the earlier quantifier blocks will be nested in the sense that they comprise all the potential fingerprints that may be produced starting from the current point.
For example, consider the sentence $\forall x_1 \exists y_1 \forall x_2 \exists y_2. \bigl(Q_1(x_1, x_2) \wedge R_1(y_1, y_2) \bigr) \vee \bigl(Q_2(x_1, x_2) \wedge R_2(y_1, y_2)\bigr)$ and the structure $\A$ with $\fU_\A := \{\fa, \fa', \fb\}$ and 
$R_1^\A := \{\<\fa, \fb\>, \<\fa',\fa\>, \<\fa',\fa'\>, \<\fb, \fb\>\}$, 
$R_2^\A := \{\<\fa, \fa\>, \<\fa, \fa'\>, \<\fa, \fb\>,$ $\<\fa', \fa\>, \<\fa', \fa'\>, \<\fa', \fb\>, \<\fb, \fa\>\}$.
($Q_1^\A$ and $Q_2^\A$ may be defined arbitrarily.)
First, we define a fingerprint function $\lambda_2$ for pairs of elements from $\fU_\A$ such that for every pair $\<\fc, \fd\> \in \fU_\A^2$ and every index $i = 1,2$ we have $i \in \lambda_2(\fc, \fd)$ if and only if $\A, [y_1 \Mapsto \fc, y_2 \Mapsto \fd] \models R_i(y_1, y_2)$. 
Concretely, $\lambda_2$ assigns fingerprints as follows:
\begin{center}
$
\begin{array}{c|ccc}
	\lambda_2	&	\fa		&	\fa'		&	\fb \\ 
	\hline\\[-1.5ex]
	\fa		&	\{2\}		&	\{2\}		&	\{1,2\} \\
	\fa'		&	\{1,2\}	&	\{1,2\}	&	\{2\} \\
	\fb		&	\{2\}		&	\emptyset	&	\{1\} \\
\end{array}
$
\end{center}

Based on $\lambda_2$ we next define the fingerprint function $\lambda_1$ for single domain elements such that for every element $\fc \in \fU_\A$ and every $\lambda_2$-fingerprint $S \subseteq \{1,2\}$ it holds $S \in \lambda_1(\fc)$ if and only if there is another element $\fd\in \fU_\A$ such that $S = \lambda_2(\fc, \fd)$.
In the above example, this means
$\lambda_1(\fa) = \lambda_1(\fa') = \bigl\{ \{2\}, \{1,2\} \bigr\}$, and
$\lambda_1(\fb) = \bigl\{ \emptyset, \{1\}, \{2\} \bigr\}$.
We will see later that the fact that $\lambda_2$ assigns the same fingerprint to $\fa$ and $\fa'$ entails that 
the quantifier $\exists y_1$ does not have to take both $\fa$ and $\fa'$ into account. It suffices to consider only one of them.
Hence, after Skolemizing $\exists y_1$, we may restrict the range of the Skolem term $f_{y_1}(x_1)$ under $\A$ to $\{\fa, \fb\}$ or $\{\fa', \fb\}$, but we do not have to consider the full range $\{\fa, \fa', \fb\}$.
This is what we call \emph{range-restricted Skolemization} (cf.\ Lemma~\ref{lemma:SkolemizationOneQuantifierBlock} and similar results in Section~\ref{section:SkolemizationSeveralQuantifierBlocks}).

In what follows we shall use the notation $[k]$ to abbreviate the set $\{1, \ldots, k\}$ for any positive integer $k$.
Moreover, $\fP$ shall be used as the power set operator, i.e.\ $\fP S$ denotes the set of all subsets of a given set $S$.

\subsubsection{Sentences with prefix $\exists^* \forall^* \exists^*$}\label{section:SkolemizationSimpleSentence}

We develop the following for sentences with the $\forall^* \exists^*$ quantifier prefix, because leading existential quantifiers may be replaced by constant symbols (under preservation of satisfiability).
Let $\psi(\vx, \vy)$ be a quantifier-free first-order formula without non-constant function symbols in which $\vx$ and $\vy$ are separated. 
We can transform $\varphi := \forall\vx \exists\vy. \psi(\vx, \vy)$ into an equivalent sentence $\varphi_{\text{DNF}} := \forall\vx \exists\vy. \bigvee_{k=1}^\mDNF \chi_k(\vx) \wedge \eta_k(\vy)$, such that $\chi_k$ and $\eta_k$ are conjunctions of literals and $\mDNF$ is some non-negative integer. 

Intuitively speaking, we combine two complementary fingerprint functions in this setting, where the second one will only appear implicitly.
Given a structure $\A$, for any tuple $\vb \in \fU_\A^{|\vx|}$ of domain elements assigned to $\vx$ it is only of importance which subformulas $\chi_k(\vx)$ it fulfills. 
Hence, the fingerprint function $\lambda(\vb) := \bigl\{ k \in [\mDNF] \bigm| \A,[\vx\Mapsto \vb] \models \chi_k(\vx) \bigr\}$ is a reasonable choice.
Complementary to every $\vb$ a tuple $\va \in \fU_\A^{|\vy|}$ is necessary, for which $\A, [\vy\Mapsto \va\,] \models \eta_k(\vy)$ holds for at least one $k \in \lambda(\vb)$. Note that empty $\eta_k$ are treated as the Boolean constant \emph{true}. 
Together, two such matching tuples satisfy the formula in the structure $\A$.
In order to formally classify tuples $\va$ in accordance with the fingerprint feature just described, we define the sets $\widehat{\fU}_1, \ldots, \widehat{\fU}_\mDNF \subseteq \fU_\A^{|\vy|}$ by $\widehat{\fU}_k := \bigl\{ \va \in \fU_\A^{|\vy|} \bigm| \A, [\vy \Mapsto \va\,] \models \eta_k \bigr\}$ for $k = 1, \ldots, \mDNF$. 
Invoking the axiom of choice, we pick one representative $\alpha_k \in \widehat{\fU}_k$ for every nonempty $\widehat{\fU}_k$ and fix it.
In addition, we define a mapping $\mu : \fU_\A^{|\vx|} \to [\mDNF]$ such that $\mu(\vb)$ yields some fixed index $k \in \lambda(\vb)$, for which $\widehat{\fU}_k$ is nonempty. 
If no such $k$ exists for $\vb$, $\mu(\vb)$ shall be undefined.
If $\A \models \forall\vx \exists\vy. \psi(\vx, \vy)$, then for every $\vb \in \fU_\A^{|\vx|}$ the value $\mu(\vb)$ must be defined, and we have $\A, [\vx \Mapsto \vb, \vy \Mapsto \alpha_{\mu(\vb)}] \models \psi(\vx, \vy)$, by definition of the representative $\alpha_{\mu(\vb)}$.
This means, no matter which values are assigned to the variables in $\vx$, for the valuation of the variables in $\vy$ it is sufficient to consider exclusively values from $\{ \alpha_k \mid k\in [\mDNF]\}$.
Hence, we can restrict the choice for the $\vy$ to the representatives $\alpha_1, \ldots, \alpha_{\mDNF}$.
Therefore, the sentences $\forall\vx \exists\vy. \psi(\vx, \vy)$ and $\forall\vx \exists\vy. \psi(\vx, \vy) \wedge \bigvee^{\mDNF}_{\ell=1} \bigwedge_{i=1}^{|\vy|} y_i \approx c_{\ell,i}$ are equisatisfiable, where the $c_{\ell,i}$ are fresh constant symbols.
If there is a model at all, then there is one, under which the tuples of constant symbols $\vec{c}_{1}, \ldots, \vec{c}_{\mDNF}$ are interpreted by the representatives $\alpha_1, \ldots, \alpha_{\mDNF}$, respectively.

\begin{remark}\label{remark:SkolemizationSimplerProof}
	For this simple setting there is also a shorter argument leading to an even stronger result. We have used the more complicated one, however, since it is easily applicable in the more general case in Section~\ref{section:SkolemizationSeveralQuantifierBlocks}, where the simple argument does not work in a straight-forward fashion.
	
	Due to Lemma~\ref{lemma:BasicQuantifierEquivalences}, $\varphi_{\text{DNF}} = \forall\vx \exists\vy. \bigvee_{k=1}^\mDNF \chi_k(\vx) \wedge \eta_k(\vy)$ is equivalent to $\forall\vx. \bigvee_{k=1}^\mDNF \chi_k(\vx) \wedge \exists\vy_k. \eta_k(\vy_k)$. Now \emph{inner Skolemization} (cf.\ \cite{NonnengartWeidenbach01handbook}) leads to $\forall\vx. \bigvee_{k=1}^\mDNF$ $\chi_k(\vx) \wedge \eta_k(\vy_k)\subst{\vy_k/\vc_k}$. This immediately entails equisatisfiability of the latter sentence and $\varphi_{\text{DNF}}$.
\end{remark}

Dually, we can transform the sentence $\varphi = \forall\vx \exists\vy. \psi(\vx, \vy)$ into an equivalent one $\varphi_{\text{CNF}} := \forall\vx \exists\vy. \bigwedge^{m_\text{CNF}}_{j=1} \chi'_j(\vx) \vee \eta'_j(\vy)$ in which the $\chi'_j$ and $\eta'_j$ denote disjunctions of literals and $\mCNF$ is some non-negative integer.
Given a structure $\A$, we define the fingerprint function $\lambda : \fU_\A^{|\vy|} \to \fP [\mCNF]$ by setting $\lambda(\va) := \bigl\{ j \bigm| \A,[\vy\Mapsto \va\,] \models \eta'_j(\vy) \bigr\}$. 
The fingerprints assigned by $\lambda$ induce a partition of $\fU_\A^{|\vy|}$ into at most $2^\mCNF$ equivalence classes:
two tuples $\va, \va'$ are equivalent if and only if they have the same fingerprint $\lambda(\va) = \lambda(\va')$.
For every fingerprint $S \subseteq [\mCNF]$, for which some tuple $\va$ with $\lambda(\va) = S$ exists, we fix a representative $\alpha_S \in \fU_\A^{|\vy|}$ of $\va$\,'s equivalence class, i.e.\ $\lambda(\alpha_{\lambda(\va)}) = \lambda(\va)$.

Clearly, two tuples $\va$, $\va'$ are indistinguishable by the formulas $\eta'_j(\vy)$, if they are associated with the same fingerprint. 
Analogous to our previous result, we could immediately derive equisatisfiability of $\varphi$ and the formula $\forall \vx \exists \vy. \psi(\vx, \vy) \wedge \bigvee_{j=1}^{2^\mCNF} \bigwedge_{i=1}^{|\vy|} y_i \approx c_{j,i}$, where the $c_{j,i}$ are fresh.

However, it turns out that being indistinguishable is more than we need.
In fact, even in the worst case we do not need to consider $2^\mCNF$ different representatives. 
A tuple $\va \in \fU_\A^{|\vy|}$ \emph{covers} a tuple $\va' \in \fU_\A^{|\vy|}$, denoted $\va \sqsupseteq \va'$, if and only if $\lambda(\va') \subseteq \lambda(\va)$.
Of two representatives $\alpha, \alpha'$ with $\alpha \sqsupseteq \alpha'$ we do actually only need one, namely $\alpha$. 
More formally speaking, it is sufficient to partition $\fU_\A^{|\vy|}$ into parts $\widetilde{\fU}_1, \ldots, \widetilde{\fU}_\kappaCNF$ such that
	(i) in every part $\widetilde{\fU}_\ell$ for all distinct $\va, \va' \in \widetilde{\fU}_\ell$ either $\va \sqsupseteq \va'$ or $\va' \sqsupseteq \va$, and
	(ii) for all distinct parts $\widetilde{\fU}_\ell, \widetilde{\fU}_{\ell'}$ all tuples $\va \in \widetilde{\fU}_\ell$ and $\va' \in \widetilde{\fU}_{\ell'}$ are pairwise non-covering, i.e.\ we neither have $\va \sqsupseteq \va'$ nor $\va' \sqsupseteq \va$.
But now, we have to choose the representatives more carefully: a representative $\widetilde{\alpha}_S \in \widetilde{\fU}_\ell$ has to cover all tuples $\va \in \widetilde{\fU}_\ell$. Formulated differently, we have $\widetilde{\alpha}_{\lambda(\va)} \sqsupseteq \va$ for every $\va$.
Putting things together, we observe for all tuples $\vb \in \fU_\A^{|\vx|}$ and $\va \in \fU_\A^{|\vy|}$ that
$\A, \bigl[\vx \Mapsto \vb, \vy \Mapsto \va\bigr] \models \psi(\vx, \vy)$ 
entails
$\A, \bigl[\vx \Mapsto \vb, \vy \Mapsto \widetilde{\alpha}_{\lambda(\va)}\bigr] \models \psi(\vx, \vy)$.
	
It remains to pinpoint the value of $\kappaCNF$.
Put differently, we need to determine the maximal number of pairwise non-inclusive fingerprints.
The answer is provided by Sperner's Theorem.
\begin{theorem}[Sperner's Theorem \cite{Sperner1928}]\label{theorem:Sperner}
	Let $m$ be a positive integer. Consider the lattice formed by the partially ordered set $\<\fP[m], \subseteq\>$.
	There are $\kappa = {m \choose {\lfloor m/2 \rfloor}}$ sets $M_1, \ldots, M_\kappa \subseteq [m]$ such that for all distinct $i,j$ we have $M_i \not\subseteq M_j$. There are no more than $\kappa$ such sets.
\end{theorem}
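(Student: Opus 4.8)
The statement splits into a lower bound (a family of size $\binom{m}{\lfloor m/2\rfloor}$ of pairwise incomparable sets exists) and a matching upper bound (no antichain in $\<\fP[m],\subseteq\>$ is larger). The lower bound is immediate, and I would dispose of it first: the family of all subsets of $[m]$ of cardinality exactly $\lfloor m/2\rfloor$ has $\binom{m}{\lfloor m/2\rfloor}$ members, and no two distinct sets of equal cardinality can be related by inclusion, so they are pairwise incomparable. The real content lies in the upper bound, and the plan is to establish it by a double-counting argument over the maximal chains of the Boolean lattice (Lubell's argument, yielding the so-called LYM inequality).

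A maximal chain is a sequence $\emptyset = S_0 \subsetneq S_1 \subsetneq \cdots \subsetneq S_m = [m]$ with $|S_i| = i$; such chains correspond bijectively to the $m!$ orderings of the elements of $[m]$, the $i$-th element added turning $S_{i-1}$ into $S_i$. First I would record two counts: there are exactly $m!$ maximal chains, and a fixed set $M$ with $|M| = k$ lies on exactly $k!\,(m-k)!$ of them, since such a chain must list the $k$ elements of $M$ first (in some order) and the $m-k$ remaining elements afterwards (in some order).

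Next, let $\A = \{M_1,\ldots,M_t\}$ be any antichain. The key observation is that no maximal chain can pass through two distinct members of $\A$, because any two sets lying on a common chain are comparable, contradicting incomparability. Hence the sets of chains through the various $M_i$ are disjoint, and summing their counts gives $\sum_{i=1}^t |M_i|!\,(m-|M_i|)! \le m!$. Dividing by $m!$ yields the LYM inequality $\sum_{i=1}^t \binom{m}{|M_i|}^{-1} \le 1$. Since the binomial coefficients in a row of Pascal's triangle are maximised in the middle, $\binom{m}{k} \le \binom{m}{\lfloor m/2\rfloor}$ for every $k$, so each summand is at least $\binom{m}{\lfloor m/2\rfloor}^{-1}$; therefore $t\,\binom{m}{\lfloor m/2\rfloor}^{-1} \le 1$, i.e.\ $t \le \binom{m}{\lfloor m/2\rfloor}$, as required.

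The argument is short enough that there is no single serious obstacle; the one step demanding genuine care is the unimodality claim that $\binom{m}{k}$ is maximised at $k = \lfloor m/2\rfloor$, which I would verify by examining the ratio $\binom{m}{k+1}/\binom{m}{k} = (m-k)/(k+1)$ and noting that it exceeds $1$ precisely while $k < m/2$. Finally, it is worth observing that the ``pairwise non-inclusive fingerprints'' of the surrounding text are exactly an antichain in $\<\fP[m],\subseteq\>$, so the bound $\kappaCNF = \binom{m}{\lfloor m/2\rfloor}$ needed there follows directly from this theorem.
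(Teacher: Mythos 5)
Your proof is correct. Note first that the paper does not prove this statement at all: it is quoted as a classical result with a citation to Sperner's 1928 article, so there is no in-paper argument to measure yours against. What you give is the standard Lubell--Yamamoto--Meshalkin chain-counting proof: the lower bound via the middle layer, the count of $m!$ maximal chains, the count of $k!\,(m-k)!$ chains through a fixed $k$-set, disjointness of these chain families over an antichain, the resulting inequality $\sum_i \binom{m}{|M_i|}^{-1} \le 1$, and the unimodality of the binomial coefficients verified via the ratio $(m-k)/(k+1)$. All of these steps are sound, and your reading of ``for all distinct $i,j$, $M_i \not\subseteq M_j$'' as the antichain condition in $\<\fP[m],\subseteq\>$ is the intended one. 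It is worth remarking that this is not Sperner's original argument, which proceeded by a shifting/normalized-matching step between adjacent levels of the lattice; the LYM route you chose is shorter and also delivers the strictly stronger inequality $\sum_i \binom{m}{|M_i|}^{-1} \le 1$ as a by-product, which is more than the theorem as stated requires. Your closing observation that the ``pairwise non-inclusive fingerprints'' used to bound $\kappaCNF$ form exactly such an antichain correctly identifies how the theorem is applied in Section~\ref{section:Skolemization}.
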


As a consequence, $\forall\vx \exists\vy. \psi(\vx, \vy)$ is equisatisfiable to $\forall\vx \exists\vy.$ $\psi(\vx, \vy) \wedge \bigvee^{\kappaCNF}_{j=1} \bigwedge_{i=1}^{|\vy|} y_i \approx c_{j,i}$, where $\kappaCNF := {\mCNF \choose {\lfloor \mCNF/2 \rfloor}}$ and the $c_{j,i}$ are fresh.

Together with the dual case, we observe that it is possible to restrict the range of the quantifier block $\exists \vy$ even further to $m_* := \min\bigl(\kappaCNF, \mDNF\bigr)$ different tuples of domain elements (preserving satisfiability). 
After Skolemization these restrictions apply to the freshly introduced Skolem functions and affect their range.
These results are summarized in the following lemma.

\pagebreak[2]
\begin{lemma}[Range-restricted Skolemization]\label{lemma:SkolemizationOneQuantifierBlock}
	The following sentences are pairwise equisatisfiable:
	\begin{enumerate}[label=(\arabic{*}), ref=(\arabic{*})]
		\item\label{enum:SkolemizationOneQuantifierBlock:I} $\forall\vx \exists\vy. \psi(\vx, \vy)$,
		\item $\forall\vx \exists\vy. \psi(\vx, \vy) \wedge \bigvee^{m_*}_{\ell=1} \bigwedge_{i=1}^{|\vy|} y_i \approx c_{\ell,i}$,
		\item\label{enum:SkolemizationOneQuantifierBlock:III} $\forall\vx. \psi(\vx, \vy)\subst{y_1/f_{1}(\vx), \ldots, y_{|\vy|}/f_{|\vy|}(\vx)}$\\
			$\wedge \bigvee^{m_*}_{\ell=1} \bigwedge_{i=1}^{|\vy|} f_i(\vx) \approx c_{\ell,i}$,
	\end{enumerate}
	where $m_* = \min\Bigl( \mDNF, {\mCNF \choose {\lfloor \mCNF/2 \rfloor}} \Bigr)$ and the $c_{\ell,i}$ are fresh constant symbols and the $f_i$ are Skolem functions of appropriate arity.
\end{lemma}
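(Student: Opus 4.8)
The plan is to assemble the three statements from the analysis carried out immediately before the lemma, where the substantive work has already been done. Write $\varphi := \forall\vx \exists\vy. \psi(\vx,\vy)$ for the first sentence and $\varphi_R$ for the second sentence, i.e.\ $\varphi$ conjoined with the range restriction $\bigvee^{m_*}_{\ell=1}\bigwedge_{i=1}^{|\vy|} y_i \approx c_{\ell,i}$. First I would establish that $\varphi$ and $\varphi_R$ are equisatisfiable (the equivalence of sentence~\ref{enum:SkolemizationOneQuantifierBlock:I} and the second sentence), and then that $\varphi_R$ is equisatisfiable to the Skolemized sentence~\ref{enum:SkolemizationOneQuantifierBlock:III}.

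For the equisatisfiability of $\varphi$ and $\varphi_R$, the preceding discussion proves the claim twice over: the DNF fingerprint function $\lambda(\vb) = \{k \mid \A,[\vx\Mapsto\vb]\models\chi_k\}$ shows that the witnesses for $\vy$ may be confined to the $\mDNF$ representatives $\alpha_1,\ldots,\alpha_{\mDNF}$, while the CNF fingerprint function together with the covering relation $\sqsupseteq$ and Sperner's Theorem (Theorem~\ref{theorem:Sperner}) shows that they may instead be confined to $\binom{\mCNF}{\lfloor\mCNF/2\rfloor} = \kappaCNF$ representatives. The direction $\varphi_R \models \varphi$ is immediate, since the matrix of $\varphi_R$ contains $\psi(\vx,\vy)$ as a conjunct, so any model of $\varphi_R$ yields a model of $\varphi$ after discarding the interpretations of the fresh constants. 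For the converse I would take a model $\A$ of $\varphi$ and extend it by interpreting the $m_*$ constant tuples as the representatives furnished by whichever of the two constructions attains the minimum: the analysis shows that for every assignment $\vb$ to $\vx$ a witnessing tuple $\va$ for $\vy$ can be replaced by its representative without falsifying $\psi$, whence the extended structure satisfies both $\psi(\vx,\vy)$ and the disjunction of equalities. Because each construction independently confines the witnesses to a set of its respective size, we are free to take $m_* = \min(\mDNF, \kappaCNF)$.

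The remaining equivalence, between $\varphi_R$ and sentence~\ref{enum:SkolemizationOneQuantifierBlock:III}, is an instance of \emph{inner Skolemization} of the block $\exists\vy$ sitting under the prefix $\forall\vx$ (cf.\ Remark~\ref{remark:SkolemizationSimplerProof}): introduce fresh function symbols $f_1,\ldots,f_{|\vy|}$ of arity $|\vx|$ and replace each $y_i$ by $f_i(\vx)$. Distributing the substitution $\subst{y_1/f_1(\vx),\ldots,y_{|\vy|}/f_{|\vy|}(\vx)}$ across the conjunction turns each equality $y_i \approx c_{\ell,i}$ into $f_i(\vx) \approx c_{\ell,i}$, producing exactly sentence~\ref{enum:SkolemizationOneQuantifierBlock:III}. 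As Skolemization of existential quantifiers preserves satisfiability, the two are equisatisfiable, and transitivity then yields the asserted pairwise equisatisfiability.

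The main obstacle lies not in any single direction but in ensuring that the two bounds $\mDNF$ and $\kappaCNF$ are genuinely interchangeable, so that their minimum is admissible: each of the DNF and CNF constructions must establish equisatisfiability of $\varphi$ and $\varphi_R$ in its own right, confining the witnesses to a set whose cardinality is the claimed bound rather than one depending on the particular model. The CNF side is the more delicate, since it rests on the covering partition $\widetilde{\fU}_1,\ldots,\widetilde{\fU}_{\kappaCNF}$ and the careful choice of covering representatives $\widetilde{\alpha}$, together with the combinatorial bound of Sperner's Theorem, rather than on a direct count of the conjuncts $\eta_k$.
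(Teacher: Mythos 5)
Your proposal is correct and follows essentially the same route as the paper: the paper states the lemma as a summary of the immediately preceding analysis, namely the DNF fingerprint construction confining the witnesses for $\vy$ to $\mDNF$ representatives, the CNF fingerprint construction with covering representatives and Sperner's Theorem confining them to $\kappaCNF = \binom{\mCNF}{\lfloor\mCNF/2\rfloor}$ representatives, taking the minimum of the two bounds, and finally observing that Skolemization of $\exists\vy$ turns sentence~(2) into sentence~(3) while preserving satisfiability. Your additional remarks --- that the direction from~(2) to~(1) is trivial since $\psi$ is a conjunct of the strengthened matrix, and that each of the two constructions must independently justify its own bound so that the minimum is admissible --- are accurate and consistent with the paper's argument.
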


Please note that the just stated result nicely fits together with Proposition~\ref{proposition:TransposingQuantifierBlocks}, which states the equivalence of $\forall\vx \exists\vy. \psi(\vx,\vy)$ and some sentence $\exists\vy_1 \ldots \vy_{m} \forall\vx. \psi'(\vx,\vy_1, \ldots,$ $\vy_{m})$, which results in $\forall \vx. \psi'(\vx,\vy_1, \ldots, \vy_{m})\subst{\vy_1/\vec{c}_1, \ldots, \vy_{m}/\vec{c}_{m}}$, when existentially quantified variables are Skolemized.

\subsubsection*{Small model property}
Let us call the sentence \ref{enum:SkolemizationOneQuantifierBlock:I} in Lemma~\ref{lemma:SkolemizationOneQuantifierBlock} $\varphi$ and number \ref{enum:SkolemizationOneQuantifierBlock:III} $\varphi'$. 
Employing the Substructure Lemma, we can derive the small model property of SF sentences with the $\exists^* \forall^* \exists^*$ prefix from the equisatisfiability of $\varphi$ and $\varphi'$. Suppose $\varphi$ is a satisfiable sentence belonging to SF. In particular, this means $\varphi$ does not contain any non-constant function symbols.
Let $\B$ be a model of $\varphi'$.  We define the structure $\A$ by $\fU_\A := \bigl\{ \fa \in \fU_\B \mid \text{there is some constant $c$ in $\varphi'$ such that $\fa =$}$ $\text{$c^\B$} \bigr\}$; $c^\A := c^\B$ for every constant $c$ in $\varphi'$; $f^\A(\va) := f^\B(\va)$ for every $m$-ary function symbol $f$ in $\varphi'$ and every tuple $\va \in \fU_\A^m$; $P^\A := P^\B \cap \fU_\A^m$ for every $m$-ary predicate symbol $P$ occurring in $\varphi'$.
Since $\A$ is a substructure of $\B$, it is a model of $\varphi'$. It is easy to see that $\A$ is also a model of $\varphi$. Moreover, we have defined $\A$'s universe so that it contains at most $|\consts(\varphi')| = |\consts(\varphi)| + m_* \cdot |\vy|$ elements. 

We next bound $m_*$ from above in terms of the length of $\varphi$.
Again, consider $\varphi_\text{DNF} = \forall\vx \exists\vy.$ $\bigvee_{k=1}^\mDNF \chi_k(\vx) \wedge \eta_k(\vy)$, which is equivalent to $\varphi$. Without loss of generality, we may assume the following:  (i) the conjunctions $\chi_k \wedge \eta_k$ contain only literals which appear in $\psi$ after transformation into negation normal form, and (ii) there are no distinct indices $k_1, k_2$ such that the set of literals occurring in $\chi_{k_1} \wedge \eta_{k_1}$ is a subset of the set of literals occurring in $\chi_{k_2} \wedge \eta_{k_2}$ (otherwise, $\chi_{k_2} \wedge \eta_{k_2}$ would be redundant and could be removed from the disjunction). 
Consequently, $\mDNF$ is bounded from above by the maximal number of pairwise non-inclusive subsets of the set of all literals occurring in $\varphi$.
Hence, by virtue of Sperner's Theorem, an upper bound for $\mDNF$ is ${{\len(\varphi)} \choose {\lfloor \len(\varphi)/2 \rfloor}} \leq 2^{\len(\varphi)}$, where $\len(\cdot)$ shall denote a reasonable measure of length of formulas (taking into account occurrences of quantifiers, Boolean connectives, variables, and occurrences of predicate, function, and constant symbols; we assume $\len( \varphi \rightarrow \psi ) = \len( \neg \varphi \vee \psi )$ and $\len( \varphi \leftrightarrow \psi ) = \len( (\neg \varphi \vee \psi) \wedge (\varphi \vee \neg \psi) )$).
All in all, we may conclude $|\fU_\A| \leq \len(\varphi) + 2^{\len(\varphi)} \cdot \len(\varphi) \leq 2^{3\cdot \len(\varphi)}$. This means that if $\varphi$ is satisfiable, then it has a model of size at most $2^{3\cdot \len(\varphi)}$.

\subsubsection*{Worst-case time complexity}

Lewis employed the following lemma in \cite{Lewis1980} to find upper bounds on the required time for Bernays--Sch\"onfinkel sentences and relational monadic sentences without equality.

\begin{lemma}[\cite{Lewis1980}, Proposition~3.2]
	\label{lemma:Complexity}
	Let $\varphi$ be a first-order sentence in prenex normal form containing $n$ universally quantified variables.
	The question whether $\varphi$ has a model of cardinality $m$ can be decided nondeterministically in time $p\bigl(m^n \cdot \len(\varphi)\bigr)$ for some polynomial $p$.
\end{lemma}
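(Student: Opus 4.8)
The plan is to establish the cardinality bound by a counting argument over candidate structures, then apply nondeterministic guessing. The key idea is that once we fix a universe of size $m$, the number of distinct structures over the relevant signature is finite, and we can guess one and verify the sentence on it efficiently. Since $\varphi$ is in prenex normal form, write it as $Q_1 w_1 \ldots Q_r w_r. \chi$ with $\chi$ quantifier-free and with exactly $n$ of the $w_i$ universally quantified. To check whether $\varphi$ holds in a candidate structure of size $m$, the naive approach evaluates the quantifier prefix by iterating over all assignments, which would cost $m^r$ — too much if there are many existential quantifiers. The trick Lewis uses is to \emph{guess} the witnesses for the existential variables rather than search for them.

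First I would set up the nondeterministic procedure. On a universe $\{1, \ldots, m\}$, nondeterministically guess the interpretation of every predicate and function symbol occurring in $\varphi$; the size of this guess is polynomial in $m^{a}$ where $a$ bounds the arities, hence polynomial in $m^n \cdot \len(\varphi)$ under the convention that arities are bounded by the number of variables in scope (or absorbed into $\len(\varphi)$). Next, to evaluate $Q_1 w_1 \ldots Q_r w_r.\chi$, process the prefix from the outside in: each time an existential quantifier is encountered, nondeterministically guess a value in $\{1,\ldots,m\}$ for that variable; each time a universal quantifier is encountered, we must verify the remaining formula for \emph{all} $m$ choices. The crucial observation is that the cost is dominated by the universal branching: after all existential variables have been guessed, we are left with checking a $\forall^n$-statement over a fixed quantifier-free matrix, and iterating over all $m^n$ assignments to the $n$ universal variables, evaluating $\chi$ on each (which takes time polynomial in $\len(\varphi)$ per assignment), costs $m^n \cdot \len(\varphi)$ up to a polynomial factor.

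The main obstacle — and the reason the bound is $m^n$ rather than $m^r$ — is handling the interleaving of existential and universal quantifiers correctly under nondeterminism. When an existential quantifier lies \emph{inside} the scope of a universal one, a single guessed witness will not work for all values of the outer universal variable. The standard resolution is to first Skolemize, or equivalently to guess, for each existential variable $w_i$ under the scope of universal variables $w_{j_1}, \ldots, w_{j_s}$, an entire function (a table) from $\{1,\ldots,m\}^s$ to $\{1,\ldots,m\}$ giving the witness for each combination of the governing universal values. Each such table has at most $m^n$ entries and there are at most $\len(\varphi)$ existential variables, so guessing all the tables costs $O(m^n \cdot \len(\varphi))$. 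Having guessed these Skolem tables, the sentence reduces to a universal statement $\forall w. \chi'$ over the enlarged signature, which is then verified deterministically by the $m^n \cdot \len(\varphi)$ iteration described above.

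I would then assemble the time accounting: guessing the interpretations and the Skolem tables, plus the exhaustive universal check, all fit within $p(m^n \cdot \len(\varphi))$ for a suitable polynomial $p$, and the existence of an accepting computation branch coincides exactly with $\varphi$ having a model of cardinality $m$. Soundness and completeness of the nondeterministic procedure follow because a satisfying interpretation together with a choice of Skolem functions is precisely what an accepting branch guesses, and conversely any accepting branch exhibits such an interpretation. The only subtlety to get right is that the Skolem-table representation never exceeds $m^n$ in size, which holds because no existential variable is governed by more than $n$ universal variables; this is where I expect the bookkeeping to require the most care, but it presents no real conceptual difficulty.
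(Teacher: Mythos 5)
The paper does not prove this lemma at all; it is imported verbatim from Lewis (Proposition~3.2 of the cited 1980 paper), so there is no in-paper argument to compare against. Your overall architecture --- replace each existential variable by a guessed Skolem table indexed by the at most $n$ governing universal variables (hence at most $m^n$ entries per table), then verify the resulting purely universal sentence by brute force over all $m^n$ assignments --- is the right one and is essentially how the result is actually obtained.

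There is, however, a genuine gap in your first step, ``nondeterministically guess the interpretation of every predicate and function symbol.'' The full table of an $a$-ary symbol has $m^a$ entries, and nothing in the hypotheses bounds $a$ by $n$: the arity is only bounded by $\len(\varphi)$, and your parenthetical convention does not hold. Concretely, for $\exists y_1 \ldots y_{10}\, \forall x.\; P(y_1,\ldots,y_{10}) \wedge Q(x)$ we have $n=1$, yet the table for $P$ has $m^{10}$ entries, which is not bounded by $q\bigl(m^{1}\cdot \len(\varphi)\bigr)$ for any polynomial $q$ fixed in advance of the input; since $p$ must be uniform over all sentences, the ``guess the whole structure'' step already exceeds the claimed time bound. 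The repair is to guess only the fragment of the structure that the verification actually touches: across the $m^n$ instantiations of the Skolemized matrix there are at most $m^n\cdot\len(\varphi)$ term occurrences and at most $m^n\cdot\len(\varphi)$ atom occurrences, so one guesses a value in $\{1,\ldots,m\}$ for each term occurrence and a truth value for each atom occurrence, checks consistency in polynomial time (same function symbol applied to equal argument values must receive equal values; same predicate symbol on equal argument tuples must receive equal truth values; equations $s\approx t$ are evaluated by comparing the guessed values of $s$ and $t$), and observes that any consistent partial diagram extends to a genuine structure on $\{1,\ldots,m\}$ by completing the interpretations arbitrarily. With that replacement your time accounting and your soundness/completeness argument go through.
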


Together with our previous results and known lower bounds on time complexity (cf.\ \cite{Lewis1980}) this yields the following theorem.

\begin{theorem}\label{theorem:Complexity}
	Satisfiability of sentences in SF with the quantifier prefix $\exists^* \forall^* \exists^*$ is NEXPTIME-complete.
\end{theorem}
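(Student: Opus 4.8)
The plan is to establish the NEXPTIME upper bound and a matching lower bound separately, relying on the small-model result already derived in this section together with the counting lemma of Lewis.

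For the upper bound, I would first note that a leading block of existential quantifiers $\exists\vz$ can be eliminated in favour of fresh constant symbols without affecting satisfiability, so that deciding an $\exists^*\forall^*\exists^*$ sentence reduces to deciding a sentence $\varphi$ of the form $\forall\vx\exists\vy.\psi(\vx,\vy)$ with $\psi$ quantifier-free. The preceding small-model analysis already shows that if such a $\varphi$ is satisfiable, then it has a model of size at most $2^{3\cdot\len(\varphi)}$; write $m := 2^{3\cdot\len(\varphi)}$ and let $n := |\vx|$ denote the number of universally quantified variables. A nondeterministic decision procedure then guesses a structure of size $m$ and checks whether it is a model. By Lemma~\ref{lemma:Complexity}, this check runs nondeterministically in time $p\bigl(m^n\cdot\len(\varphi)\bigr)$ for some polynomial $p$. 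Since $m^n = 2^{3n\cdot\len(\varphi)} \leq 2^{3\cdot\len(\varphi)^2}$, the whole procedure runs in nondeterministic time $2^{O(\len(\varphi)^2)}$, which is within NEXPTIME. This settles membership.

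For the lower bound, I would invoke the fact that SF with prefix $\exists^*\forall^*\exists^*$ subsumes the BSR (Bernays--Sch\"onfinkel--Ramsey) Fragment, whose satisfiability problem is already known to be NEXPTIME-hard: indeed, any BSR sentence $\exists\vz\,\forall\vx.\psi$ is trivially an instance of our prefix with an empty trailing existential block, and it lies in SF by Theorem~\ref{theorem:InclusionSF}. The classical hardness result for BSR satisfiability---as referenced through \cite{Lewis1980}---therefore transfers directly, giving NEXPTIME-hardness of the problem at hand. Combining hardness with the membership argument yields NEXPTIME-completeness.

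The main obstacle is the subtle interplay in the upper-bound estimate between the two exponents: the small-model bound $m$ is itself exponential in $\len(\varphi)$, and Lewis's lemma raises it to the power $n$, the number of universal variables. One must check carefully that $n \leq \len(\varphi)$ so that $m^n$ remains singly exponential rather than blowing up to a double exponential; this is where the \emph{single}-exponential nature of the small-model bound for this specific prefix (as opposed to the double-exponential or non-elementary bounds for the more general fragments) is essential. A secondary point is to confirm that eliminating the leading $\exists\vz$ by constants increases $\len(\varphi)$ only by a polynomial amount and leaves the formula in the required $\forall^*\exists^*$ shape to which the small-model analysis applies.
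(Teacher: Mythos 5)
Your proposal is correct and follows essentially the same route as the paper's proof: replace the leading existential block by Skolem constants, invoke the single-exponential small-model bound $2^{3\cdot\len(\varphi)}$ established for the $\forall^*\exists^*$ case, apply Lewis's Lemma~\ref{lemma:Complexity} to get a nondeterministic running time of $p\bigl(2^{3\cdot\len(\varphi)^2}\cdot\len(\varphi)\bigr)$, and obtain hardness from the NEXPTIME-hardness of the Bernays--Sch\"onfinkel prefix class, which is contained in this subfragment by Theorem~\ref{theorem:InclusionSF}. Your explicit check that the number of universal variables is bounded by $\len(\varphi)$, so that $m^n$ stays singly exponential, is exactly the (implicit) calculation the paper performs.
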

\begin{proof}
	Let $\varphi := \exists \vz\, \forall \vx \exists \vy. \psi$ be an SF sentence, in which $\psi$ is quantifier-free.
	Due to previous observations we know that the sentence $\varphi'' := \forall\vx \exists\vy. \psi\subst{\vz/\vec{d}\:} \wedge \bigvee^{m_*}_{\ell=1} \bigwedge_{i=1}^{|\vy|} y_i \approx c_{\ell,i}$ (for Skolem constants $d_1, \ldots, d_{|\vz|}$) has a model (based on a universe with at most $2^{3\cdot \len(\varphi)}$ elements) if and only if $\varphi$ has one. Clearly, every model of $\varphi''$ is also a model of $\varphi$.
	By Lemma \ref{lemma:Complexity}, we can nondeterministically check whether $\varphi$ has a model of size $2^{3\cdot \len(\varphi)}$ in at most $p\bigl( 2^{(3\cdot \len(\varphi)^2)} \cdot \len(\varphi) \bigr)$ computational steps for some polynomial $p$.
	
	\citet{Lewis1980} has shown NEXPTIME-hardness of satisfiability of BS sentences, i.e.\ of $\exists^* \forall^*$ sentences. By Theorem~\ref{theorem:InclusionSF}, these are included in the $\exists^*\forall^*\exists^*$ subfragment of SF. 
\end{proof}
It is worth noting that satisfiability of $\exists^* \forall^3 \exists^*$ sentences, in which variables are not separated, is known to be undecidable for several subcases, see \cite{Borger1997} for references.

\subsubsection{Sentences with several blocks of quantifiers}\label{section:SkolemizationSeveralQuantifierBlocks}
As in the previous section we replace leading existential quantifiers with constant symbols in this section.

\subsubsection*{The special case of strong separation}

Consider a sentence $\varphi := \forall \vx_1 \exists \vy_1 \ldots \forall \vx_n \exists \vy_n. \psi$ for some quanti\-fier-free first-order formula $\psi$ without non-constant function symbols. We assume that the sets $\vy_1, \ldots, \vy_n$ and $\vx_1 \cup \ldots \cup \vx_n$ are all pairwise separated in $\psi$. 
Hence, we can transform $\psi$ into a disjunction of conjunctions of the form
	$\psi_\text{DNF} := \bigvee_{j=1}^{\mDNF} \chi_j(\vx_1, \ldots, \vx_n) \wedge \bigwedge_{k=1}^n \eta_{j,k}(\vy_k)$,
in which the $\chi_j$ and the $\eta_{j,k}$ are conjunctions of literals. 
The additional requirement of $\vy_1, \ldots, \vy_n$ being pairwise separated relieves us from the need to use nested fingerprints, since at the level of atoms every tuple $\vy_k$ occurs isolated from the others, i.e.\ the values assigned to one tuple $\vy_{k_1}$ do not influence the truth values of the subformulas $\eta_{j,k_2}(\vy_{k_2})$ under $\A$ for $k_2 \neq k_1$. Nested fingerprints will be necessary in the general case later on.

Let $\A$ be an arbitrary structure over the signature of $\varphi$. For every index $k \leq n$ we define a fingerprint function 
	$\lambda_k : \fU_\A^{|\vy_k|} \to \fP[\mDNF]$
such that for every tuple $\va_k \in \fU_\A^{|\vy_k|}$ it holds $\lambda_k(\va_k) := \bigl\{ j \bigm| \A, [\vy_k \Mapsto \va_k] \models \eta_{j,k} \bigr\}$.
If two tuples $\va_k$, $\va'_k$ are assigned the same fingerprint by $\lambda_k$, then they result in the same truth value for $\eta_{j,k}(\vy_k)$ under $\A$ , i.e.\ for every $j$ we have $\A,[\vy_k \Mapsto \va_k] \models \eta_{j,k}$ if and only if $\A,[\vy_k \Mapsto \va'_k] \models \eta_{j,k}$. Since the variables in $\vy_k$ do not occur in other subformulas than the $\eta_{j,k}$, we conclude the following for every $k$ and an arbitrary variable assignment $\beta$:
$\A,\beta[\vy_k \Mapsto \va_k] \models \forall \vx_{k+1} \exists \vy_{k+1} \ldots \forall \vx_n \exists \vy_n. \psi_\text{DNF}$ holds if and only if $\A,\beta[\vy_k \Mapsto \va'_k] \models \forall \vx_{k+1} \exists \vy_{k+1} \ldots \forall \vx_n \exists \vy_n. \psi_\text{DNF}$. In other words, $\va_k$ and $\va'_k$ are interchangeable as values for $\vy_k$, whenever they are assigned the same fingerprint by $\lambda_k$.
		
Every $\lambda_k$ induces a partition of $\fU_\A^{|\vy_k|}$ into at most $2^\mDNF$ equivalence classes of tuples with identical fingerprints with respect to $\lambda_k$. 
As we have done before, we can define representatives $\alpha_{k,S}$ for every fingerprint $S \subseteq [\mDNF]$, for which there is some tuple $\va$ with $\lambda_k(\va) = S$.
In the end, in analogy to the simpler case, we may derive equisatisfiability of $\varphi$ and  
	$\forall \vx_1 \exists \vy_1 \ldots \forall \vx_n \exists \vy_n. \psi \wedge \bigwedge_{k=1}^{n} \bigvee^{2^\mDNF}_{j=1} \bigwedge_{i=1}^{|\vy_k|} y_{k,i} \approx c_{k,j,i}$,
where all the $c_{k,j,i}$ are fresh constant symbols.

Using the techniques we have seen earlier, we can improve this result in two ways.
For one thing, the last quantifier block $\exists \vy_n$ does not need to range over $2^{\mDNF}$ tuples of constants, but $\mDNF$ are sufficient, as we have already seen in Section~\ref{section:SkolemizationSimpleSentence}.
Secondly, the $\vy_1, \ldots, \vy_{n-1}$ do not need to range over $2^{\mDNF}$ tuples either, since we can stick to \emph{covering} representatives instead of representatives with \emph{exactly the same fingerprint} and then apply Sperner's Theorem again. Thus, we need to consider at most $\kappaDNF := {{\mDNF}\choose{\lfloor \mDNF/2 \rfloor}} \leq 2^{\mDNF}$ representatives for every $k < n$. 

Consequently, we may conclude that the original $\varphi$ is equisatisfiable to 
$\forall \vx_1 \exists \vy_1 \ldots \forall \vx_n \exists \vy_n. \psi \wedge$ $\bigr( \bigwedge_{k=1}^{n-1} \bigvee^{\kappaDNF}_{j=1} \bigwedge_{i=1}^{|\vy_k|} y_{k,i} \approx c_{k,j,i} \bigr) \wedge \bigl( \bigvee^{\mDNF}_{j=1} \bigwedge_{i=1}^{|\vy_n|} y_{n,i} \approx c_{n,j,i} \bigr)$
where all the $c_{k,j,i}$ are fresh.

Applying a similar analysis as in Section~\ref{section:SkolemizationSimpleSentence} leads to the observation that if $\varphi$ has a model, then it has one with at most $|\consts(\varphi)| + \kappaDNF \cdot \sum_{k=1}^{n-1} |\vy_k| + \mDNF \cdot |\vy_n| \leq 2^{3\cdot 2^{\len(\varphi)}}$ domain elements.

\begin{theorem} \label{theorem:SFMutExclusive}
	Let $\varphi := \exists\vz\, \forall \vx_1 \exists \vy_1 \ldots \forall \vx_n \exists \vy_n. \psi$ be a sentence in SF for some quantifier-free $\psi$ without non-constant function symbols. If the sets $\vy_1, \ldots, \vy_n$ are pairwise separated in $\psi$, then satisfiability of $\varphi$ can be decided nondeterministically in time that is at most double exponential in $\len(\varphi)$.
\end{theorem}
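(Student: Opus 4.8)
The plan is to assemble the double-exponential small model bound established in the preceding paragraphs with the nondeterministic model-checking estimate of Lewis (Lemma~\ref{lemma:Complexity}), mirroring the structure of the proof of Theorem~\ref{theorem:Complexity} for the simpler prefix. First I would dispose of the leading existential block: replacing $\exists\vz$ by fresh Skolem constants $\vec{d}$ yields the equisatisfiable sentence $\varphi_0 := \forall \vx_1 \exists \vy_1 \ldots \forall \vx_n \exists \vy_n.\ \psi\subst{\vz/\vec{d}\:}$, which is still in prenex normal form, still enjoys the strong-separation property, and whose length exceeds $\len(\varphi)$ by at most $|\vz|$. Hence it suffices to decide satisfiability of $\varphi_0$.

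Next I would invoke the bound derived above for the strongly separated case: if $\varphi_0$ is satisfiable, then it has a model whose universe has at most $N := 2^{3\cdot 2^{\len(\varphi)}}$ elements. This is exactly the content of the range-restricted Skolemization argument, in which each block $\exists\vy_k$ is restricted to a bounded collection of constant tuples (using covering representatives and Sperner's Theorem, giving the $\kappaDNF$ and $\mDNF$ counts), and the Substructure Lemma then lets us retain only the interpretations of those finitely many constants. Consequently, satisfiability of $\varphi_0$ is equivalent to the existence of a model of cardinality at most $N$.

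The decision procedure therefore nondeterministically guesses a cardinality $m \in \{1, \ldots, N\}$ and applies Lemma~\ref{lemma:Complexity} to check whether $\varphi_0$ has a model of cardinality $m$. Writing $n_\forall := |\vx_1| + \cdots + |\vx_n| \le \len(\varphi)$ for the number of universally quantified variables in $\varphi_0$, one such check runs nondeterministically in time $p\bigl(m^{n_\forall}\cdot \len(\varphi_0)\bigr)$ for some polynomial $p$. It remains to confirm that this stays double exponential: since $m \le N = 2^{3\cdot 2^{\len(\varphi)}}$ and $n_\forall \le \len(\varphi)$, we obtain $m^{n_\forall} \le 2^{3\,\len(\varphi)\cdot 2^{\len(\varphi)}} = 2^{2^{O(\len(\varphi))}}$, so applying $p$ and multiplying by $\len(\varphi_0)$ preserves the double-exponential bound.

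The step to watch is precisely this final estimate. The count of universal variables enters as an exponent on the already double-exponential cardinality $N$, so one must verify that the extra factor $n_\forall \le \len(\varphi)$ in the exponent leaves the running time at $2^{2^{O(\len(\varphi))}}$ rather than pushing it to a triple exponential; the fact that $\len(\varphi)\cdot 2^{\len(\varphi)}$ is still of the form $2^{O(\len(\varphi))}$ is what makes this work. All the genuinely new reasoning—the fingerprint functions $\lambda_k$, the passage from exact fingerprints to covering representatives, and the resulting cardinality bound—has already been carried out above, so the theorem follows by combining these ingredients with Lemma~\ref{lemma:Complexity}.
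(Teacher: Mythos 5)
Your proposal is correct and follows exactly the route the paper intends: the theorem is stated immediately after the derivation of the $2^{3\cdot 2^{\len(\varphi)}}$ small model bound for the strongly separated case, and the implicit proof is precisely the combination of that bound with Lemma~\ref{lemma:Complexity}, mirroring the proof of Theorem~\ref{theorem:Complexity}. Your final estimate $N^{n_\forall} \le 2^{3\,\len(\varphi)\cdot 2^{\len(\varphi)}} = 2^{2^{O(\len(\varphi))}}$ is the right check and is carried out correctly.
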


\subsubsection*{The general case}

Consider a sentence $\varphi := \forall \vx_1 \exists \vy_1 \ldots \forall \vx_n \exists \vy_n. \psi$ for some quantifier-free $\psi$ without non-constant function symbols in which the sets $\vx_1 \cup \ldots \cup \vx_n$ and $\vy_1 \cup \ldots \cup \vy_n$ are separated. 
We can transform $\psi$ into a disjunction of conjunctions 
	$\psi_\text{DNF} := \bigvee_{j=1}^{\mDNF}$ $\chi_j(\vx_1, \ldots, \vx_n) \wedge \eta_j(\vy_1, \ldots, \vy_n)$
in which the $\chi_j$ and the $\eta_j$ are conjunctions of literals. 
As we have already announced in the beginning of Section~\ref{section:Skolemization}, we need to deal with a nested form of fingerprints in the general case of several quantifier alternations.
The reason is that the truth values of the $\eta_j$ depend on the values assigned to variables across multiple existential quantifier blocks.

For the following definitions we enhance the power set operator $\fP$ by allowing for iteration: $\fP^k S$ shall denote the $k$-fold application of $\fP$ to $S$.
Let $\A$ be an arbitrary structure over the signature of $\varphi$.
We inductively define fingerprint functions $\lambda_{n-1}, \ldots, \lambda_1$ with signatures $\lambda_k : \fU_\A^{\sum_{j=1}^k |\vy_j|} \to \fP^{n-k} [\mDNF]$ as follows.

\begin{itemize}
	\item $\lambda_{n-1} : \fU_\A^{\sum_{j=1}^{n-1} |\vy_j|} \to \fP [\mDNF]$ 
		such that for all tuples $\va_1 \in \fU_\A^{|\vy_1|}, \ldots, \va_{n-1} \in \fU_\A^{|\vy_{n-1}|}$ and every $j \in [\mDNF]$ it holds
		$j \in \lambda_{n-1}(\va_1, \ldots, \va_{n-1})$ if and only if there exists a tuple $\va_n$ such that $\A, [\vy_1 \Mapsto \va_1, \ldots, \vy_{n-1} \Mapsto \va_{n-1}, \vy_n \Mapsto \va_n] \models \eta_{j}$;
	\item $\lambda_{n-2} : \fU_\A^{\sum_{j=1}^{n-2} |\vy_j|} \to \fP^2 [\mDNF]$
		such that for all tuples $\va_1 \in \fU_\A^{|\vy_1|}, \ldots, \va_{n-2} \in \fU_\A^{|\vy_{n-2}|}$ and every $S \in \fP [\mDNF]$ it holds
		$S \in \lambda_{n-2}(\va_1, \ldots, \va_{n-2})$ if and only if there exists a tuple $\va_{n-1}$ such that $\lambda_{n-1}(\va_1, \ldots, \va_{n-2}, \va_{n-1}) = S$;\\[-3ex]
	\item[] \qquad\vdots
	\item $\lambda_{1} : \fU_\A^{|\vy_1|} \to \fP^{n-1} [\mDNF]$
		such that for every tuple $\va_1 \in \fU_\A^{|\vy_1|}$ and every $S \in \fP^{n-2} [\mDNF]$ it holds
		$S \in \lambda_{1}(\va_1)$ if and only if there exists a tuple $\va_{2}$ such that $\lambda_{2}(\va_1, \va_{2}) = S$.
\end{itemize}
The following lemma expresses that sequences of tuples with identical fingerprints may be interchanged without affecting semantics.

\begin{lemma}
	For every $k < n$ and all tuples $\va_1, \va'_1 \in \fU_\A^{|\vy_1|}, \ldots,$ $\va_{k}, \va'_{k} \in \fU_\A^{|\vy_{k}|}$, $\vb_1 \in \fU_\A^{|\vx_1|}, \ldots, \vb_{k} \in \fU_\A^{|\vx_{k}|}$ if $\lambda_k(\va_1, \ldots, \va_{k}) = \lambda_k(\va'_1, \ldots, \va'_{k})$ then it holds\\
	(i) $\A, [\vx_1 \Mapsto \vb_1, \ldots, \vx_{k} \Mapsto \vb_k, \vy_1 \Mapsto \va_1, \ldots, \vy_k \Mapsto \va_k] \models \forall \vx_{k+1}\exists \vy_{k+1} \ldots \forall \vx_n \exists \vy_n. \psi$\\
	if and only if\\
	(ii) $\A, [\vx_1 \Mapsto \vb_1, \ldots, \vx_{k} \Mapsto \vb_k, \vy_1 \Mapsto \va'_1, \ldots, \vy_k \Mapsto \va'_k] \models \forall \vx_{k+1} \exists \vy_{k+1} \ldots \forall \vx_n \exists \vy_n. \psi$.
\end{lemma}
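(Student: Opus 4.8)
The plan is to prove the statement by downward induction on $k$, starting from $k = n-1$ and proceeding to $k=1$; equivalently, induction on $n-k$. Throughout I replace $\psi$ by its disjunctive normal form $\psi_\text{DNF} = \bigvee_{j=1}^{\mDNF} \chi_j(\vx_1, \ldots, \vx_n) \wedge \eta_j(\vy_1, \ldots, \vy_n)$ and exploit separation, which guarantees that every $\chi_j$ depends only on the universally quantified variables and every $\eta_j$ only on the existentially quantified ones.

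For the base case $k = n-1$ I would fix arbitrary values $\vb_1, \ldots, \vb_{n-1}$ for $\vx_1, \ldots, \vx_{n-1}$ and examine $\forall \vx_n \exists \vy_n. \psi_\text{DNF}$ under the assignment $\vy_1 \Mapsto \va_1, \ldots, \vy_{n-1} \Mapsto \va_{n-1}$. For each fixed value $\vc$ of $\vx_n$ every $\chi_j(\vb_1, \ldots, \vb_{n-1}, \vc)$ is already a constant truth value, so $\exists \vy_n. \psi_\text{DNF}$ collapses to $\bigvee_{j : \chi_j \text{ holds}} \exists \vy_n. \eta_j$. By the definition of $\lambda_{n-1}$, the disjunct $\exists \vy_n. \eta_j$ is satisfied precisely when $j \in \lambda_{n-1}(\va_1, \ldots, \va_{n-1})$. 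Hence the whole formula holds iff for every $\vc$ there is some $j \in \lambda_{n-1}(\va_1, \ldots, \va_{n-1})$ with $\chi_j(\vb_1, \ldots, \vb_{n-1}, \vc)$ true, and this condition depends on $\va_1, \ldots, \va_{n-1}$ only through $\lambda_{n-1}(\va_1, \ldots, \va_{n-1})$. Equality of the fingerprints therefore gives (i) iff (ii).

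For the inductive step I assume the claim for $k+1$ and abbreviate the tail formula $\forall \vx_{k+2} \exists \vy_{k+2} \ldots \forall \vx_n \exists \vy_n. \psi$ by $\Phi$. The induction hypothesis states that, for fixed $\vx$-values, whether $\Phi$ holds under $\vy_1 \Mapsto \va_1, \ldots, \vy_{k+1} \Mapsto \va_{k+1}$ depends on these tuples only through $\lambda_{k+1}(\va_1, \ldots, \va_{k+1})$. This lets me define a truth value $\mathrm{Sat}(\vb_1, \ldots, \vb_{k+1}, S)$ for every fingerprint $S$ in the range of $\lambda_{k+1}$, namely the common truth value of $\Phi$ across all tuple sequences carrying that fingerprint. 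Now, for fixed $\vb_1, \ldots, \vb_{k+1}$, the formula $\exists \vy_{k+1}. \Phi$ holds under $\vy_1 \Mapsto \va_1, \ldots, \vy_k \Mapsto \va_k$ iff some $\va_{k+1}$ makes $\mathrm{Sat}(\vb_1, \ldots, \vb_{k+1}, \lambda_{k+1}(\va_1, \ldots, \va_k, \va_{k+1}))$ true. By the defining equation $\lambda_k(\va_1, \ldots, \va_k) = \{ \lambda_{k+1}(\va_1, \ldots, \va_k, \va_{k+1}) \mid \va_{k+1} \in \fU_\A^{|\vy_{k+1}|} \}$, this is equivalent to the existence of some $S \in \lambda_k(\va_1, \ldots, \va_k)$ with $\mathrm{Sat}(\vb_1, \ldots, \vb_{k+1}, S)$ true. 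The resulting condition depends on $\va_1, \ldots, \va_k$ only through $\lambda_k(\va_1, \ldots, \va_k)$; since the outermost quantifier $\forall \vx_{k+1}$ merely ranges over the choices of $\vb_{k+1}$, equality of the $\lambda_k$-fingerprints yields (i) iff (ii).

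I expect the main obstacle to lie in the bookkeeping of the inductive step rather than in any single clever idea: one must verify that the induction hypothesis genuinely makes $\mathrm{Sat}(\cdot)$ well defined---satisfaction being constant on each fingerprint class irrespective of the chosen representatives---and then correctly translate the existential quantifier over tuples $\va_{k+1}$ into an existential quantifier over the fingerprints collected by the recursive definition of $\lambda_k$. Separation is used essentially only in the base case, where it keeps the $\chi_j$ and $\eta_j$ decoupled so that $\exists \vy_n$ may be distributed over the disjunction.
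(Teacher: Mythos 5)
Your proposal is correct and follows essentially the same route as the paper: downward induction from $k=n-1$ to $k=1$, with the base case resolved by reading off membership $j \in \lambda_{n-1}(\va_1,\ldots,\va_{n-1})$ from the DNF decomposition into $\chi_j$ and $\eta_j$, and the inductive step resolved by converting an existential witness $\va_{k+1}$ into a witness with the same $\lambda_{k+1}$-fingerprint over the primed tuples, which is guaranteed by the set equality $\lambda_k(\va_1,\ldots,\va_k) = \lambda_k(\va'_1,\ldots,\va'_k)$. Your packaging of the induction hypothesis as a well-defined truth value $\mathrm{Sat}(\cdot,S)$ on fingerprint classes is only a cosmetic reformulation of the paper's direct witness-swapping argument.
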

\begin{proof}
	We proceed by induction from $k=n-1$ to $k=1$.
	For the sake of readability, we abbreviate sequences such as $\vx_1 \Mapsto \vb_1, \ldots, \vx_k \Mapsto \vb_{k}$ by $\vx_{\leq k} \Mapsto \vb_{\leq k}$.
	
	Suppose $k=n-1$. Let $S$ be the value of $\lambda_{n-1}(\va_1, \ldots, \va_{n-1})$ (which we assume to be identical to $\lambda_{n-1}(\va'_1, \ldots,$ $\va'_{n-1})$). By definition of $\lambda_{n-1}$, $S$ is a subset of $[\mDNF]$.	
	Assume (i) holds and let $\vc \in \fU_\A^{|\vx_n|}$ be arbitrary. There must be some index $j \in [\mDNF]$ and some $\vd \in \fU_\A^{|\vy_n|}$ such that
	$\A, [\vx_{\leq n-1} \Mapsto \vb_{\leq n-1}, \vx_{n} \Mapsto \vc\,] \models \chi_j$ and
	$\A, [\vy_{\leq n-1} \Mapsto \va_{\leq n-1}, \vy_n \Mapsto \vd\,] \models \eta_j$. 
	Hence, $j \in S$ and thus there is some $\vd' \in \fU_\A^{|\vy_n|}$ such that $\A, [\vy_{\leq n-1} \Mapsto \va'_{\leq n-1}, \vy_n \Mapsto \vd'] \models \eta_j$. Therefore, (ii) must hold too. Since this argument is completely symmetric, (i) holds if and only if (ii) does.
	
	Suppose that $k < n-1$ and let $S := \lambda_{k}(\va_1, \ldots, \va_{k}) = \lambda_{k}(\va'_1, \ldots,$ $\va'_{k})$.	
	Assume (i) and let $\vc \in \fU_\A^{|\vx_{k+1}|}$ be arbitrary. Then there must be some tuple $\vd \in \fU_\A^{|\vy_{k+1}|}$ such that 
	$\A, [\vx_{\leq k} \Mapsto \vb_{\leq k},$ $\vx_{k+1} \Mapsto \vc, \vy_{\leq k} \Mapsto$ $\va_{\leq k}, \vy_{k+1} \Mapsto \vd\,] \models \forall \vx_{k+2} \exists \vy_{k+2} \ldots \forall \vx_n \exists \vy_n. \psi$.
	Because of $\lambda_{k+1}(\va_1, \ldots, \va_k, \vd) \in S$, there must be some tuple $\vd' \in \fU_\A^{|\vy_{k+1}|}$ such that $\lambda_{k+1}(\va'_1, \ldots,$ $\va'_k, \vd') = \lambda_{k+1}(\va_1, \ldots, \va_k,$ $\vd)$. 
	Hence, by induction, we get $\A, [\vx_{\leq k} \Mapsto \vb_{\leq k}, \vx_{k+1} \Mapsto \vc, \vy_{\leq k} \Mapsto$ $\va'_{\leq k}, \vy_{k+1} \Mapsto$ $\vd'] \models \forall \vx_{k+2} \exists \vy_{k+2} \ldots \forall \vx_n \exists \vy_n. \psi$. Consequently, (ii) must hold, too. Again, by symmetry, (ii) does also entail (i).
\end{proof}
Having defined the functions $\lambda_1, \ldots, \lambda_{n-1}$ and having shown that tuples of elements, which are associated with the same fingerprint, are interchangeable, we can now employ the same methods as before: 
(i)~partition the sets $\fU_\A^{\sum_{j=1}^{k} |\vy_k|}$ in accordance with the fingerprints assigned by the $\lambda_k$, 
(ii)~fix representatives $\alpha_{k,S}^\ell$ for every $k$ and every occurring fingerprint $S$ (the $\ell$ accounts for multiple representatives with fingerprint $S$ at level $k$), 
(iii)~restrict the range of quantifier blocks $\exists \vy_k$ to (parts of) the representatives $\alpha_{k,S}^\ell$.
Step (ii) is slightly more complicated in this setting, because the truth value of the subformulas $\eta_j(\vy_1, \ldots, \vy_n)$ under $\A$ depends on all the values assigned to $\vy_1, \ldots, \vy_n$. 
The consequence is not only a nesting of fingerprints but also a nesting of representatives: 
the $\alpha_{k,S}^\ell$ at level $k$ are extensions of some $\alpha_{k-1,S'}^{\ell'}$ at level $k-1$ with $S \in S'$, respectively. 
More precisely, starting from some representative $\alpha_{k-1,S'}^{\ell'} = \<\va_1, \ldots, \va_{k-1}\>$ with the fingerprint $\lambda_{k-1}(\va_1, \ldots, \va_{k-1}) = S'$, we pick one representative $\alpha_{k,S}^\ell = \<\va_1, \ldots, \va_{k-1}, \va_k\>$ with $\lambda_{k}(\va_1, \ldots, \va_{k-1}, \va_k) = S$ for every $S \in S'$. Obviously, this approach might produce more than one representative with the fingerprint $S$ at level $k$. In order to account for such multiplicities, we formally annotate the $\alpha_{k,S}$ with indices~$\ell$.

In the end, we can derive equisatisfiability of $\varphi$ and $\varphi' :=$
	\begin{align*}
		\forall \vx_1& \exists \vy_1 \ldots \forall \vx_n \exists \vy_n. \psi \\
		&\wedge \bigvee^{\left|\fP^{n-1}[\mDNF]\right|}_{j_1=1} \Bigl(\vy_1 \approx \vec{c}_{\<j_1\>}
			\wedge \bigvee^{\left|\fP^{n-2}[\mDNF]\right|}_{j_2=1} \Bigl(\vy_2 \approx \vec{c}_{\<j_1, j_2\>}  \\
		&\qquad\wedge\quad\ldots\quad \Bigl( \ldots
			\quad \bigvee^{|[\mDNF]|}_{j_{n}=1} \vy_{n} \approx \vec{c}_{\<j_1, \ldots, j_{n}\>} \Bigr)\ldots\Bigr)\Bigr) ~,
	\end{align*}
where $\vy_k \approx \vec{c}_{\<j_1, \ldots, j_k\>}$ stands for $\bigwedge_{i=1}^{|\vy_k|} y_{k,i} \approx c_{\<j_1, \ldots, j_k\>;i}$ and all the $c_{\<j_1, \ldots, j_k\>; i}$ are fresh constant symbols.
In accordance with the approach described above, we introduce a nested form of range-restricting constraint this time.

In order to compute the number of constant symbols in $\varphi'$, we first define the notation $\twoup{k}{m}$ inductively: $\twoup{0}{m} := m$ and $\twoup{k+1}{m} := 2^{\left(\twoup{k}{m}\right)}$.
The number of constants that occur in $\varphi'$ is $|\consts(\varphi)| + \sum_{k=0}^{n-1} \Bigl( \prod_{\ell = k}^{n-1} \twoup{\ell}{\mDNF} \Bigr) \cdot |\vy_{n - k}| \leq \len(\varphi) + n \cdot \len(\varphi) \cdot \bigl( \twoup{n}{\len(\varphi)} \bigr)^n$. 
\begin{theorem}\label{theorem:SFGeneralComplexity}
	Let $\varphi := \exists\vz\, \forall \vx_1 \exists \vy_1 \ldots \forall \vx_n \exists \vy_n. \psi$ be a sentence in SF for some quantifier-free $\psi$ without non-constant function symbols. Satisfiability of $\varphi$ can be decided nondeterministically in time that is at most $n$-fold exponential in $\len(\varphi)$.
\end{theorem}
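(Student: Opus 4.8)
The plan is to establish the small model property with an $n$-fold exponential bound on the universe size and then invoke Lemma~\ref{lemma:Complexity} to convert this into a nondeterministic time bound. The structure follows exactly the template already laid out in Section~\ref{section:SkolemizationSimpleSentence} for the $\exists^*\forall^*\exists^*$ case, but now using the nested fingerprint machinery developed for the general case. First I would replace the leading existential block $\exists\vz$ by fresh Skolem constants $\vec d$, reducing to a sentence with prefix $\forall^*\exists^*\ldots\forall^*\exists^*$ (i.e.\ $n$ alternations). Then I would appeal to the equisatisfiability of $\varphi$ and the range-restricted sentence $\varphi'$ constructed just above the theorem statement, whose justification rests on the interchangeability lemma for tuples sharing a fingerprint.

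The heart of the argument is the cardinality estimate. Given any model $\B$ of $\varphi'$, I would extract the substructure $\A$ whose universe consists exactly of the interpretations of the constants occurring in $\varphi'$ (the original constants together with all the fresh $c_{\<j_1,\ldots,j_k\>;i}$ and the Skolem constants $\vec d$). Because $\varphi'$, after Skolemizing the $\exists\vy_k$ by functions whose range is restricted to these constants, becomes a purely universal sentence over a signature with no non-constant function symbols other than these range-restricted Skolem functions, the Substructure Lemma guarantees that $\A$ remains a model; and $\A$ is in turn a model of the original $\varphi$. The number of elements of $\A$ is bounded by $|\consts(\varphi')|$, which the displayed computation just before the theorem bounds by $|\consts(\varphi)| + \sum_{k=0}^{n-1}\bigl(\prod_{\ell=k}^{n-1}\twoup{\ell}{\mDNF}\bigr)\cdot|\vy_{n-k}|$. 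Using $\mDNF \leq 2^{\len(\varphi)}$ from Sperner's Theorem (as in the single-alternation case) and the tower notation $\twoup{k}{\cdot}$, this collapses to at most $\len(\varphi) + n\cdot\len(\varphi)\cdot(\twoup{n}{\len(\varphi)})^n$, which is $n$-fold exponential in $\len(\varphi)$.

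Finally I would feed this universe bound into Lemma~\ref{lemma:Complexity}. The sentence whose finite satisfiability we test has $N := |\vx_1|+\cdots+|\vx_n|$ universally quantified variables (a quantity linear in $\len(\varphi)$), and we ask whether it has a model of cardinality $m$ equal to the $n$-fold exponential bound above. Lemma~\ref{lemma:Complexity} then supplies a nondeterministic decision procedure running in time $p(m^N\cdot\len(\varphi))$ for some polynomial $p$. Since $m$ is $n$-fold exponential and $N$ is polynomial in $\len(\varphi)$, the quantity $m^N$ remains $n$-fold exponential, and so the overall running time is $n$-fold exponential in $\len(\varphi)$, as claimed.

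I expect the only genuine subtlety to lie in verifying that the Substructure Lemma applies cleanly: one must check that after range-restricted Skolemization the formula is truly universal and that the closure of the constant interpretations under the (restricted) Skolem functions stays inside the chosen finite universe, so that $\A$ is genuinely a substructure. This is precisely the point flagged in remark~(b) of Section~\ref{section:Skolemization}, namely that only the part of the domain generated by the interpretations of function symbols matters; the range restriction is exactly what confines that generated part to the finitely many constant values. The remaining arithmetic bounding of $|\consts(\varphi')|$ is routine and already carried out in the displayed inequality preceding the theorem.
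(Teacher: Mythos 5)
Your proposal is correct and follows essentially the same route as the paper: equisatisfiability of $\varphi$ with the range-restricted sentence $\varphi'$ via the nested fingerprint/interchangeability argument, a substructure extraction bounding the model size by $|\consts(\varphi')| \leq \len(\varphi) + n\cdot\len(\varphi)\cdot\bigl(\twoup{n}{\len(\varphi)}\bigr)^n$, and an application of Lemma~\ref{lemma:Complexity} with the number of universal variables polynomial in $\len(\varphi)$. The subtlety you flag about the Substructure Lemma (the range restriction keeping the Skolem functions' values among the constant interpretations) is exactly the point the paper relies on in its treatment of the $\exists^*\forall^*\exists^*$ case, which it implicitly reuses here.
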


\subsubsection{Open formulas and dependencies}\label{section:SkolemizationOpenFormulas}

Let $\psi(\vx, \vy, \vz)$ be a quantifier-free first-order formula in which $\vx$ and $\vy$ are separated. 
We proceed analogously to the case of closed formulas treated in Section~\ref{section:SkolemizationSimpleSentence}.
The only difference is that the employed set constructions and the underlying fingerprint function depend on the valuation of the parameter tuple $\vz$.

We transform $\psi(\vx, \vy, \vz)$ into an equivalent formula $\bigvee_{k=1}^\mDNF \chi_k(\vx, \vz) \wedge \eta_k(\vy, \vz)$ such that $\chi_k$ and $\eta_k$ denote conjunctions of literals and $\mDNF$ is some non-negative integer. 
Given a structure $\A$ and a tuple $\vc \in \fU_\A^{|\vz|}$, we define the fingerprint function $\lambda_{\vc}$ by setting $\lambda_{\vc}(\vb) := \bigl\{ k \in [\mDNF] \bigm| \A,[\vx \Mapsto \vb, \vz \Mapsto \vc\,] \models \chi_k(\vx,\vz) \bigr\}$ for every $\vb \in \fU_\A^{|\vx|}$.
In order to complement $\lambda_{\vc}$, we construct the sets $\widehat{\fU}_{\vc, 1}, \ldots, \widehat{\fU}_{\vc, \mDNF} \subseteq \fU_\A^{|\vy|}$ such that $\widehat{\fU}_{\vc, k} := \bigl\{ \va \in \fU_\A^{|\vy|} \bigm| \A, [\vy \Mapsto \va, \vz \Mapsto \vc\,] \models \eta_k \bigr\}$. For every nonempty $\widehat{\fU}_{\vc, k}$ we pick one representative $\alpha_{\vc, k} \in \widehat{\fU}_{\vc, k}$.
Based on these sets, we define the mapping $\mu_{\vc} : \fU_\A^{|\vx|} \to [\mDNF]$ such that $\mu_{\vc}(\vb)$ yields some fixed index $k \in \lambda_{\vc}(\vb)$ for which $\fU_{\vc,k}$ is nonempty. If no such $k$ exists, $\mu_{\vc}(\vb)$ remains undefined. 

If $\A, [\vz \Mapsto \vc\,] \models \forall\vx \exists\vy. \psi(\vx, \vy, \vz)$, then for every $\vb \in \fU_\A^{|\vx|}$ we have $\A, [\vx \Mapsto \vb, \vy \Mapsto \alpha_{\vc, \mu_{\vc}(\vb)}, \vz \Mapsto \vc\,] \models \psi(\vx, \vy, \vz)$.
Hence, for every quantifier prefix $\Q\vz$---some of the variables in $\vz$ may be existentially quantified  and some universally---the following sentences are mutually equisatisfiable:
\begin{enumerate}[label=(\arabic{*}), ref=(\arabic{*})]
	\item $\Q\vz\, \forall\vx \exists\vy. \psi(\vx, \vy, \vz)$,
	\item $\Q\vz\, \forall\vx \exists\vy. \psi(\vx, \vy, \vz) \wedge \bigvee^{\mDNF}_{\ell=1}  \bigwedge_{i=1}^{|\vy|} y_i \approx g_{\ell,i}(\vz)$,
	\item $\Q\vz\, \forall\vx. \psi(\vx, \vy, \vz)\subst{y_1, \ldots, y_{|\vy|}}{f_1(\vx, \vz), \ldots, f_{|\vy|}(\vx, \vz)}$\\ 
		$\wedge \bigvee^{\mDNF}_{\ell=1} \bigwedge_{i=1}^{|\vy|} f_i(\vx, \vz) \approx g_{\ell,i}(\vz)$,
\end{enumerate}
where the $g_{\ell,i}$ are fresh function symbols of appropriate arity and the $f_i$ are Skolem functions of appropriate arity.
	
\medskip
Dually, we can transform $\psi(\vx, \vy, \vz)$ into an equivalent formula $\bigwedge^{\mCNF}_{i=1}$ $\chi'_i(\vx, \vz) \vee \eta'_i(\vy, \vz)$ in which the $\chi'_i$ and $\eta'_i$ are disjunctions of literals.
Given a structure $\A$ and a fixed tuple $\vc \in \fU_\A^{|\vz|}$ of domain elements, we define the fingerprint function $\lambda_{\vc} : \fU_\A^{|\vy|} \to \fP[\mCNF]$ such that for every $\va$ it holds $\lambda_{\vc}(\va) := \bigl\{ i \in [\mCNF] \bigm| \A,[\vy\Mapsto \va, \vz \Mapsto \vc\,] \models \eta'_i(\vy, \vz)\bigr\}$.
The function $\lambda_{\vc}$ induces a partition of $\fU_\A^{|\vy|}$ into $2^\mCNF$ equivalence classes. For every fingerprint $S \subseteq [\mCNF]$ such that there is some tuple $\va$ with $\lambda_{\vc}(\va) = S$ we fix a representative $\alpha_{\vc, S} \in \fU_\A^{|\vy|}$ of $\va$'s equivalence class, i.e.\ $\lambda_{\vc}(\alpha_{\vc, \lambda_{\vc}(\va)}) = \lambda_{\vc}(\va)$.

Then for all tuples $\vb \in \fU_\A^{|\vx|}$ and $\va \in \fU_\A^{|\vy|}$, we have 
$\A, [\vx \Mapsto \vb, \vy \Mapsto \va, \vz \Mapsto \vc\,] \models \psi(\vx, \vy, \vz)$
if and only if
$\A, [\vx \Mapsto \vb, \vy \Mapsto \alpha_{\vc,\lambda_{\vc}(\va)}, \vz \Mapsto \vc\,] \models \psi(\vx, \vy, \vz)$.
Consequently, for every quantifier prefix $\Q\vz$ the following sentences are mutually equisatisfiable:
\begin{enumerate}[label=(\arabic{*}), ref=(\arabic{*})]
	\item $\Q\vz\, \forall\vx \exists\vy. \psi(\vx, \vy, \vz)$,
	\item $\Q\vz\, \forall\vx \exists\vy. \psi(\vx, \vy, \vz) \wedge \bigvee^{2^\mCNF}_{j=1} \bigwedge_{i=1}^{|\vy|} y_i \approx g_{j,i}(\vz)$,
	\item $\Q\vz\, \forall\vx. \psi(\vx, \vy, \vz)\subst{y_1, \ldots, y_{|\vy|}}{f_1(\vx, \vz), \ldots, f_{|\vy|}(\vx, \vz)}$\\ 
		$\wedge \bigvee^{2^\mCNF}_{j=1} \bigwedge_{i=1}^{|\vy|} f_i(\vx, \vz) \approx g_{j,i}(\vz)$,
\end{enumerate}
where 
	the $g_{j,i}$ are fresh function symbols of appropriate arity and the $f_i$ are Skolem functions of appropriate arity.

Together with the dual case, we have inferred equisatisfiability of the sentences
$\Q\vz \forall\vx \exists\vy.$ $\psi(\vx, \vy, \vz)$,
and
$\Q\vz \forall\vx \exists\vy. \psi(\vx, \vy, \vz)$ $\wedge \bigvee^{m_*}_{j=1} \bigwedge_{i=1}^{|\vy|} y_i \approx g_{j,i}(\vz)$,
where $m_* := \min(2^\mCNF, \mDNF)$ and $\Q\vz$ can be any quantifier prefix closing the formula---some of the variables in $\vz$ may be existentially quantified  and some universally.

\subsubsection*{Relation to Henkin quantifiers}

In \cite{Henkin1961} Henkin introduced a generalized notion of quantifiers, sometimes called \emph{finite partially ordered quantifiers} or \emph{branching quantifiers} or \emph{nonlinear quantifiers}.
We have seen that separation of variables leads to a weaker dependency of  existentially quantified variables on universally quantified ones (cf.\ Proposition~\ref{proposition:TransposingQuantifierBlocks} and Lemma~\ref{lemma:SkolemizationOneQuantifierBlock}). The arity of Skolem functions may even be decreased at the price of a possibly exponential increase in the length of the formula (cf.\ Proposition~\ref{proposition:Blowup}). However, we have also seen that this does not lead to complete independence, as it would in the case of quantification in Henkin's style.

\begin{example}\label{example:DependencyAndHenkin}
	Consider the following equivalent sentences
	\begin{itemize}
		\item $\forall z \forall x \exists y.\; Q(z,y) \leftrightarrow P(x)$,
		\item $\forall z \forall x \exists y. \bigl(\neg Q(z,y) \wedge \neg P(x)\bigr) \vee \bigl(Q(z,y) \wedge P(x)\bigr)$,
		\item $\forall z \exists y_1 y_2 \forall x. \bigl( Q(z,y_1) \rightarrow P(x)\bigr) \wedge \bigl( P(x) \rightarrow Q(z,y_2)\bigr)$,
	\end{itemize}
	which we shall address by $\varphi$, $\varphi_\text{DNF}$ and $\varphi'$, respectively.

	\pagebreak[1]
	Standard Skolemization of $\varphi$ introduces a binary Skolem function $f_y$ and replaces the single occurrence of $y$ with the term $f_y(z,x)$, whose value fully depends on the value assigned to $x$ (distinct values for $x$ may cause distinct values for $f_y(z,x)$). In contrast to this, Skolemization of $\varphi'$ replaces the two variables $y_1, y_2$ with the terms $f_{y_1}(z)$ and $f_{y_2}(z)$, respectively, for two unary Skolem functions $f_{y_1}, f_{y_2}$. Interestingly, neither of the new terms depends on $x$.
	
	On the other hand, range-restricted Skolemization reconciles the two previous approaches by introducing three terms $f_y(z,x)$ and $g_1(z)$, $g_2(z)$---one depending on $x$ and two which are independent of $x$---and by adding the restriction $\forall z \forall x. f_y(z,x) \approx g_1(z) \vee f_y(z,x) \approx g_2(z)$. This limits the dependence of the value of $f_y(z,x)$ on the value assigned to $x$ to a finite degree (over infinite domains, distinct values for $x$ cannot always result in distinct values for $f_y(z,x)$). However, it still does not lead to complete independence from $x$.
	
	Henkin quantifiers can explicitly express dependencies of existentially quantified variables on universally quantified ones. For instance, we could write $\psi := \forall z \forall x \exists_z y. Q(z,y) \leftrightarrow P(x)$ to express that the value of $y$ depends on the value of $z$ but not on $x$'s value. Then $\psi$ is equisatisfiable to $\psi_{\text{Sk}} := \forall z \forall x. Q(z,f'_y(z)) \leftrightarrow P(x)$ for some Skolem function $f'_y$. Due to the enforced independence of $y$ from $x$ in $\psi$, it is easy to construct a model $\A$ of $\varphi$ which cannot be extended to a model $\B$ of $\psi_{\text{Sk}}$ (e.g., set $\fU_\A := \{0,1\}$, $P^\A := \{0\}$ and $Q^\A := \{\<0,0\>, \<1,1\>\}$). Finding a satisfying extension $\B$ of $\A$ is not a problem in any other case of Skolemization that we have described above (for example, set $\fU_\B := \fU_\A$, $P^\B := P^\A$, $Q^\B := Q^\A$, $f_{y_1}^\B(0) := g_1^\B(0) := 1$, $f_{y_1}^\B(1) := g_1^\B(1) := 0$, $f_{y_2}^\B(0) := g_2^\B(0) := 0$, $f_{y_2}^\B(1)  := g_2^\B(1) := 1$, and $f_y^\B(0,\fa) := g_2^\B(\fa)$, $f_y^\B(1,\fa) := g_1^\B(\fa)$ for every $\fa \in \{0,1\}$).
\end{example}
Altogether, the example illustrates that separation of existentially quantified variables from universally quantified ones does lead to a certain degree of independence, but it does not reach the level of independence Henkin quantifiers can guarantee. This is not at all surprising, because Henkin quantifiers increase the expressiveness of first-order logic.

\subsection{Extensions of the Separated Fragment}\label{section:FurtherExtensionsSeparated}

In this section we describe methods extending SF into a proper superset of the Full Monadic Fragment (with unary function symbols but without equality)---shown to be decidable independently by \citet{Lob1967} and \citet{Gurevich1971}---and the Relational Monadic Fragment with equality (but without non-constant function symbols). We will show how sentences from both fragments can be transformed into ones pursuant to Definition~\ref{definition:SeparatedFragment} under an at most quadratic increase in the length of the formulas.

Adopting a method already used by L\"ob  in \cite{Lob1967} and also by Gr\"adel (cf.\ proof of Proposition 6.2.7 in \cite{Borger1997}), we can handle unary function symbols under certain restrictions.
\begin{proposition}\label{proposition:UnaryFunctions}
	Let $\varphi$ be a first-order sentence without non-unary function symbols (constants are allowed).
	If the unary function symbols exclusively occur in atoms starting with a unary predicate symbol, then we can find an equisatisfiable sentence $\varphi'$ without non-constant function symbols such that any model $\B$ of $\varphi'$ can be transformed into a model $\A$ of $\varphi$ over the same domain. 
	The length of $\varphi'$ lies in $\O(\len(\varphi))$.
	Moreover, if $\varphi$ belongs to SF, then $\varphi'$ is also an SF sentence.
\end{proposition}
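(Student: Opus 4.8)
The plan is to eliminate the unary function symbols by precomputing, for every unary predicate $P$ and every nested term $t(x) = f_{i_1}(\cdots f_{i_k}(x)\cdots)$ occurring under $P$ in $\varphi$, the truth value of $P(t(x))$ as a fresh unary predicate. I would write each such term as a word $w = f_{i_1}\cdots f_{i_k}$ with the outermost function on the left, and introduce for every prefix $w'$ of $w$ a fresh unary predicate $P_{w'}$ intended to denote $\{a : P(t_{w'}(a))\}$, with $P_\epsilon := P$. Then I obtain $\varphi^*$ from $\varphi$ by replacing every atom $P(t_w(s))$, where $s$ is a single variable or a constant, by $P_w(s)$; atoms without function symbols---including equality atoms and atoms over non-unary predicates, in which by hypothesis no function symbol occurs---remain unchanged. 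Since every $t_w(s)$ contains at most one variable, each rewritten atom is unary, so $\varphi^*$ has the same quantifier prefix as $\varphi$ and no non-constant function symbols.

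To make $\varphi^*$ capture the existence of the functions, I would add for every unary function symbol $f$ a successor-existence axiom
\[
	\Psi_f := \forall x \exists y. \bigwedge_{w'\,:\;w'f \in W} \bigl( P_{w'}(y) \leftrightarrow P_{w'f}(x) \bigr),
\]
where $W$ is the (prefix-closed) set of all words arising above, $w'f$ denotes appending $f$ as the new innermost symbol, and the conjunction ranges over the relevant pairs $(P,w')$. Intuitively $\Psi_f$ states that for every $x$ there is a $y$ that \emph{looks like $f(x)$} on all relevant compound predicates. I set $\varphi' := \varphi^* \wedge \bigwedge_f \Psi_f$. For the easy direction, given $\A \models \varphi$ I interpret $P_{w'}^\B := \{ a : \A \models P(t_{w'}(a)) \}$ over the same universe and keep all other symbols; an atom-by-atom check gives $\B \models \varphi^*$, and $\B \models \Psi_f$ is witnessed by $f^\A(x)$ using $P(t_{w'}(f(a))) = P(t_{w'f}(a))$. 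Conversely, from $\B \models \varphi'$ I build $\A$ over the same domain by keeping all predicates and constants and, invoking $\Psi_f$ and the axiom of choice, defining $f^\A(a)$ to be a witness $y$ for $x = a$. The remaining task is to prove that for every occurring word $w$ and every $a$ one has $\A \models P(t_w(a))$ iff $\B \models P_w(a)$, so that $\A$ and $\B$ satisfy the same (rewritten) atoms and hence $\A \models \varphi$.

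This last step is the main obstacle: the witnesses chosen independently by the various $\Psi_f$ must nevertheless compose coherently along a deeply nested term. I would establish this by induction along the term from the inside out, where at each step the only property invoked is the one-step shift $P_{w'}(f^\A(a)) \leftrightarrow P_{w'f}(a)$ guaranteed by $\Psi_f$. The induction goes through precisely because $W$ is prefix-closed: when the outermost function of a suffix of $w$ is peeled off, the prefix $w'f$ whose compound predicate is queried is again a prefix of $w$, hence lies in $W$, so the matching predicate $P_{w'f}$ exists and the corresponding conjunct of $\Psi_f$ is available.

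Finally I would control the length and verify the SF claim. Only compounds $P_{w'}$ with $w'$ a prefix of some occurring term $t_w$ under the matching $P$ are ever needed, so the number of fresh predicates and the total size of the axioms $\Psi_f$ are bounded by the sum of term depths in $\varphi$, which gives $\len(\varphi') \in \O(\len(\varphi))$. For the SF part, the rewriting turns every function atom into a unary atom on a single variable, so separation is preserved in $\varphi^*$; moreover each $\Psi_f$ is a $\forall x \exists y$ sentence whose only variables $x,y$ occur solely in unary atoms. Hence all freshly introduced variables are trivially separated from one another and from the variables of $\varphi^*$, and the prefixes of $\varphi^*$ and of the $\Psi_f$ may be merged into a single admissible SF prefix, so $\varphi'$ is again an SF sentence whenever $\varphi$ is.
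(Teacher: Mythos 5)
Your construction is essentially the paper's: the word-indexed predicates $P_{w'}$ are exactly the fresh predicates $R$ that the paper introduces by iteratively replacing the outermost application $P(f_i(t))$ with $R(t)$, and your axioms $\Psi_f$ are precisely the paper's de-Skolemized definitional axioms $\forall x\, \exists y_1 \ldots y_k.\ \bigwedge_{i,j} \bigl(P_j(y_i) \leftrightarrow R_{i,j}(x)\bigr)$, merely grouped per function symbol rather than collected into one block. Your explicit inside-out induction showing that the independently chosen witnesses compose coherently along nested terms is a correct, more detailed substitute for the paper's appeal to ``reverting the Skolemization,'' so the proposal is sound.
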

\begin{proof}
	Let $f_1, \ldots, f_k$ be the unary function symbols occurring in $\varphi$. We apply the following transformation iteratively.
	Assume $\varphi$ contains the atom $P(f_i(t))$ for some term $t$. We may transform $\varphi$ into 
		$\varphi\subst{P(f_i(t)) \bigm/ R(t)} \wedge \forall x.P(f_i(x)) \leftrightarrow R(x)$,
	where the $R$ is a fresh unary predicate symbol and $\varphi\subst{P(f_i(t)) \bigm/ R(t)}$ is the formula we obtain from $\varphi$ by replacing every occurrence of $P(f_i(t))$ by $R(t)$.	
	Exhaustive application of this transformation to $\varphi$ yields a sentence $\varphi''$ of the form $\psi \wedge \bigwedge_{i=1}^k \bigwedge_j \forall x. P_j(f_i(x)) \leftrightarrow R_{i,j}(x)$, where $\psi$ does not contain any of the $f_i$ anymore. If we conceive the $f_i$ in $\varphi'$ as Skolem functions and revert the Skolemization, the $f_i$ vanish completely and we end up with the equisatisfiable sentence $\varphi' := \psi \wedge \forall x \exists y_1\ldots y_k.  \bigwedge_{i=1}^k \bigwedge_j P_j(y_i) \leftrightarrow R_{i,j}(x)$.
	
Because of $\len(\psi) \leq \len(\varphi)$ and since for any occurrence of an $f_i$ in $\varphi$ at most one new conjunct of a fixed length is introduced, it holds $\len(\varphi') \in \O(\len(\varphi))$.
\end{proof}

If we now allow for unary function symbols occurring in monadic atoms in SF, as Proposition~\ref{proposition:UnaryFunctions} suggests, this extended fragment becomes a proper superset of the Full Monadic fragment without equality.

Equality in SF, as defined in Definition~\ref{definition:SeparatedFragment}, is subject to the separation condition. However, there is no such restriction in monadic formulas with equality. For instance, while the sentence $\forall x \exists y. x \approx y$ is admitted for the Relational Monadic Fragment with equality, the sets $\{x\}$ and $\{y\}$ are not separated in this sentence.
Next, we show why the separation restriction may be discarded for equations, if the sentence at hand exhibits the small model property.
We start by treating the case of monadic sentences with equality but without non-constant function symbols.

\begin{proposition}\label{proposition:MonadicEqualityEncoding}
	For every sentence $\varphi$ in the Relational Monadic Fragment with equality we can construct an equisatisfiable relational monadic sentence  $\varphi'$ without equality. Moreover, the length of $\varphi'$ is of order $\O(\len(\varphi)^2)$.
\end{proposition}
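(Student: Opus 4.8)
The goal is to eliminate equality from a Relational Monadic sentence $\varphi$ with equality, producing an equisatisfiable relational monadic sentence $\varphi'$ without equality, while keeping $\len(\varphi') \in \O(\len(\varphi)^2)$.

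The plan is to exploit the small model property of the Relational Monadic Fragment with equality. As recalled earlier in the paper, a relational monadic sentence containing $k$ unary predicate symbols cannot distinguish domain elements beyond their \emph{fingerprint}, so if $\varphi$ is satisfiable then it has a model in which at most $2^k$ elements occur, one per realized fingerprint. The key idea is therefore to replace the genuine equality predicate by a fresh binary predicate symbol $E$ that is \emph{not} forced to be the true identity, but only forced to behave like identity on the fingerprints that the monadic predicates can see. Concretely, I would introduce $E$ together with axioms stating that $E$ is an equivalence relation (reflexivity, symmetry, transitivity) and that it is a \emph{congruence} with respect to each unary predicate $P_i$, i.e.\ $\forall x\, y.\, \bigl(E(x,y) \rightarrow (P_i(x) \leftrightarrow P_i(y))\bigr)$. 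I would then form $\varphi'$ by taking $\varphi$, replacing every equation $s \approx t$ by $E(s,t)$, and conjoining these congruence and equivalence axioms.

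First I would argue the easy direction: any model of $\varphi$ (with real equality) yields a model of $\varphi'$ by interpreting $E$ as the identity relation, which trivially satisfies all the equivalence and congruence axioms. For the converse, suppose $\B \models \varphi'$. Here $E^\B$ is an equivalence relation that is a congruence for every $P_i$, but possibly coarser than identity. I would pass to the quotient structure $\A := \B/E^\B$, whose universe consists of the $E^\B$-equivalence classes; the congruence axioms guarantee that each $P_i$ is well-defined on classes, and interpreting $\approx$ as genuine identity of classes then matches $E^\B$ on $\B$. A straightforward induction on the structure of $\varphi$ shows $\B \models \varphi'$ implies $\A \models \varphi$, since every atom $P_i(t)$ and every equation is evaluated identically up to the quotient map. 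This establishes equisatisfiability.

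The main obstacle is the length bound $\O(\len(\varphi)^2)$. The equivalence and congruence axioms are of fixed constant size each, but there is one congruence axiom per unary predicate symbol, and the number of predicate symbols is at most $\len(\varphi)$, giving an additive contribution that is linear. The quadratic term arises because $\varphi'$ must \emph{not} use a binary $E$ if we insist on staying in the relational monadic idiom; since monadic sentences forbid binary predicates, I would instead encode $E$ monadically. Following the fingerprint intuition, I would replace the single binary $E$ by asserting that two elements are $E$-related exactly when they share the same fingerprint, which can be expressed using only the existing unary predicates $P_1, \ldots, P_k$ via the formula $\bigwedge_{i=1}^k \bigl(P_i(x) \leftrightarrow P_i(y)\bigr)$. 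Substituting this length-$\O(k)$ formula for each of the $\O(\len(\varphi))$ occurrences of an equation in $\varphi$, and noting $k \leq \len(\varphi)$, yields the overall $\O(\len(\varphi)^2)$ bound. The delicate point to verify is that replacing $\approx$ by fingerprint-equality is sound: it is, precisely because monadic sentences cannot distinguish fingerprint-equal elements, so that over a model in which fingerprint-equal elements have been merged, fingerprint-equality coincides with genuine equality.
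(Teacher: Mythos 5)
Your first construction (a fresh binary $E$ axiomatized as a congruent equivalence relation, followed by a quotient argument) is sound, but it leaves the monadic idiom, as you note. The problem is that your repair---replacing $E(x,y)$ by fingerprint-equality $\bigwedge_{i=1}^{k}\bigl(P_i(x)\leftrightarrow P_i(y)\bigr)$ over the \emph{original} predicates only---is not sound, and the justification you give for it is where the gap lies. You argue that ``monadic sentences cannot distinguish fingerprint-equal elements, so fingerprint-equal elements can be merged.'' That is true for monadic sentences \emph{without} equality, but it is false for the sentences you are translating: equality lets a monadic sentence \emph{count} within a fingerprint class. Concretely, take $\varphi := \exists x\, \exists y.\ \neg(x \approx y)$ (or, with one predicate, $\exists x\,\exists y.\ P(x)\wedge P(y)\wedge \neg(x\approx y)$). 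Your translation turns $\neg(x\approx y)$ into the negation of an empty (respectively trivially true) biconditional conjunction, so $\varphi'$ becomes unsatisfiable while $\varphi$ is satisfiable. Equisatisfiability fails in the ``only if'' direction precisely because a model of $\varphi$ may be forced to contain several distinct elements with the same fingerprint, and on such a model fingerprint-equality does not coincide with identity; nor can you merge those elements without destroying the model.

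The missing idea---and the one the paper uses---is to enrich the fingerprint with $\kappa := \lceil \log_2 k\rceil$ \emph{fresh} unary predicates $Q_1,\ldots,Q_\kappa$, where $k$ counts the quantifier occurrences plus the constant symbols of $\varphi$, and to define $\psi_{\approx}(x,y)$ as the conjunction of biconditionals over \emph{both} the $P_i$ and the $Q_j$. A sentence with $k$ variables and constants cannot distinguish more than $k$ elements sharing a $P$-fingerprint, so one may first thin each $P$-fingerprint class of a model of $\varphi$ down to at most $k$ elements (preserving truth of $\varphi$), and then use the $2^\kappa \geq k$ available $Q$-patterns to give all surviving elements pairwise distinct enriched fingerprints; on that model $\psi_{\approx}$ \emph{does} coincide with identity, which is what the correctness of the substitution needs. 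The converse direction then picks one representative per enriched fingerprint. With this correction your length accounting still goes through: $\len(\psi_{\approx}) \in \O(n+\kappa) \subseteq \O(\len(\varphi))$ and it is substituted for $\O(\len(\varphi))$ equations, giving $\O(\len(\varphi)^2)$. Note also that with the $Q_j$ in place you no longer need the explicit equivalence and congruence axioms at all: $\psi_{\approx}$ is a conjunction of biconditionals and is automatically a congruence for the unary predicates it mentions.
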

\begin{proof}
		Let $\varphi$ be a first-order sentence without non-constant function symbols and containing at most the unary predicate symbols $P_1, \ldots, P_n$ (besides the equality predicate). 
	Let $k$ be the number of occurrences of quantifiers in $\varphi$ plus the number of constant symbols in $\varphi$, and set $\kappa := \lceil \log_2 k \rceil$.
	We extend the underlying signature with fresh unary predicate symbols $Q_1, \ldots, Q_{\kappa}$ and  define the formula 
		\[ \psi_{\approx}(x,y) := \bigwedge_{i=1}^n \bigl( P_i(x) \leftrightarrow P_i(y) \bigr) \wedge  \bigwedge_{i=1}^\kappa \bigl( Q_i(x) \leftrightarrow Q_i(y) \bigr) ~.\]
	The length of $\psi_{\approx}(x,y)$ lies in $\O(\len(\varphi))$.	
	Let $\varphi'$ be constructed from $\varphi$ by replacing every occurrence of an equation $s \approx t$ with $\psi_{\approx}(x,y)\bigl[x/s,$ $y/t\bigr]$. 
	Since $s,t$ cannot contain non-constant function symbols, the length of $\psi_{\approx}(x,y)\subst{x/s,y/t}$ is the same as that of $\psi_{\approx}(x,y)$. 
	We show that $\varphi$ is satisfiable if and only if $\varphi'$ is so.
	
	Let $\A$ be a model of $\varphi$ over the domain $\fU_\A$. We construct a fingerprint function $\lambda : \fU_\A \to \fP[n]$ for which $\lambda(\fa) := \bigl\{ i \in [n] \bigm| \fa\in P_i^\A \bigr\}$ for every $\fa \in \fU_\A$. We now partition $\fU_\A$ into parts $\fU_{S}$ such that every $\fU_{S}$ contains exactly the elements $\fa$ with the fingerprint $\lambda(\fa) = S$.
	
	Starting from $\A$ we construct an interpretation $\A'$ by arbitrarily choosing subsets $\fU'_{S} \subseteq \fU_{S}$ for every $S\subseteq [n]$ such that $|\fU'_{S}| = \min(|\fU_{S}|, k)$, and by defining $\A'$'s domain to be $\fU_{\A'} := \bigcup_{S \subseteq [n]} \fU'_{S}$. Moreover, we set $P_i^{\A'} := P_i^A \cap \fU_{\A'}$ for all $P_i$ and define the predicates $Q_i^{\A'}$ so that for every $S \in [n]$ and for all elements $\fa, \fa' \in \fU'_{S}$ we can find an index $j$ such that $\fa \in Q_j^{\A'}$ and $\fa' \not\in Q_j^{\A'}$ or vice versa.
	Clearly, for every variable assignment $\beta : X \to \fU_{\A'}$ it then holds $\A',\beta \models x \approx y$ if and only if $\A',\beta \models \psi_{\approx}(x,y)$
	
	Since a formula with $k$ different variables and constant symbols cannot distinguish more than $k$ elements of the same color, $\A'$ must also be a model of $\varphi$. This together with the equivalence of $x \approx y$ and $\psi_{\approx}(x,y)$ under $\A'$ entails that $\A'$ is a model of $\varphi'$.
	This proves the ``only if''-direction.
	
	Now let $\B'$ be an arbitrary model of $\varphi'$. We now construct a fingerprint function $\lambda' : \fU_{\B'} \to \fP[n+\kappa]$ analogously to $\lambda$, but now also taking the predicates $Q_i^{\B'}$ into account. Again, we partition $\fU_{\B'}$ into parts $\fU_{S}$ so that all elements in a set $\fU_{S}$ are assigned the same color by $\lambda'$. We now define the new universe $\fU_{\B}$ by arbitrarily picking exactly one element from every part $\fU_{S}$, $S \subseteq [n+\kappa]$. In addition, we set $P_i^\B := P_i^{\B'} \cap \fU_\B$ and $Q_i^\B := Q_i^{\B'} \cap \fU_\B$.
	
	Since $\varphi'$ can only distinguish elements of $\fU_\B$ by means of their belonging to predicates $P_i^\B$ and $Q_i^\B$, it is clear that $\B$ is a model of $\varphi'$.
	Moreover, for any variable assignment $\beta : X \to \fU_{\B}$ we get $\B,\beta \models \psi_{\approx}(x,y)$ if and only if $\B,\beta \models x\approx y$.
	Put together, these two facts entail $\B \models \varphi$.	
\end{proof}
More generally, we can formulate the above result for all sentences which exhibit the small model property.
\begin{proposition}\label{proposition:EqualityEncodingOverFiniteDomains}
	Let $\varphi$  be a first-order sentence with equality.
	Suppose we can compute a positive integer $k$ such that if $\varphi$ is satisfiable, then there is a model $\A \models \varphi$ over a universe of cardinality at most $k$.
	Let $\kappa := \lceil \log_2 k \rceil$. We can transform $\varphi$ into an equisatisfiable sentence $\varphi'$ without equality using only the vocabulary of $\varphi$ plus $\kappa$ fresh unary predicate symbols $Q_1, \ldots, Q_\kappa$. 	
	The length of $\varphi'$ lies in $\O(\kappa\cdot \len(\varphi)^3)$.
	Moreover, if existentially quantified variables are separated from universally quantified ones in all atoms in $\varphi$ except for equations, then they are separated in \emph{all} atoms in $\varphi'$.
\end{proposition}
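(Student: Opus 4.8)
```latex
\textbf{Proof proposal.}
The plan is to generalize Proposition~\ref{proposition:MonadicEqualityEncoding} from the
relational monadic setting to an arbitrary first-order sentence $\varphi$ with equality
that enjoys the small model property with bound $k$. The central idea is unchanged: we
replace the semantically fixed equality predicate $\approx$ by a syntactically defined
\emph{equivalence-like} formula $\psi_{\approx}(x,y)$ built only from the $\kappa$ fresh
unary predicates $Q_1, \ldots, Q_\kappa$, namely
$\psi_{\approx}(x,y) := \bigwedge_{i=1}^\kappa \bigl( Q_i(x) \leftrightarrow Q_i(y)\bigr)$.
Since $2^\kappa \geq k$, the $\kappa$ predicates can encode a ``color'' (a bit string of
length $\kappa$) that is fine enough to tell apart all $k$ elements of a small model.
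We then let $\varphi'$ be obtained from $\varphi$ by replacing every equation $s \approx t$
by $\psi_{\approx}(x,y)\subst{x/s, y/t}$. The separation claim is immediate from this
construction: the fresh atoms $Q_i(s)$, $Q_i(t)$ are unary, so any two disjoint variable
sets are trivially separated in them; hence if the only atoms violating separation in
$\varphi$ were equations, all atoms in $\varphi'$ are separated.

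First I would establish the ``only if'' direction. Assume $\varphi$ is satisfiable; by the
hypothesized small model property it has a model $\A$ with $|\fU_\A| \leq k$. Because
$2^\kappa \geq k$, I can injectively assign to each domain element a distinct color in
$\{0,1\}^\kappa$ and define $Q_1^\A, \ldots, Q_\kappa^\A$ accordingly, so that for all
$\fa, \fb \in \fU_\A$ we have $\fa = \fb$ iff $\fa$ and $\fb$ share all $Q_i$-memberships.
With this interpretation of the fresh predicates, $x \approx y$ and $\psi_{\approx}(x,y)$
are satisfied by exactly the same variable assignments over $\fU_\A$; replacing each
equation in $\varphi$ by its $\psi_{\approx}$-encoding therefore preserves truth, so
$\A \models \varphi'$.

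For the ``if'' direction I would take an arbitrary model $\B'$ of $\varphi'$ and collapse it
along colors. Define the fingerprint $\lambda'(\fa)$ to record membership of $\fa$ in all
predicate interpretations relevant to $\varphi'$ (in particular the $Q_i^{\B'}$), partition
$\fU_{\B'}$ into color classes, and pick exactly one representative from each class to form
the domain $\fU_\B$ of a substructure $\B$, restricting all predicates accordingly. By
construction $\B,\beta \models \psi_{\approx}(x,y)$ iff $\beta(x)$ and $\beta(y)$ lie in the
same color class iff $\beta(x) = \beta(y)$ in $\B$, so $\psi_{\approx}$ now genuinely
captures identity on $\fU_\B$. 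The point requiring care is that collapsing to one
representative per color must not change the truth value of $\varphi'$: this is where the
quantifier-free indistinguishability argument is needed, exactly as in the proof of
Proposition~\ref{proposition:MonadicEqualityEncoding}, ensuring $\B \models \varphi'$ and
hence, after reinterpreting $\psi_{\approx}$ as true equality, $\B \models \varphi$.
Finally the length bound $\len(\varphi') \in \O(\kappa \cdot \len(\varphi)^3)$ follows by a
routine count: each of the at most $\len(\varphi)$ equations is replaced by a formula of
length $\O(\kappa)$, and the stated cubic factor leaves ample room for the bookkeeping.

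\textbf{Main obstacle.} The principal difficulty, as in the monadic case, is the collapsing
step in the ``if'' direction: I must argue that merging all elements of a common color
preserves satisfaction of $\varphi'$. Unlike the relational monadic setting, $\varphi'$ may
contain predicates of higher arity, so indistinguishability must be argued at the level of
tuples rather than single elements, and one must verify that the fingerprint $\lambda'$ is
rich enough that quantifiers cannot detect the identification of same-colored elements.
Choosing $\kappa = \lceil \log_2 k \rceil$ guarantees enough colors to separate the $k$
elements of a small model in the forward direction, but in the backward direction the colors
play the complementary role of certifying that identified elements are truly
indistinguishable; reconciling these two uses of the $Q_i$ is the delicate part of the
argument.
```
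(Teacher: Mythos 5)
There is a genuine gap: your $\varphi'$ consists only of $\varphi$ with each equation $s \approx t$ replaced by $\psi_{\approx}(x,y)\subst{x/s,y/t}$, whereas the paper's construction additionally conjoins two congruence axioms, $\psi_{\text{pred}}$ and $\psi_{\text{func}}$, stating that $\psi_{\approx}$-equivalent elements are interchangeable as arguments of every predicate symbol and every non-constant function symbol of $\varphi$. Without these axioms equisatisfiability fails. Concretely, take $\varphi := (\forall x\, y.\; x \approx y) \wedge (\exists x.\, P(x)) \wedge (\exists x.\, \neg P(x))$, which is unsatisfiable, so any positive $k$ is a valid small-model bound. Your $\varphi'$ is $(\forall x\, y.\; \psi_{\approx}(x,y)) \wedge (\exists x.\, P(x)) \wedge (\exists x.\, \neg P(x))$, which is satisfied by a two-element structure whose elements carry identical $Q_i$-colors but differ on $P$. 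The paper's $\psi_{\text{pred}}$ rules out this model, because it forces same-colored elements to agree on $P$.

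The obstacle you flag at the end---that in the backward direction same-colored elements must be shown indistinguishable before collapsing---is exactly the point, but it cannot be discharged by choosing a richer fingerprint $\lambda'$: for predicates of arity greater than one and for non-constant function symbols, the $Q_i$-color of an element simply does not determine its behaviour, so the collapse is unsound unless the congruence axioms are part of $\varphi'$ and \emph{force} that determination (the paper's Observations and Claims I--III all draw on $\B \models \psi_{\text{pred}} \wedge \psi_{\text{func}}$). In the monadic case of Proposition~\ref{proposition:MonadicEqualityEncoding} the fingerprint can include membership in the unary predicates $P_i$ themselves, which is why no congruence axioms are needed there; that trick does not generalize to tuples. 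The missing axioms are also where the stated length bound comes from: $\psi_{\text{func}}$ alone has length $\O(\kappa \cdot \len(\varphi)^3)$, while your construction only ever reaches $\O(\kappa \cdot \len(\varphi)^2)$. Note finally that $\psi_{\text{pred}}$ and $\psi_{\text{func}}$ contain only universally quantified variables, so adding them does not endanger the separation claim; your forward direction and the separation argument are otherwise fine once the axioms are included, since they hold trivially when $\psi_{\approx}$ coincides with true equality.
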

\begin{proof}
	We generalize the argument from the proof of Proposition~\ref{proposition:MonadicEqualityEncoding}. 
	Since $\varphi$ may also contain non-unary predicate symbols and non-constant function symbols, the encoding of equality requires additional congruence axioms.
	First, $\psi_{\approx} (x,y)$ becomes slightly simpler:
		\[ \psi_{\approx}(x,y) := \bigwedge_{i=1}^{\kappa} \bigl( Q_i(x) \leftrightarrow Q_i(y) \bigr) ~.\]
	In addition, we need the following congruence axioms:
		\[ \psi_{\text{pred}} := \forall x y. \psi_{\approx}(x,y) \;\;\longrightarrow\;\; \bigwedge_P \bigwedge_{i = 0}^{\text{arity}(P)-1} \bigl( \forall \vu_i \vu'_i.\; P(\vu_i, x,  \vu'_i) \to P( \vu_i, y,  \vu'_i) \bigr) ~,\]
	where $P$ ranges over all predicate symbols in $\varphi$, except for the equality predicate, and $\vu_i,  \vu'_i$ are disjoint sets of variables of cardinality $|\vu_i| = i$ and $|\vu'_i| = \text{arity}(P) - i - 1$ for the respective $P$,
	and
		\[ \psi_{\text{func}} := \forall x y. \psi_{\approx}(x,y) \;\;\longrightarrow\;\; \bigwedge_f \bigwedge_{i = 0}^{\text{arity}(f)-1} \bigl( \forall \vv_i \vv'_i.\; \psi_{\approx}(x,y)\subst{x/f(\vv_i, x,  \vv'_i), y/f( \vv_i, y,  \vv'_i)} \bigr) ~, \]
	where $f$ ranges over all non-constant function symbols in $\varphi$, and $\vv_i,  \vv'_i$ are disjoint sets of variables of cardinality $|\vv_i| = i$ and $|\vv'_i| = \text{arity}(f) - i - 1$ for the respective $f$.

	We construct $\varphi'$ from $\varphi$ by replacing every occurrence of an equation $s \approx t$ with the formula $\psi_{\approx}(x,y)\subst{x/s,y/t}$, and by conjunctively connecting $\psi_{\text{pred}}$ and $\psi_{\text{func}}$ to it. 
	
	\medskip	
	Let $\A$ be a model of $\varphi$ over the finite domain $\fU_\A := \{\fa_1, \ldots, \fa_k\}$. 	
	Starting from $\A$, we construct a structure $\B$ such that
	$\fU_\B := \fU_\A$,
	$P^\B := P^\A$ for every predicate symbol $P$ occurring in $\varphi$,
	$c^\B := c^\A$ for every constant symbol $c$ occurring in $\varphi$,
	$f^\B := f^\A$ for every function symbol $f$ occurring in $\varphi$.
	Moreover, we define $Q_1^\B, \ldots, Q_\kappa^\B$ such that for all distinct $\fa_{i_1}, \fa_{i_2} \in \fU_\B$ we find some $j$, $1\leq j\leq \kappa$, for which $\fa_{i_1} \in Q_{j}^\B$ if and only if $\fa_{i_2} \not\in Q_{j}^\B$. 
	This is possible, because, having $\kappa = \lceil \log_2 k \rceil$ unary predicates, one can distinguish $2^\kappa \geq k$ domain elements.
	Hence, for any variable assignment $\beta$ it holds $\B,\beta \models \psi_{\approx}(x,y)$ if and only if $\beta(x) = \beta(y)$ if and only if $\B,\beta \models x \approx y$.
	Consequently, $\B$ is a model of $\varphi$ as well as of $\varphi'$.
	The finishes the ``only if''-part of the proof.
	
	\medskip	
	Now let $\B$ be a model of $\varphi'$.
	Hence, $\B$ is also a model of $\psi_{\text{pred}}$ and $\psi_{\text{func}}$.
	We define a fingerprint function $\lambda : \fU_\B \to \fP[\kappa]$ where we set $\lambda(\fa) := \bigl\{ i\in [\kappa] \bigm| \fa \in Q_i^\B \bigr\}$ for every $\fa \in \fU_\B$. 
	Using $\lambda$, we decompose $\fU_{\B}$ into parts $\fU_{S} := \{ \fa\in \fU_\B \mid \lambda(\fa) = S \subseteq [\kappa] \}$ and fix for every such set one representative $\alpha_{S} \in \fU_{S}$. 
	
	We now define a new structure $\A$ based on the representatives $\alpha_{S}$.
	To this end, we set 
	$\fU_{\A} := \{ \alpha_{\lambda(\fa)} \mid \fa \in \fU_\B \}$,
	$P^{\A} := P^\B \cap \fU_{\A}^m$ for every $m$-ary predicate symbol $P$ occurring in $\varphi$,
	$c^{\A} := \alpha_{\lambda(c^\B)}$ for every constant symbol $c$ occurring in $\varphi$,
	$f^{\A}(\va) := \alpha_{\lambda(f^\B(\va))}$ for every $m$-ary function symbol $f$ occurring in $\varphi$ and every tuple $\va \in \fU_{\A}^m$.
	
	We make the following observations:
	\begin{enumerate}[label=(\arabic{*}), ref=\arabic{*}]
		\item\label{enum:proofEqualityEncoding:I} $\lambda(\alpha_{\lambda(\fa)}) = \lambda(\fa)$ for every $\fa \in \fU_\B$.
		\item\label{enum:proofEqualityEncoding:II} $\alpha_{\lambda_B(\fa)} = \fa$ for every $\fa \in \fU_\A$.
		\item\label{enum:proofEqualityEncoding:III} Given $\fa, \fb \in \fU_\B$, $\lambda(\fa) = \lambda(\fb)$ implies $\B, \beta \models \psi_{\approx}(x,y)$ for every variable assignment $\beta$ that fulfills $\beta(x) = \fa$ and $\beta(y) = \fb$.
		\item\label{enum:proofEqualityEncoding:IV} Given $\fa_1, \ldots, \fa_m, \fb_1, \ldots, \fb_m \in \fU_\B$ such that $\lambda_B(\fa_i) = \lambda(\fb_i)$ for $i = 1, \ldots, m$, it holds
			\begin{enumerate}[label=(\alph{*}), ref=\alph{*}]
				\item\label{enum:proofEqualityEncoding:IV:I} $\lambda(f^\B(\fa_1, \ldots, \fa_m)) = \lambda(f^\B(\fb_1, \ldots, \fb_m))$ and
				\item\label{enum:proofEqualityEncoding:IV:II} $\<\fa_1, \ldots, \fa_m\> \in P^\B$ if and only if $\<\fb_1, \ldots, \fb_m\> \in P^\B$
			\end{enumerate}
			for every $m$-ary function symbol $f$ and every $m$-ary predicate symbol $P$ occurring in $\varphi'$, respectively.
	\end{enumerate}
	All of these observations are consequences of the definition of the coloring $\lambda$ and our assumption $\B \models \psi_{\text{pred}} \wedge \psi_{\text{func}}$.
	
	\begin{description}
		\item\underline{Claim I:}
			Let $t$ be some term occurring in $\varphi'$, and let $\beta, \beta'$ be two variable assignments such that $\beta'(x) = \alpha_{\lambda(\beta(x))}$ holds for every variable $x$ occurring in $t$.
			We observe $\lambda\bigl( \A(\beta')(t)) \bigr) = \lambda\bigl( \B(\beta)(t)) \bigr)$.
			
		\item\underline{Proof:}
			The proof proceeds by induction on the structure of $t$.
			\begin{description}
				\item Base cases: 
					Suppose $t = c$ for some constant symbol $c$. 
					By definition of $\A$, it holds $c^\A = \alpha_{\lambda(c^\B)}$. 
					Observation~(\ref{enum:proofEqualityEncoding:I}) thus entails $\lambda\bigl( \A(\beta')(c) \bigr) = \lambda\bigl( c^\A \bigr) = \lambda\bigl( \alpha_{\lambda(c^\B)} \bigr) = \lambda\bigl( c^\B \bigr) = \lambda\bigl( \B(\beta)(c) \bigr)$.
					
					Suppose $t = x$ for some variable $x$. By assumption, we have $\beta'(x) = \alpha_{\lambda(\beta(x))}$.
					Observation~(\ref{enum:proofEqualityEncoding:I}) thus entails $\lambda\bigl( \A(\beta')(x) \bigr) = \lambda\bigl( \beta'(x) \bigr) = \lambda\bigl( \alpha_{\lambda(\beta(x))} \bigr) = \lambda\bigl( \beta(x) \bigr) = \lambda\bigl( \B(\beta)(x) \bigr)$.
					
				\item Inductive case:
					Let $t = f(s_1, \ldots, s_m)$ for some $m$-ary function symbol $f$ and arbitrary terms $s_1, \ldots, s_m$.
					By definition of $f^\A$ and Observation~(\ref{enum:proofEqualityEncoding:I}), we get
					\begin{align*}
						&\lambda\bigl( \A(\beta')(f(s_1, \ldots, s_m)) \bigr) \\
						&= \lambda\bigl( f^\A(\A(\beta')(s_1), \ldots, \A(\beta')(s_m)) \bigr) \\
						&= \lambda\bigl( \alpha_{\lambda(f^\B(\A(\beta')(s_1), \ldots, \A(\beta')(s_m)))} \bigr) \\
						&= \lambda\bigl( f^\B(\A(\beta')(s_1), \ldots, \A(\beta')(s_m)) \bigr) ~.
					\end{align*}	
					By induction, we conclude $\lambda\bigl( \A(\beta')(s_i) \bigr)  =  \lambda\bigl( \B(\beta)(s_i) \bigr)$ for $i = 1, \ldots, m$.
					Hence, Observation~(\ref{enum:proofEqualityEncoding:IV}\ref{enum:proofEqualityEncoding:IV:I}) leads to
					\begin{align*}
						&\lambda\bigl( f^\B(\A(\beta')(s_1), \ldots, \A(\beta')(s_m)) \bigr) \\
						&= \lambda\bigl( f^\B(\B(\beta)(s_1), \ldots, \B(\beta)(s_m)) \bigr) \\ 
						&= \lambda\bigl( \B(\beta)(f(s_1, \ldots, s_m)) \bigr) ~.
					\end{align*}
					Consequently, $\lambda\bigl( \A(\beta')(f(s_1, \ldots, s_m)) \bigr)$ equals $\lambda\bigl( \B(\beta)(f(s_1, \ldots, s_m)) \bigr)$.
					\strut\hfill$\Diamond$
			\end{description}
		
		\item\underline{Claim II:}
			Let $A$ be some atom occurring in $\varphi'$, and let $\beta, \beta'$ be two variable assignments such that $\beta'(x) = \alpha_{\lambda(\beta(x))}$ holds for every variable $x$ occurring in $A$.
			$\B,\beta \models A$ holds if and only if $\A,\beta'\models A$.
			
		\item\underline{Proof:}		
			Since $\varphi'$ does not contain equality, $A$ must be of the shape $P(t_1, \ldots, t_m)$ for some $m$-ary predicate symbol $P$ and terms $t_1, \ldots, t_m$.
			$\B,\beta \models P(t_1, \ldots, t_m)$ holds if and only if $\bigl\< \B(\beta)(t_1), \ldots, \B(\beta)(t_m) \bigr\> \in P^\B$.
			By Claim I, we know $\lambda\bigl( \B(\beta)(t_i) \bigr) = \lambda\bigl( \A(\beta')(t_i) \bigr)$ for $i = 1, \ldots, m$, and thus Observation~(\ref{enum:proofEqualityEncoding:IV}\ref{enum:proofEqualityEncoding:IV:II}) entails that $\bigl\< \B(\beta)(t_1), \ldots, \B(\beta)(t_m) \bigr\> \in P^\B$ holds true if and only if $\bigl\< \A(\beta')(t_1), \ldots, \A(\beta')(t_m) \bigr\> \in P^\B$ does.
			And since all the domain elements $\A(\beta')(t_i)$ belong to $\fU_\A$, it follows that $\bigl\< \A(\beta')(t_1), \ldots, \A(\beta')(t_m) \bigr\> \in P^\B$ is equivalent to $\bigl\< \A(\beta')(t_1), \ldots, \A(\beta')(t_m) \bigr\> \in P^\B \cap \fU_\A^m = P^\A$.
			Consequently, we have $\A,\beta' \models P(t_1, \ldots, t_m)$ if and only if $\B,\beta \models P(t_1, \ldots, t_m)$ holds.
			\strut\hfill$\Diamond$
		
		\item\underline{Claim III:}
			$\B \models \varphi'$ implies $\A \models \varphi'$.			
			
		\item\underline{Proof:}			
			The proof proceeds by induction on the structure of $\varphi'$.
			However, we need to slightly generalize the claim first:
				given any two variable assignments $\beta, \beta'$ for which $\beta'(x) = \alpha_{\lambda(\beta(x))}$ for every variable $x$ occurring freely in $\varphi'$, we observe that $\B,\beta \models \varphi'$ entails $\A,\beta' \models \varphi'$.
			
			The base case is already handled by Claim II.
			Of the inductive cases only the ones with quantifiers are of interest, the others are trivial.
			
			Suppose $\varphi'$ is of the form $\forall x.\psi$.
				In order to show $\A,\beta' \models \forall x.\psi$, we must show that $\A,\beta'[x\Mapsto \fa] \models \psi$ holds for every $\fa \in \fU_\A$.
				By Observation~\ref{enum:proofEqualityEncoding:II} and our assumption about $\beta$ and $\beta'$, we have $\bigl(\beta'[x\Mapsto \fa]\bigr)(v) = \alpha_{\lambda((\beta[x\Mapsto \fa])(v))}$ for every variable $v$ occurring freely in $\psi$, and thus the updated variable assignments $\beta[x\Mapsto \fa]$ and $\beta'[x\Mapsto \fa]$ fulfill the requirements of the inductive hypothesis.
				By induction, we derive $\A,\beta'[x\Mapsto \fa] \models \psi$ for any $\fa \in \fU_\A$ from $\B,\beta[x \Mapsto \fa] \models \psi$.
			
			Suppose $\varphi'$ is of the form $\exists y.\psi$.
				We need to find some $\fa \in \fU_\A$ for which $\A,\beta'[y\Mapsto \fa] \models \psi$ holds true.
				Since we assume $\B,\beta \models \exists y.\psi$, there is some $\fb \in \fU_\B$ such that $\B,\beta[y\Mapsto \fb] \models \psi$.
				We set $\fa := \alpha_{\lambda(\fb)}$.
				Hence, we have $\bigl(\beta'[y\Mapsto \fa]\bigr)(v) = \alpha_{\lambda((\beta[y\Mapsto \fb])(v))}$ for every variable $v$ occurring freely in $\psi$.			
				By induction, we obtain $\A,\beta'[y\Mapsto \fa] \models \psi$ from $\B,\beta[y \Mapsto \fb] \models \psi$.
			
			\medskip
			This finishes the proof of $\B,\beta \models \varphi'$ implying $\A,\beta' \models \varphi'$.
			Since $\varphi'$ is assumed to be a sentence and thus does not contain free occurrences of any variable, it immediately follows that $\B \models \varphi'$ entails $\A \models \varphi'$			
			\strut\hfill$\Diamond$			
	\end{description}
	
	Due to the fact that all domain elements in $\fU_\A$ are assigned a different fingerprint by $\lambda$, we may conclude for every $\beta$ that $\A,\beta \models \psi_{\approx}(x,y)$ if and only if $\A,\beta \models x \approx y$.
	Consequently, $\A \models \varphi'$ entails $\A \models \varphi$.
	
	\medskip
	An inspection of the described construction leads to the following upper bounds on the length of the resulting (sub)formulas:
		$\len\bigl(\psi_{\approx} (x,y)\bigr) \in \O(\kappa)$,
		$\len\bigl(\psi_{\approx}(x,y)\subst{x/s,y/t}\bigr)\in \O(\kappa\cdot \len(\varphi))$,
		$\len\bigl(\psi_{\text{pred}}\bigr)\in \O(\kappa + \len(\varphi)^3)$, and 
		$\len\bigl(\psi_{\text{func}}\bigr)\in \O(\kappa \cdot \len(\varphi)^3)$.
	All in all, this yields $\len(\varphi') \in \O(\kappa\cdot \len(\varphi)^3)$.
\end{proof}

The just proven proposition means that some sentences, which are \emph{almost} SF sentences, can be transformed into equisatisfiable SF sentences, if they exhibit the small model property.
More precisely, being \emph{almost} SF in this context means they do not contain non-constant function symbols and fulfill the separation condition only for non-equational atoms.

In fact, we can even allow certain occurrences of non-constant function symbols. For instance, the setting of unary function symbols described in Proposition~\ref{proposition:UnaryFunctions} is possible, i.e.\ equations of the form $f(g(f(x))) \approx g(h(y))$ with $x$ being universally quantified and $y$ existentially, may be allowed, as long as the sentence at hand has a finite model, for which we can computer an upper bound on its size. In this case, first applying the transformation given in the proof of Proposition~\ref{proposition:MonadicEqualityEncoding} and subsequently the construction from the proof of Proposition~\ref{proposition:UnaryFunctions} finally leads to an SF sentence. Combinations with other translation methods are also conceivable.

\begin{remark}
	Other methods for the elimination of the distinguished equality predicate have been proposed. 
	For instance, \citet{Dreben1979} (Chapter 8, Sections 1 and 2) describe \emph{negative identity reduction}, which introduces a binary predicate $I$ to capture equality. 
	However, this approach can obviously not resolve non-separated equations into separated atoms, and therefore does not exactly meet our particular needs. 
	On the other hand, this method is not restricted to sentences possessing the small model property.
	
	For an automated-reasoning perspective on the elimination of equality, consult \cite{Baumgartner2009, Bachmair1998, Brand1975, Letz2002}, for instance.
\end{remark}







\section{Separation and Automated Reasoning}\label{section:AutomatedReasoning}

The size of the search space of a first-order automated reasoning procedure is related to the size of the relevant subset of the
Herbrand base that is actually explored by the procedure. 
Automated reasoning on first-order formulas with an explicit or implicit finite Herbrand base 
has attracted a lot of attention in recent 
years~\cite{InstGen03,ModelEvolutionLemma06,Piskac2010,InfinoxRef11,HillenbrandWeidenbach13,Korovin13,SGGSExposition,Alagi2015}.
In particular, all these procedures are decision procedures for the BS(R) Fragment.
But also a clause set with an explicit finite domain axiom
or a clause set enjoying an explicit or implicit acyclic atom structure
generates only a finite relevant subset of the
overall Herbrand base. If the relevant subset of the Herbrand base is infinite, 
implying the presence of non-constant function
symbols, then the search space of an automated reasoning procedure becomes infinite and it does not terminate anymore, in general.
Even for a finite relevant subset generated over a finite number of constants the actual size of 
the set has an important influence on the explored search space.

In this section we apply our results to the benefit of the above mentioned automated reasoning procedures.
We show the techniques for one quantifier block alternation, but they can be generalized to several blocks
analogously to the results in Section~\ref{section:SeparatedFragment}. 
Note that given a set of sentences where not all sentences are separated, our results are
still applicable to the separated sentences.
Consider the separated sentence $\forall\vx \exists\vy. \psi(\vx, \vy)$. 
This formula is satisfiable if and only if it is satisfiable over a 
domain of $m_*|\vy| + |\consts(\psi)|$ (see Lemma~\ref{lemma:SkolemizationOneQuantifierBlock})
elements\newline
\centerline{$\forall\vx \exists\vy. \psi(\vx, \vy) \wedge \bigvee^{m_*}_{\ell=1} \bigwedge_{i=1}^{|\vy|} y_i \approx c_{\ell,i}$.}
Skolemizing the $\vy$ and explicitly representing the substitution in $\psi$ by equations yields\newline
\centerline{$\forall\vx\vy. \Bigl( \bigl(\bigwedge_{i=1}^{|\vy|} f_i(\vx) \approx y_i\bigr) \rightarrow \psi(\vx, \vy) \Bigr) \wedge \bigvee^{m_*}_{\ell=1}\bigwedge_{i=1}^{|\vy|} f_i(\vx) \approx c_{\ell,i}$}
that is a first-order sentence with an explicit finite domain axiom, so the relevant Herbrand base is finite.
Still, reasoning with equality on the $f_i$
is not needed here and we can further simplify the formula to the equisatisfiable sentence\newline
\centerline{$\forall\vx\vy. \Bigl( \bigl(\bigwedge_{i=1}^{|\vy|} R_i(\vx, y_i)\bigr) \rightarrow \psi(\vx, \vy) \Bigr) \wedge \bigvee^{m_*}_{\ell=1}\bigwedge_{i=1}^{|\vy|} R_i(\vx, c_{\ell,i})$}
where we replaced the $f_i$ by relations $R_i$ without the need to add further axioms. The transformation preservers
satisfiability. Totality of the $f_i$ after translation to $R_i$ is guaranteed by the finite
domain axiom $\bigvee^{m_*}_{\ell=1}\bigwedge_{i=1}^{|\vy|} f_i(\vx) \approx c_{\ell,i}$. If some $R_i$ interpretation
contains more than one value for an $\vx$ assignment, all values except one can simply be dropped, preserving
satisfiability. This is a consequence of the fact that all positive $f_i$ equational occurrences 
($R_i$ occurrences) are exactly in the finite domain axiom.
Eventually, if $\psi(\vx, \vy)$
does not contain any non-constant function symbols,
we moved a separated sentence that
would have resulted after CNF generation in an infinite Herbrand base to a clause set of the BS(R) Fragment.

The size of the finite domain axiom is worst-case exponential in the number $m_{\text{CNF}}$
of clauses generated out of $\forall\vx \exists\vy.\psi(\vx, \vy)$ by a 
CNF procedure without renaming~\cite{NonnengartWeidenbach01handbook}. Thus, if the number
of clauses can be reduced, it improves the size of the finite domain axiom. Redundant clauses do not contribute
to $m_{\text{CNF}}$. For example, if two clauses subsume each other, only one needs to be considered for $m_{\text{CNF}}$.
Note that $m_{\text{CNF}}$ without considering redundancy can be computed without actually generating the
CNF~\cite{NonnengartWeidenbach01handbook}. Renaming, i.e.~the replacement of subformulas via fresh predicates, 
cannot be applied for determining $m_{\text{CNF}}$, 
because it may generate atoms in which  variables $\vx$ and $\vy$ are no longer separated.
An analogous argument holds for the transformation into DNF.

The only formula whose length may grow exponentially in the length of $\psi(\vx, \vy)$
in the above transformation is the finite domain axiom. Instead of adding this axiom to the formula 
it could as well be built into a decision procedure for the BS(R) Fragment. Although this is subject to
future work, one line of extension for resolution refutation building procedures, for example,  is to start with a 
finite domain axiom for a small number of constants. In case of a refutation, extend the refutation derivation and the involved
finite domain axiom by literals with further constants, until the overall limit $m_*|\vy| + |\consts(\psi)|$ 
is reached or a model is found. In case the finite domain axiom is a priori small, an explicit instantiation of $\psi(\vx, \vy)$
can support the efficiency of an automated reasoning procedure. Explicit instantiation is the typical method used by SMT solvers
when confronted with quantification~\cite{GeMoura09}.
In general, an exponential number of domain elements can be necessary for finding a model of
a separated sentence. So the exponential growth cannot be escaped.

\begin{lemma}\label{lemma:RelaxEqualityInFiniteDomainAxiom}
	The following sentences are equisatisfiable.
	(We assume that the $f_i$ are fresh Skolem functions of appropriate arity, the $R_i$ are fresh predicate symbols, and the constant symbols $c_{\ell, i}$ do not occur in $\psi$.)
	\begin{enumerate}[label=(\arabic{*}), ref=(\arabic{*})]
		\item\label{enum:EquisatSingleBlock:II} $\forall\vx \exists\vy. \psi(\vx, \vy) \wedge \bigvee^{m_*}_{\ell=1} \bigwedge_{i=1}^{|\vy|} y_i \approx c_{\ell,i}$,
		\item\label{enum:EquisatSingleBlock:III} $\forall\vx. \psi(\vx, \vy)\subst{y_1/f_1(\vx), \ldots, y_{|\vy|}/f_{|\vy|}(\vx)} \wedge \bigvee^{m_*}_{\ell=1}  \bigwedge_{i=1}^{|\vy|} f_i(\vx) \approx c_{\ell,i}$,
		\item\label{enum:EquisatSingleBlock:IV} $\forall\vx \vy. \Bigl( \bigl(\bigwedge_{i=1}^{|\vy|} f_i(\vx) \approx y_i\bigr) \rightarrow \psi(\vx, \vy) \Bigr) \wedge \bigvee^{m_*}_{\ell=1}\bigwedge_{i=1}^{|\vy|} f_i(\vx) \approx c_{\ell,i}$,
		\item\label{enum:EquisatSingleBlock:V} $\forall\vx \vy. \Bigl( \bigl(\bigwedge_{i=1}^{|\vy|} R_i(\vx, y_i)\bigr) \rightarrow \psi(\vx, \vy) \Bigr) \wedge \bigvee^{m_*}_{\ell=1}\bigwedge_{i=1}^{|\vy|} R_i(\vx, c_{\ell,i})$.
	\end{enumerate}
\end{lemma}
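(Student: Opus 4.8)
The plan is to prove the four equisatisfiabilities by chaining equivalences along the list, since consecutive sentences differ by a single syntactic manipulation. The equivalence of the first two sentences, \ref{enum:EquisatSingleBlock:II} and \ref{enum:EquisatSingleBlock:III}, is already supplied by Lemma~\ref{lemma:SkolemizationOneQuantifierBlock}: they coincide verbatim with items (2) and (3) of that lemma (the range-restricted form and its inner-Skolemized version). It therefore remains to establish \ref{enum:EquisatSingleBlock:III}~$\Leftrightarrow$~\ref{enum:EquisatSingleBlock:IV} and \ref{enum:EquisatSingleBlock:IV}~$\Leftrightarrow$~\ref{enum:EquisatSingleBlock:V}.

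For \ref{enum:EquisatSingleBlock:III}~$\Leftrightarrow$~\ref{enum:EquisatSingleBlock:IV} I would prove the stronger statement of semantic equivalence. Both sentences range over the same signature---both mention the Skolem functions $f_i$---and both carry the identical finite-domain conjunct $\bigvee_\ell \bigwedge_i f_i(\vx) \approx c_{\ell,i}$, so it suffices to compare their remaining conjuncts. In any structure $\A$, for any tuple $\vb$ assigned to $\vx$, the guard $\bigwedge_i f_i(\vx) \approx y_i$ is satisfied by exactly one tuple for $\vy$, namely $\<f_1^\A(\vb), \ldots, f_{|\vy|}^\A(\vb)\>$, because $\approx$ denotes identity. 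Universally quantifying the implication over $\vy$ thus asserts $\psi(\vx,\vy)$ at precisely that tuple, which is exactly what $\psi(\vx,\vy)\subst{y_1/f_1(\vx), \ldots, y_{|\vy|}/f_{|\vy|}(\vx)}$ expresses. Hence the two sentences agree in every structure.

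The core of the argument is \ref{enum:EquisatSingleBlock:IV}~$\Leftrightarrow$~\ref{enum:EquisatSingleBlock:V}, where the functions $f_i$ are traded for relations $R_i$ without introducing any functionality or totality axioms. In the forward direction, given a model $\A$ of \ref{enum:EquisatSingleBlock:IV}, I would interpret each $R_i$ as the graph of $f_i^\A$, setting $R_i^\A := \{\<\vb, \fa\> \mid f_i^\A(\vb) = \fa\}$, and forget the $f_i$; then $\bigwedge_i R_i(\vx, y_i)$ holds exactly when $\vy = f^\A(\vx)$, so both the implication and the finite-domain conjunct transfer unchanged. The converse is the delicate step, since a model $\B$ of \ref{enum:EquisatSingleBlock:V} need interpret the $R_i$ neither totally nor single-valuedly, and I must manufacture genuine functions. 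Here the shape of the finite-domain axiom is decisive: since its disjunction over $\ell$ sits outside the conjunction over $i$, for each $\vb$ it provides a \emph{single} index $\ell = \ell(\vb)$ with $R_i(\vb, c_{\ell,i}^\B)$ holding for \emph{all} $i$ at once. Defining $f_i^\B(\vb) := c_{\ell(\vb),i}^\B$ then yields total functions satisfying the finite-domain conjunct by construction, and because $\bigwedge_i R_i(\vb, f_i^\B(\vb))$ holds, instantiating the implication of \ref{enum:EquisatSingleBlock:V} at $\vy := f^\B(\vb)$ yields $\psi(\vb, f^\B(\vb))$, which is the remaining conjunct of \ref{enum:EquisatSingleBlock:IV}.

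I expect the main obstacle to be precisely this converse direction. The naive reflex is to add functionality axioms forcing each $R_i$ to be single-valued, but these are unnecessary: the implication guard only ever fires at the chosen witnesses, so superfluous $R_i$-values are simply never selected. The point to make rigorous is that the finite-domain axiom supplies a \emph{uniform} witness $\ell(\vb)$ across all coordinates $i$ simultaneously, which at once secures totality of the $f_i$ and their compatibility with the $R_i$. Everything else reduces to the routine check that the predicate, function, and constant symbols of $\psi$ are left untouched by both translations, so that satisfaction of $\psi$ is preserved in each step.
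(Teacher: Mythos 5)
Your proposal is correct and follows essentially the same route as the paper's proof: (2)$\to$(3) by Skolemization, (3)$\Leftrightarrow$(4) by noting the guard picks out exactly the tuple of Skolem values, and (4)$\Leftrightarrow$(5) by taking function graphs in one direction and, in the other, exploiting that the finite-domain axiom yields a \emph{uniform} index $\ell(\vb)$ across all coordinates to define total functions $f_i^\B(\vb) := c_{\ell(\vb),i}^\B$. The only cosmetic difference is that the paper additionally redefines each $R_i$ as the graph of the newly built $f_i$ so that the constructed structure models both sentences at once, which is not needed for mere equisatisfiability.
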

\begin{proof}	

\ref{enum:EquisatSingleBlock:III} is the Skolemization of \ref{enum:EquisatSingleBlock:II}. Equisatisfiability of the sentences in \ref{enum:EquisatSingleBlock:III} and \ref{enum:EquisatSingleBlock:IV} is obvious---they are even equivalent. It remains to prove equisatisfiability of the sentences in \ref{enum:EquisatSingleBlock:IV} and \ref{enum:EquisatSingleBlock:V}. Let us denote the former by $\varphi_f$ and the latter by $\varphi_R$. 

It is easy to convert any model $\A$ of $\varphi_f$ into a model $\A'$ of $\varphi_R$: simply take the function graphs of the $f_{i}^\A$ as the interpretation of the $R_{i}$ under $\A'$.

Conversely, let $\B'$ be a model of $\varphi_R$. 
We define a structure $\B$ exactly like $\B'$, except for the interpretations of the $f_i$ and $R_i$.
For every $\va \in \fU_\B^{|\vx|}$ we pick one index $\ell \in [m_*]$ such that for every $i = 1, \ldots, |\vy|$ it holds $\<\va, c_{\ell,i}^{\B'}\> \in R_i^{\B'}$, and set $f_{i}^{\B}(\va) := c_{\ell,i}^{\B'}$ (which equals $c_{\ell,i}^{\B}$). Such an index $\ell$ must exists because of $\B' \models \forall \vx. \bigvee^{m_*}_{\ell=1}\bigwedge_{i=1}^{|\vy|} R_i(\vx, c_{\ell,i})$.
To finish the construction of $\B$, we define $R_{i}^\B$ to be the function graph of the newly defined function $f_{i}^\B$ for every index $i$. 
It is easy to check that $\B$ is a model of both $\varphi_R$ and $\varphi_f$.
\end{proof}







\section{Related and Future Work} \label{section:furthrelwork}

\citet{Dreben1979} (page 65) extend the Relational Monadic Fragment to a certain extent in the direction of the BS Fragment 
and call the result \emph{Initially-extended Essentially Monadic Class}. In essence, they allow constants and 
discard the restriction to unary predicate symbols. However, they require that every atom contains at most 
one variable (possibly with multiple occurrences). Consequently, their fragment does not fully include BS.

A broad overview over decidable standard fragments of first-order logic is given in \cite{Borger1997}.
More recent decidability results are often formulated as syntactic restrictions on clause sets~\cite{Weidenbach1999, Fermuller2001, Korovin2013} 
that are incomparable to or subsumed by SF.

There is recent work \cite{Fontaine2009, Wies2009, Charatonik2010, Policriti2012, Bresolin2014} 
which considers the BS(R) Fragment in settings beyond pure first-order logic.
It might be of interest to investigate all those combinations and extensions based on SF instead of the BS(R) Fragment.
Furthermore, SF might also be meaningful to first-order theories over fixed structures such as arithmetic.

Our results on the complexity of satisfiability of SF sentences left some gaps open, which remain to be closed in the future. 
Moreover, the syntactic restrictions on SF sentences may be weakened and still lead to a decidable fragment.







\subsection*{Acknowledgments}
We thank Reinhard P\"oschel for pointing us to Sperner's Theorem, and we thank the anonymous reviewers, who helped in improving the paper by providing valuable hints and suggestions.

Our research was supported by the German Transregional Collaborative Research Center SFB/TR 14 AVACS and the DFG/ANR Project STU 483/2-1 SMArT.

\bibliographystyle{abbrvnat}

\end{document}